\newcommand{\noun}[1]{\textsc{#1}}
\providecommand{\tabularnewline}{\\}
\DeclareRobustCommand{\mklyxadded}[1]{\bgroup\color{lyxadded}{}#1\egroup}
\DeclareRobustCommand{\mklyxdeleted}[1]{\bgroup\color{lyxdeleted}\mklyxsout{#1}\egroup}
\DeclareRobustCommand{\mklyxsout}[1]{\ifx\\#1\else\sout{#1}\fi}
\numberwithin{equation}{section}
\numberwithin{figure}{section}
\numberwithin{table}{section}
\theoremstyle{plain}
\newtheorem{thm}{\protect\theoremname}[section]
\theoremstyle{definition}
\newtheorem{defn}[thm]{\protect\definitionname}
\theoremstyle{plain}
\newtheorem{question}[thm]{\protect\questionname}
\theoremstyle{plain}
\newtheorem*{thm*}{\protect\theoremname}
\theoremstyle{remark}
\newtheorem*{acknowledgement*}{\protect\acknowledgementname}
\theoremstyle{plain}
\newtheorem{prop}[thm]{\protect\propositionname}
\theoremstyle{plain}
\newtheorem{lem}[thm]{\protect\lemmaname}
\theoremstyle{plain}
\newtheorem{cor}[thm]{\protect\corollaryname}
\theoremstyle{remark}
\newtheorem{rem}[thm]{\protect\remarkname}
\newcommand{\rleftarrows}{\mathrel{\raise.75ex\hbox{\oalign{%
  $\hfil\scriptstyle\relbar$\cr
  \vrule width0pt height.5ex$\scriptstyle\smash\leftarrow$\cr}}}}
\newcommand{\rightlarrows}{\mathrel{\raise.75ex\hbox{\oalign{%
  $\scriptstyle\rightarrow$\hfil\cr
  $\scriptstyle\vrule width0pt height.5ex\relbar$\cr}}}}
\newcommand{\Rrelbar}{\mathrel{\raise.75ex\hbox{\oalign{%
  $\scriptstyle\relbar$\cr
  \vrule width0pt height.5ex$\scriptstyle\relbar$}}}}
\newcommand{\longrightleftarrows}{\rleftarrows\joinrel\Rrelbar\joinrel\rightlarrows}
\def\rightleftarrowsfill@{\arrowfill@\rleftarrows\Rrelbar\rightlarrows}
\newcommand{\xrightleftarrows}[2][]{\ext@arrow 3399\rightleftarrowsfill@{#1}{#2}}
\setlist[enumerate]{itemsep=-0.5\parsep, topsep=-0.5\parsep}
\DeclareMathOperator{\dist}{dist}
\DeclareMathOperator{\diag}{diag}
\DeclareMathOperator{\Stab}{Stab}
\DeclareMathOperator{\Spec}{Spec}
\DeclareMathOperator{\ord}{ord}
\DeclareMathOperator{\rad}{radius}
\renewcommand{\L}{\boldsymbol{L}}
\newcommand{\one}{\mathbbm{1}}
\providecommand{\acknowledgementname}{Acknowledgement}
\providecommand{\corollaryname}{Corollary}
\providecommand{\definitionname}{Definition}
\providecommand{\lemmaname}{Lemma}
\providecommand{\propositionname}{Proposition}
\providecommand{\questionname}{Question}
\providecommand{\remarkname}{Remark}
\providecommand{\theoremname}{Theorem}
\begin{document}
\title{Arithmeticity, thinness and efficiency\\
of qutrit Clifford+T gates}
\author{Shai Evra and Ori Parzanchevski}
\maketitle
\begin{abstract}
The Clifford+T gate set is a topological generating set for $PU(2)$,
which has been well-studied from the perspective of quantum computation
on a single qubit. The discovery that it generates a full S-arithmetic
subgroup of $PU(2)$ has led to a fruitful interaction between quantum
computation and number theory, resulting in a proof that words in
these gates cover $PU(2)$ in an almost-optimal manner.

In this paper we study the analogue gate set for $PU(3)$. We show
that in $PU(3)$ the group generated by the Clifford+T gates is not
arithmetic -- in fact, it is a \emph{thin }matrix group, namely a
Zariski-dense group of infinite index in its ambient S-arithmetic
group. 

On the other hand, we study a recently proposed extension of the Clifford+T
gates, called Clifford+D, and show that these do generate a full S-arithmetic
subgroup of $PU(3)$, and satisfy a slightly weaker almost-optimal
covering property than that of Clifford+T in $PU(2)$. The proofs
are different from those for $PU(2)$: while both gate sets act naturally
on a (Bruhat-Tits) tree, in $PU(2)$ the generated group acts transitively
on the vertices of the tree, and this is a main ingredient in proving
both arithmeticity and efficiency. In the $PU(3)$ Clifford+D case
the action on the tree is far from being transitive. This makes the
proof of arithmeticity considerably harder, and the study of efficiency
by automorphic representation theory becomes more involved, and results
in a covering rate which differs from the optimal one by a factor
of $\log_{3}(105)\approx4.236$.
\end{abstract}

\section{Introduction}

In quantum computation one is interested in approximating arbitrary
elements of the projective unitary group $PU(d)$ using ``circuits''
built from a fixed finite set of basic ``gates'' $\Sigma\subset PU(d)$.
To achieve this with arbitrary precision, the generated group $\left\langle \Sigma\right\rangle $
must be dense in $PU(d)$, and such sets are called universal. One
can further ask about efficiency, namely the rate at which words of
growing lengths in $\Sigma$ cover $PU(d)$ up to a desired error
term. 

A popular choice of gates in $PU(2)$ is the Clifford+T (or C+T, for
short) gate set: 
\[
C=\left\langle \left(\begin{matrix}1\\
 & i
\end{matrix}\right),\frac{1}{\sqrt{2}}\left(\begin{matrix}1 & 1\\
1 & -1
\end{matrix}\right)\right\rangle \qquad\text{and }\qquad T=\left(\begin{matrix}1\\
 & \zeta_{8}
\end{matrix}\right),
\]
where $\zeta_{n}$ denotes a primitive $n$-th root of unity. These
were shown to be universal in \cite{Boykin1999universalfaulttolerant},
but in \cite{kliuchnikov2013fast} it was furthermore shown that they
generate an S-arithmetic group, comprising all unitary matrices with
entries in $\mathbb{Z}[\zeta_{8},\frac{1}{2}]$. This allowed \cite{sarnak2015letter,Parzanchevski2018SuperGoldenGates}
to bring in deep number theoretic machinery (which goes back to \cite{lubotzky1986hecke,lubotzky1987hecke}),
and prove that the covering rate of $PU(2)$ by these gates is almost
optimal, in a precise sense (see Definition \ref{def:aoac}). 

The C+T gates were later generalized to higher $PU(d)$, whenever
$d$ is a prime \cite{Howard2012Quditversionsqubit}, but little is
known on the arithmetic properties of these generalizations. In this
paper we resolve the case of $d=3$, for the generalized C+T gates,
as well as an extension of these, called C+D, which was suggested
in \cite{Kalra2023SynthesisAirthmeticSingle} using the notion of
higher Clifford hierarchy \cite{Cui2017DiagonalgatesClifford}. Both
of these gate sets turn out to be highly interesting from the mathematical
perspective, and present several new phenomena which we discuss below.
\begin{defn}
\label{def:C+T-and-C+D}The C+T and C+D gate sets in $PU(3)$ are
\[
C\!+\!T=\left\{ H,S,T\right\} ,\qquad\mbox{and}\qquad C\!+\!D=\left\{ H\right\} \cup\mathcal{D},
\]
where 
\begin{align*}
H= & \frac{1}{\sqrt{-3}}\left(\begin{array}{ccc}
1 & 1 & 1\\
1 & \zeta_{3} & \overline{\zeta_{3}}\\
1 & \overline{\zeta_{3}} & \zeta_{3}
\end{array}\right), & S= & \left(\begin{matrix}1\\
 & \zeta_{3}\\
 &  & 1
\end{matrix}\right),\\
T= & \left(\begin{matrix}\zeta_{9}\\
 & 1\\
 &  & \zeta_{9}^{-1}
\end{matrix}\right), & \mathcal{D}= & \left\{ \left(\begin{array}{ccc}
\pm\zeta_{9}^{a}\\
 & \pm\zeta_{9}^{b}\\
 &  & \pm\zeta_{9}^{b}
\end{array}\right)\,\middle|\,a,b,c\in\mathbb{Z}/9\mathbb{Z}\right\} .
\end{align*}
We denote by $\Gamma$ the group of all matrices in $U(3)$ with entries
in $\mathbb{Z}\left[\zeta_{9},\tfrac{1}{3}\right]$, and note that
$H,S,T\in\Gamma$ and $\mathcal{D}\leq\Gamma$. It was shown in \cite{Glaudell2019Canonicalformssingle}
that the C+T gates do not generate $\Gamma$, which prompted \cite{Kalra2023SynthesisAirthmeticSingle}
to suggest the C+D extension and raise the following questions:
\end{defn}

\begin{question}[\cite{Kalra2023SynthesisAirthmeticSingle}]
\label{que:main}
\begin{enumerate}
\item Does the C+D gate set generate $\Gamma$?
\item If so, does there exist an efficient algorithm to write a matrix in
$\Gamma$ using C+D?
\end{enumerate}
\end{question}

In Section \ref{sec:Synthesis-and-arithmeticity} of this paper we
answer both parts of Question \ref{que:main} in the affirmative:
\begin{thm*}[\ref{thm:main}]
The answer to both parts of Question \ref{que:main} is yes.
\end{thm*}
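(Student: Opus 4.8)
The goal is twofold: (1) show that $\langle C\!+\!D\rangle = \Gamma$, and (2) extract from the proof an efficient synthesis algorithm. My plan is to prove both at once by an explicit reduction/normal-form argument, exploiting the action of $\Gamma$ on the Bruhat--Tits tree $\mathcal{T}$ of $PGL_3(\mathbb{Q}_3)$ (a building, strictly speaking, but the relevant combinatorics will reduce to a tree-like quantity, namely a $3$-adic valuation of the matrix entries). First I would attach to each $M\in\Gamma$ a nonnegative integer ``denominator exponent'' $\mu(M)$, the least $k$ such that $3^{k/2}M$ (or $3^k M$, depending on parity bookkeeping with $\sqrt{-3}$) has entries in $\mathbb{Z}[\zeta_9]$; equivalently, the distance in $\mathcal{T}$ from the standard lattice to $M\cdot(\text{standard lattice})$. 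The gate $H$ is the only generator that changes $\mu$, and it changes it by a bounded amount; the diagonal gates in $\mathcal{D}$ and the Clifford group preserve $\mu$. So the synthesis problem becomes: given $M$ with $\mu(M)>0$, find a word $w$ in $C\!+\!D$ with $\mu(wM)<\mu(M)$, and iterate until $\mu=0$, at which point $M$ lies in the ``spherical'' part and one finishes off using that $\langle \mathcal{D}, \text{Clifford}\rangle$ already generates all of $U_3(\mathbb{Z}[\zeta_9])$ (this last step is a finite check, or follows from known results on the Clifford group in dimension $3$).

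The heart of the matter is the descent step. Here I would work locally at the prime $3$: write $M$ over $\mathbb{Z}_3[\zeta_9]$ and look at its reduction modulo the maximal ideal, or rather at the first nonzero ``layer'' of its $3$-adic expansion. The unitarity relation $MM^\ast = I$ forces strong congruence conditions on the entries realizing the maximal denominator exponent --- essentially, the ``leading part'' of a maximally-divided row must be isotropic for the relevant Hermitian form over the residue field $\mathbb{F}_3$ (or a small extension). One then checks that, after multiplying on the left by a suitable Clifford element (to move the offending structure into a standard position) and a suitable element of $\mathcal{D}$ (to kill the leading $3$-adic digits, using that $\mathcal{D}$ ranges over all $\pm\zeta_9^a$ on the diagonal), the gate $H$ lowers $\mu$. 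Concretely, $H$ is (up to normalization) a $3\times 3$ discrete Fourier transform over $\mathbb{Z}/3$; it exchanges a rank-one congruence condition in the ``position'' basis for one in the ``momentum'' basis, and the combination $H \cdot \mathcal{D} \cdot H$ can be used to clear one unit of denominator, exactly as the Euclidean-type algorithm for C+T in $PU(2)$ uses $H\,\mathrm{diag}(1,\zeta)\,H$. I would phrase this as an explicit case analysis over the possible leading digit patterns (finitely many, up to Clifford symmetry), each resolved by a short fixed word.

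The main obstacle --- and the reason the argument is genuinely harder than the $PU(2)$ case flagged in the abstract --- is that $\langle C\!+\!D\rangle$ does \emph{not} act transitively on the vertices of $\mathcal{T}$ at the relevant level, so $\mu$ alone is not a fine enough potential function: after one descent step the matrix may land in a ``wrong'' $\langle C\!+\!D\rangle$-orbit of lattices from which no single gate decreases $\mu$. To handle this I expect to need a secondary invariant --- the type/color of the vertex in the building (an element of $\mathbb{Z}/3$), together with a finer congruence datum of the leading layer --- and to show that the finitely many orbit-types are permuted in a controlled way by the gates, so that a bounded ``detour'' word always brings us back to a configuration where genuine descent is possible. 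Establishing that this detour exists in every case is the one step I would not expect to be routine; it is essentially a connectivity statement for the orbit graph of $\langle C\!+\!D\rangle$ on the link of a vertex, and I would prove it by direct computation in $PU_3$ over $\mathbb{F}_3$ (or $\mathbb{F}_9$), enumerating Clifford orbits of isotropic flags. Once descent-with-detours is established, termination gives $\langle C\!+\!D\rangle=\Gamma$, and reading off the bounded words gives a polynomial-time (in $\mu(M)$, hence in the bit-size of $M$) synthesis algorithm, answering Question \ref{que:main}(2); tracking the number of gates per unit of $\mu$ also yields the word-length estimates needed later for the efficiency/covering results.
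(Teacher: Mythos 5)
Your overall architecture (a denominator-exponent/tree-distance potential $\mu$, a descent step using words in C+D, and a separate finite treatment of the level-zero stabilizer) matches the paper's proof in outline, and your identification of $\mu$ with distance in the Bruhat--Tits tree is exactly the paper's level map $\ell$ (Proposition \ref{prop:level_dist}). However, the entire content of the theorem lives in the descent step, and that step is not established in your proposal; moreover, the mechanism you propose for it points in a direction that would fail.

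Concretely: your plan is a digit-by-digit Euclidean reduction, clearing one unit of denominator per application of $H\cdot\mathcal{D}\cdot H$, as in the $PU(2)$ C+T algorithm, with a case analysis of leading $3$-adic layers and ``detour words'' to fix up wrong orbit types. But Proposition \ref{prop:l>5} of the paper shows (by a nontrivial computation with the unit group of $\mathbb{Z}[\zeta_9]$ and the three real embeddings) that every $\gamma\in\Gamma$ has $\ell(\gamma)=0$ or $\ell(\gamma)\geq 6$; equivalently the $\Gamma$-orbit of $v_0$ is $6$-separated in the tree. So there is no sequence of small decrements of $\mu$ available inside $\Gamma$: a single application of $cH$ ($c\in\mathcal{D}$) moves you a full $6$ steps in the tree, and the question is whether the landing point is \emph{closer} to $v_0$ at all. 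The paper's resolution is not a local digit analysis nor a connectivity statement on the link of a vertex, but a global counting argument on the $6$-sphere: $S_6(v_0)$ decomposes into $108$ ``clans'' of mutually $\leq 4$-distant vertices, the $\mathcal{D}$-orbit of $Hv_0$ has exactly $108$ elements (an orbit-stabilizer computation), and $6$-separation forces these $108$ points to hit each clan exactly once. Combined with the triangle-inequality coincidence $4+(n-6)=n-2$ (equation (\ref{eq:reduct}) in the paper), this shows that from any orbit point at distance $n\geq 6$ one can always reach, by a single $cH$, an orbit point at distance $\leq n-2$, giving both generation and the algorithm at once. Your proposal flags the non-transitivity as the hard point --- correctly --- but the secondary invariant you suggest (vertex type/color and orbit-types in the link) does not capture the actual obstruction, which is the sparseness of the orbit, and the ``detour'' existence claim is precisely the unproven heart of the matter. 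One smaller gap: you assert that the level-zero part $U_3(\mathbb{Z}[\zeta_9])$ is generated by $\mathcal{D}$ and known Clifford elements as ``a finite check''; the paper must first \emph{identify} this group (it is the monomial matrices over $\langle-\zeta_{18}\rangle$, Lemma \ref{lem:C}(1), again via a norm argument) before exhibiting generators, and this identification is also an input to the $6$-separation result.
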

Similar results for other groups were achieved in the past: the C+T
case for $PU(2)$ in \cite{kliuchnikov2013fast}, and others in \cite{lubotzky1986hecke,sarnak2015letter,kliuchnikov2015framework,Forest2015Exactsynthesissingle,Parzanchevski2018SuperGoldenGates,Evra2018RamanujancomplexesGolden,Evra2022Ramanujanbigraphs}.
All of these cases share a very special feature: the group generated
by the gates acts transitively on the vertices of a Bruhat-Tits tree
or building (or on the set of all vertices of a fixed color), and
the gate set takes some fixed vertex to all of its closest neighbors.
In this paper this turns out to be far from the case! In fact, Corollary
\ref{cor:dist>=00003D6} shows that $\Gamma$ acts on its Bruhat-Tits
tree with a rather sparse orbit, so that a different approach is needed.
What we are able to show in Section \ref{sec:Synthesis-and-arithmeticity},
using a mixture of number theory, combinatorics, and some good fortune
(e.g.\ in (\ref{eq:reduct})) is that the $\Gamma$-orbit of a specific
vertex in the tree can also be covered by words in C+D (Theorem \ref{thm:orbits}),
and this suffices for the purpose of proving Theorem \ref{thm:main}.
\vspace{.5\baselineskip}

Part 1 of Question \ref{que:main} is a special case of the following
type of problems: Does a given finite set of elements in an arithmetic
or S-arithmetic group generate a finite-index subgroup? To quote \cite{sarnak2014notes},
such problems can be formidable, and generically the answer to them
is no, though specific cases may be hard to prove (see the book \cite{breuillard2014thin},
and especially the chapters by Fuchs and Sarnak). A subgroup $\Delta$
of an S-arithmetic matrix group $\Gamma$ is called \emph{thin }if
it is of infinite index in $\Gamma$, and at the same time Zariski
dense. The underlying number field plays a role in the interpretation
of Zariski denseness: in our case, since $\mathbb{Q}\left(\zeta_{9}\right)$
has three non-conjugate embeddings in $\mathbb{C}$, the group $\Gamma$
naturally embeds in $PU(3)^{3}=PU(3)\times PU(3)\times PU(3)$, using
all three together. Considering $\Gamma$ as a $\mathbb{Q}\left(\cos(\tfrac{2\pi}{9})\right)$-arithmetic
group, Zariski denseness is only equivalent to being dense in each
component separately. However, considering $\Gamma$ as a $\mathbb{Q}$-arithmetic
group\footnote{In algebro-geometric terms, this is the Weil restriction $\mathrm{Res}_{\mathbb{Q}(\cos(2\pi/9))/\mathbb{Q}}PU_{3}$.},
Zariski denseness is equivalent to being dense in $PU(3)^{3}$. Our
second main theorem is that the C+T gate set generates a thin matrix
group, in the strong sense:
\begin{thm*}[\ref{thm:C+T-thin}]
The C+T gates generate a thin matrix group in $\Gamma\leq PU(3)^{3}$.
\end{thm*}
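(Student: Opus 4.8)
The plan is to verify separately the two defining properties of a thin subgroup for $\Delta:=\langle H,S,T\rangle$: that it is Zariski dense in $\mathbf{G}:=\mathrm{Res}_{\mathbb{Q}(\cos(2\pi/9))/\mathbb{Q}}PU_{3}$ --- equivalently, topologically dense in $PU(3)^{3}$ --- and that it has infinite index in $\Gamma$.

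For Zariski density: the qutrit Clifford+T gates are universal \cite{Howard2012Quditversionsqubit}, so the image of $\Delta$ under each of the three archimedean projections $\Gamma\to PU(3)$ is dense, hence Zariski dense in that factor. Let $\mathbf{H}\leq\mathbf{G}$ be the Zariski closure of $\Delta$; since $\Delta$ lies in the compact group $PU(3)^{3}$, the group $\mathbf{H}(\mathbb{R})$ of real points is compact, so $\mathbf{H}^{\circ}$ is reductive and normal in $\mathbf{H}$, and it surjects onto each of the three simple factors $PU_{3}$. A Goursat-type analysis then leaves only two cases: $\mathbf{H}^{\circ}=PU_{3}^{3}$, or $\mathbf{H}^{\circ}$ is the graph of algebraic isomorphisms identifying some of the factors. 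The second is impossible: under the three embeddings $\zeta_{9}\mapsto\zeta_{9},\zeta_{9}^{2},\zeta_{9}^{4}$ the generator $T$ goes to diagonal elements with eigenvalue multisets $\{\zeta_{9}^{\pm1},1\}$, $\{\zeta_{9}^{\pm2},1\}$, $\{\zeta_{9}^{\pm4},1\}$, which are pairwise non-conjugate in $PU_{3}$, so no automorphism of $PU_{3}$ carries one archimedean component of $T$ to another (and the same obstruction governs the finite quotient $\mathbf{H}/\mathbf{H}^{\circ}$). Hence $\mathbf{H}=\mathbf{G}$, which is the strong form of Zariski density.

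For infinite index I would use the action on the Bruhat--Tits tree $\mathcal{T}$ of $PU_{3}$ over the completion $k_{3}$ of $\mathbb{Q}(\cos(2\pi/9))$ at its unique (totally ramified) prime above $3$ --- the same tree on which $\Gamma$ acts in Section~\ref{sec:Synthesis-and-arithmeticity}, in which every vertex has valence at least $3$. Since the Hermitian form defining $\Gamma$ is totally definite, $\mathbf{G}$ is $\mathbb{Q}$-anisotropic, so $\Gamma$ is a cocompact lattice in $PU_{3}(k_{3})$ and acts cocompactly on $\mathcal{T}$; conversely, any discrete subgroup of $PU_{3}(k_{3})$ contained in $\Gamma$ and acting cocompactly on $\mathcal{T}$ is a lattice of finite covolume, hence of finite index in $\Gamma$. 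Consequently ``$[\Gamma:\Delta]=\infty$'' is \emph{equivalent} to ``$\Delta$ does not act cocompactly on $\mathcal{T}$''. To establish the latter, note first that the diagonal generators $S,T$ have root-of-unity entries, hence are units at the prime above $3$ and fix the standard vertex $v_{0}$; and since $H^{2}=-P$, where $P$ transposes the last two coordinates, is a signed permutation matrix and so fixes $v_{0}$, the element $H$ merely interchanges $v_{0}$ with one further vertex $Hv_{0}$, at a distance computable from the Smith normal form of $\sqrt{-3}\,H$ over $\mathcal{O}_{k_{3}(\zeta_{9})}$. Thus every vertex of the orbit $\Delta v_{0}$ is obtained from $v_{0}$ by alternately applying the finite $v_{0}$-fixing group $\langle S,T\rangle$ and the fixed ``hop'' $H$; tracing the orbits of $\langle S,T\rangle$ on the successive spheres about $v_{0}$ and $Hv_{0}$ one shows that the minimal $\Delta$-invariant subtree $\mathcal{T}_{\Delta}$ is a \emph{proper} subtree of $\mathcal{T}$ (it omits a neighbour of some vertex), and then the $\Delta$-invariant function $v\mapsto\dist(v,\mathcal{T}_{\Delta})$ is unbounded, so $\Delta\backslash\mathcal{T}$ is infinite and $\Delta$ is not cocompact. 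Hence $[\Gamma:\Delta]=\infty$, and together with the previous step this proves the theorem.

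The technical heart --- and the step I expect to be the main obstacle --- is the orbit computation proving $\mathcal{T}_{\Delta}\neq\mathcal{T}$: one must control precisely how the finite group $\langle S,T\rangle$ permutes the neighbours of $v_{0}$ and of $Hv_{0}$ and how these patterns interlock under iterating the hop $H$, so as to exhibit a direction the orbit never explores. This is a sharpening of the ``sparse orbit'' behaviour of $\Gamma$ itself recorded in Corollary~\ref{cor:dist>=00003D6}, and it is exactly here that the absence of the extra diagonal gates $\mathcal{D}$ is felt: by Theorem~\ref{thm:main} the set $\{H\}\cup\mathcal{D}$ generates $\Gamma$, whose action on $\mathcal{T}$ \emph{is} cocompact, so the directions missed by Clifford+T are precisely those that $\mathcal{D}$ restores.
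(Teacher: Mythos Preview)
Your Zariski-density argument is a legitimate alternative to the paper's. The paper proceeds computationally: it reduces modulo the prime $\ell=19$ (which splits completely in $E$), checks with GAP that the images of $H,S/\xi,T$ generate $SL_{3}(\mathbb{F}_{19})^{3}\cong G'(\mathcal{O}/\ell)$, and then invokes Weigel's Frattini result (Proposition~\ref{prop:Weigel}) to lift this to density in $SU(3)^{3}$. Your Goursat argument is more conceptual, but two points deserve care. First, universality of qutrit C+T in a single copy of $PU(3)$ is used as input; the paper does not actually cite a proof of this, so you should either supply one or verify Zariski density in each factor directly. Second, the parenthetical about $\mathbf{H}/\mathbf{H}^{\circ}$ is vague; the clean way to finish is to note that a graph subgroup of $PU_{3}^{3}$ is self-normalising (since $PU_{3}$ is centerless), so if $\mathbf{H}^{\circ}$ were a graph then $\mathbf{H}=\mathbf{H}^{\circ}$, and your eigenvalue check on $T$ applies directly.

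The infinite-index argument, however, has a real gap --- one you yourself flag. You correctly reduce to showing that the minimal $\Delta$-invariant subtree is proper, and you correctly identify that this amounts to finding a direction out of $v_{0}$ that $\langle S,T\rangle$ never explores; but ``tracing the orbits of $\langle S,T\rangle$ on successive spheres'' is exactly the computation you do not perform. The paper's resolution is short and clean, and in fact gives you precisely the missing direction: one computes $\ell(H^{-1}SH)=6$ and $\ell(H^{-1}TH)=8$, and by the description of the stabilisers in Proposition~\ref{prop:FD-stabilizers} this places both $S$ and $T$ in $C_{1}=\Stab_{\Gamma}(v_{1})$, the stabiliser of the neighbour of $v_{0}$ lying on the geodesic toward $Hv_{0}$. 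So $\langle S,T\rangle$ fixes $v_{1}$ and therefore cannot move $v_{1}$ to any of the other three neighbours of $v_{0}$ --- that is the direction you were seeking. In the paper's language, $\Delta\leq\langle C_{1},C_{3}\rangle$, and since $C_{D}\leq C_{1}\lneq C_{0}$ inside the amalgam $\Gamma=C_{0}*_{C_{D}}C_{3}$, the Bass--Serre criterion of Proposition~\ref{prop:B-S} gives $[\Gamma:\Delta]=\infty$ immediately. The single level computation $\ell(H^{-1}TH)=8$ replaces the open-ended orbit chase you propose.
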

In terms of quantum computation, this implies that C+T is even universal
in $PU(3)^{3}$, namely, for any three gates in $PU(3)$ and $\varepsilon>0$,
there is a single circuit in the C+T gates whose three complex embeddings
$\varepsilon$-approximate the three original gates simultaneously.
In order to prove this theorem, we develop in Section \ref{sec:Amalgamation-thin}
general criteria to show that a subgroup of an S-arithmetic group
in $PU(3)$ is of infinite index (Proposition \ref{prop:B-S}), and
Zariski-dense (Proposition \ref{prop:Weigel}). For the former we
make use of Bass-Serre theory, and for the latter we use Weigel's
study of Frattini extensions in $p$-adic integer groups. In Section
\ref{subsec:-Thinness-of-Clifford+T} we specialize to the C+T gate
set and prove its thinness using our criteria, and along the way we
obtain more results on the structure of the group $\Gamma$, such
as a neat presentation as an amalgamated product (Corollary \ref{cor:Bass-Serre}).

\vspace{.5\baselineskip}

In Section \ref{sec:Covering-rate} we turn to study the covering
rate of $PU(3)$ (or $PU(3)^{d}$) by various families of subsets.
For a set $\Sigma\subset PU(d)$ we denote by $\Sigma^{r}$ the words
of length $r$ in $\Sigma$, and define the growth rate of $\Sigma$
by $\rho=\rho(\Sigma)=\lim_{r\rightarrow\infty}\sqrt[r]{|\Sigma^{r}|}$.
We say that $\gamma\in\Sigma^{r}$ is an \emph{$\varepsilon$-approximation}
of $g\in PU(3)$ if $g$ lies in the ball of \emph{volume} $\varepsilon$
around $\gamma$ in $PU(3)$. We normalize the volume of $PU(3)$
to be one, so that it is clear that $\left|\Sigma^{r}\right|\geq\frac{1}{\varepsilon}$
is needed to $\varepsilon$-approximate all of $PU(3)$. By the Solovay-Kitaev
Theorem, there exist $c\geq1$, $K>0$, such that for any symmetric
universal gate set $\Sigma\subset PU(3)$ and any small enough $\varepsilon>0$,
all of $PU(3)$ is $\varepsilon$-approximated by $\Sigma^{r}$ for
$r\leq K\big(\log_{\rho}\frac{1}{\varepsilon}\big)^{c}$. The current
best known bound on the exponent $c$ for a general universal $\Sigma$
is $c\sim1.44$ \cite{Kuperberg2023Breakingcubicbarrier}. However,
the work of Bourgain and Gamburd \cite{bourgain2012spectral} shows
that whenever $\Sigma$ is algebraic (i.e.\ all the entries of its
elements are in $\overline{\mathbb{Q}}$), the associated averaging
operator on $PU(3)$ has a spectral gap, which by \cite{Harrow2002Efficientdiscreteapproximations}
implies that they achieve the optimal exponent $c=1$. Thus, their
covering efficiency is measured by the constant $K$ (which is very
close to the ``covering exponent'' of $\Sigma$ defined in \cite{sarnak2015letter}).
Section \ref{sec:Covering-rate} ultimately leads to a proof that
the C+D gates achieve Solovay-Kitaev with $c=1$ and any $K>\log_{3}105\approx4.236$:
\begin{thm*}[\ref{thm:H-covering}]
For any $K>\log_{3}(105)$, for any small enough $\varepsilon$ every
$g\in PU(3)^{3}$ has an $\varepsilon$-approximation by a word in
the C+D gate set of length at most $K\log_{\rho}\tfrac{1}{\varepsilon}$.
\end{thm*}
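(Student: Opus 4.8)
The plan is to run the arithmetic ``golden gates'' machinery of \cite{sarnak2015letter,Parzanchevski2018SuperGoldenGates}, adapted to the non-transitive situation at hand. That the C+D gates satisfy \emph{some} Solovay--Kitaev bound with exponent $c=1$ is automatic: by \cite{bourgain2012spectral} an algebraic gate set has a spectral gap, which by \cite{Harrow2002Efficientdiscreteapproximations} forces $c=1$. The real content is the explicit constant $K$, and for this one must use both the arithmeticity of C+D and the automorphic spectrum of $U(3)$. Write $\Sigma$ for the C+D gate set, $\rho=\rho(\Sigma)$, and let $\mathcal{T}$ be the Bruhat--Tits tree on which $\Gamma$ acts. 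Throughout I would use that $\Sigma$ generates $\Gamma$ (Theorem~\ref{thm:main}), that $\Sigma$-words already cover the whole $\Gamma$-orbit of a distinguished vertex $v_{0}\in\mathcal{T}$ (Theorem~\ref{thm:orbits}), and that this orbit is $6$-separated in $\mathcal{T}$ (Corollary~\ref{cor:dist>=00003D6}).

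\emph{Step 1 (navigation).} First I would make quantitative the dictionary between $\Sigma$-word length and displacement in $\mathcal{T}$. Here $\mathcal{T}$ is the tree of $\mathbf{G}=U(3)$ over $F_{v}$, where $F=\mathbb{Q}(\cos\tfrac{2\pi}{9})$ and $v\mid 3$; the residue field of $F_{v}$ is $\mathbb{F}_{3}$, and the quadratic extension $E_{v}/F_{v}$ with $E=\mathbb{Q}(\zeta_{9})$ is ramified, which is why $\mathbf{G}$ has $F_{v}$-rank one and acts on a tree. The diagonal gates $\mathcal{D}$ generate a finite group fixing $v_{0}$, so displacement is driven solely by the occurrences of $H$; running Theorem~\ref{thm:orbits} in reverse, every vertex of $\Gamma v_{0}$ at tree-distance $\leq 2k$ from $v_{0}$ equals $w\cdot v_{0}$ for some $\Sigma$-word $w$ of length $\asymp k$, with boundedly many $w$ per vertex. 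Counting the orbit shell by shell -- using the biregular valencies of $\mathcal{T}$, and the fact that $6$-separation forces us to take a ``thickened'' adjacency step instead of a single tree-edge -- pins down $\rho$ and relates $\log\rho$ to the leading eigenvalue of the relevant Hecke operator on the finite graph $Y=\Gamma\backslash\mathcal{T}$. The constant $\log_{3}105$ will be the value of this relation once Steps~2--3 below are in place, the thickening and the dimension $\dim PU(3)^{3}=24$ accounting for its distance from the naive exponent $1$.

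\emph{Step 2 (the spectral bound -- the main obstacle).} Since the Hermitian form defining $\Gamma$ is totally definite, $\mathbf{G}$ is compact at all three archimedean places, so $\Gamma$ embeds with finite covolume into $PU(3)^{3}\times\mathbf{G}(F_{v})$ and functions on $Y$ are spherical automorphic forms on $\mathbf{G}$, unramified away from $3$ and trivial at infinity -- the setting of Ramanujan bigraphs \cite{Evra2022Ramanujanbigraphs}. Decomposing $L^{2}(Y)$ into automorphic representations and invoking Rogawski's classification of the discrete spectrum of $U(3)$ (base change to $GL_{3}$ together with the known Ramanujan bounds for the resulting cohomological $GL_{3}$-forms), the non-tempered constituents at $3$ are exactly the one-dimensional characters -- which produce the main, Haar, term -- and the CAP family $\pi(\mu)$ attached to Hecke characters, whose Satake parameters at $3$ exceed tempered ones by at most a factor $3^{1/2}$. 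Hence every nontrivial eigenvalue of the (thickened) adjacency operator on $Y$ satisfies the Ramanujan bound, save for an explicit near-tempered contribution from the $\pi(\mu)$; one must then pin down which CAP representations actually occur at the relevant level (hyperspecial outside $3$) and check that their eigenvalue is in every case strictly dominated by the main term. This is the delicate point -- and the genuinely new phenomenon compared with $PU(2)$, where $U(2)$ has no analogous CAP family -- and it is precisely this non-tempered part that inflates the covering exponent to $\log_{3}105$.

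\emph{Step 3 (equidistribution and covering).} With the bound of Step~2 in hand, the empirical measure $\nu_{r}=|\Sigma^{r}|^{-1}\sum_{w\in\Sigma^{r}}\delta_{\bar{w}}$ on $PU(3)^{3}$ equidistributes towards Haar measure with a power-saving rate governed by the dominant nontrivial eigenvalue; here Theorem~\ref{thm:main} guarantees that the limit is Haar on all of $PU(3)^{3}$ and not a proper subgroup, while Theorem~\ref{thm:orbits} lets one rewrite $\nu_{r}$ through tree-orbit counts with bounded fibres. Feeding this into the standard bump-function covering argument of \cite{sarnak2015letter,Parzanchevski2018SuperGoldenGates} -- given a ball $B\subset PU(3)^{3}$ of volume $\varepsilon$, pair $\nu_{r}-\mathrm{Haar}$ against a smooth majorant or minorant of $\mathbf{1}_{B}$ with controlled Sobolev norm -- shows that $\Sigma^{r}$ meets $B$ as soon as $\rho^{r}$ exceeds the power of $\varepsilon^{-1}$ prescribed by the error term and the $24$-dimensional Sobolev loss. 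Optimizing, and substituting the Step~1 relation between $\rho$ and the Hecke spectral radius, the threshold becomes $r\leq K\log_{\rho}(1/\varepsilon)$ for any $K>\log_{3}105$; the dimensional and ``coupon-collector'' losses are of lower order and are absorbed into the strict inequality, which completes the proof.
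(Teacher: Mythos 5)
Your overall architecture (tree navigation, an automorphic spectral input, then the covering argument of \cite{sarnak2015letter,Parzanchevski2018SuperGoldenGates}) matches the paper's, but two load-bearing claims are wrong or left unestablished, and they are exactly where the content of the theorem lies. First, the constant. You attribute the inflation from $2$ to $\log_{3}105$ to the dimension of $PU(3)^{3}$ and to a near-tempered CAP family whose contribution you leave as ``the delicate point.'' Neither is the source. In the paper the spectral bound (Theorem \ref{thm:Hecke-Ramanujan}) is the \emph{full} tempered bound $\|T_{r}|_{V_{0}}\|\leq(r+1)/\sqrt{d_{r}}$: because the level is Iwahori at the ramified prime $\Pi$, \cite[Thm.~7.3(1)]{Evra2022Ramanujanbigraphs} forces every relevant automorphic representation to be either one-dimensional or tempered, so the CAP representations you worry about are excluded outright rather than ``dominated.'' (Had they contributed with the $3^{1/2}$ loss you allow, the covering exponent would degrade beyond $2$ and the final constant would not be $\log_{3}105$.) The dimension of $\L$ enters only in lower-order terms. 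The actual source of $\log_{3}105$ is combinatorial: it is the covering exponent $2$ coming from the $L^{2}$ bound, multiplied by the change of base $\log_{3}\sqrt{105}$ between the growth rate $\rho=\sqrt{105}$ of the word metric and the growth rate $p=3$ of spheres in the $4$-regular tree (which is regular here, not biregular, since $\Pi$ is ramified); indeed $2\log_{3}\sqrt{105}=\log_{3}105$.

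Second, you never establish the two quantitative facts this computation needs: that $\rho(\{H\}\cup\mathcal{D})=\sqrt{105}$, and that $\ell_{w}(\gamma)\leq\ell_{CH}(\gamma)+O(1)$ with multiplicative constant exactly $1$. Both rest on the amalgam decomposition $\Gamma=C_{0}*_{C_{D}}C_{3}$ (Corollary \ref{cor:Bass-Serre}) and its normal form: the count $|T_{3}'|\cdot|T_{0}'|=3\cdot35=105$ per two Bass--Serre letters gives $\rho$, and the synthesis algorithm of Theorem \ref{thm:main} together with the normal form gives $\ell_{w}(\gamma)\leq\ell(\gamma)+8\leq\ell_{CH}(\gamma)+16$ (Lemmas \ref{lem:BS-wordlen} and \ref{lem:level-H}). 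The redundancy of words over tree vertices -- not any spectral defect -- is what costs the factor above $2$. Finally, since $\Gamma$ does not act transitively on $L_{\mathcal{T}}$ (the class number is $h=2$), the Hecke operator on $\Gamma\backslash\mathcal{T}$ you invoke must be replaced by the Clozel--Hecke construction (Definition \ref{def:Clozel-Hecke}) with its weighted inner product accounting for the stabilizers $\Gamma_{i}$; this is handled in Theorems \ref{thm:Hecke-Ramanujan} and \ref{thm:Clozel-Hecke-Ramanujan} and is not automatic from the transitive-action template you are following.
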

A stronger notion of covering efficiency is that of \emph{Golden Gates},
which was developed in \cite{sarnak2015letter,Parzanchevski2018SuperGoldenGates,Evra2018RamanujancomplexesGolden}.
These gates exhibit an almost-optimal almost-covering (a.o.a.c.) property
(see Definition \ref{def:aoac}), which implies that if there are
$\Theta(\rho^{r})$ words of length $r$, then \emph{almost} every
$g$ has an $\varepsilon$-approximation of the almost-optimal length
$r=\log_{\rho}\frac{1}{\varepsilon}+O\big(\log\log\frac{1}{\varepsilon}\big)$.
It also implies that \emph{every} $g$ has an $\varepsilon$-approximation
of length $2\log_{\rho}\frac{1}{\varepsilon}+O\big(\log\log\frac{1}{\varepsilon}\big)$,
so that Solovay-Kitaev holds for $c=1$ any $K>2$. 

Golden gate sets for $PU(3)$ were constructed in \cite{Evra2018RamanujancomplexesGolden,Evra2022Ramanujanbigraphs};
in these papers special universal sets were carefully chosen to ensure
the a.o.a.c.\ property by number-theoretic arguments. In this paper
we face a problem of a different nature: the gates are given to us,
and the techniques developed in \cite{Evra2018RamanujancomplexesGolden,Evra2022Ramanujanbigraphs}
do not apply here, mostly due to the failure of $\Gamma$ to act transitively
on the Bruhat-Tits tree. 

Rather than working on the Clifford gates directly, we study in Section
\ref{sec:Covering-rate} the covering rate of a general $\Pi$-arithmetic
group $\Gamma$ in $PU(3)$, under a mild technical restriction (having
Iwahori level at a ramified prime). Using the action of $\Gamma$
on the $\Pi$-adic Bruhat-Tits tree we introduce two families of subsets
in $\Gamma$, which we call Clozel-Hecke points (Definition \ref{def:Clozel-Hecke})
and level points ((\ref{eq:level-sets})). In Corollaries \ref{cor:Clozel-a.o.a.c.}
and \ref{cor:level-a.o.a.c.} we prove that both of these families
exhibit the a.o.a.c.\ property. To achieve this we translate the
covering problem to one on automorphic representations of adelic groups,
and employ a Ramanujan-type result from \cite{Evra2022Ramanujanbigraphs},
which itself build upon deep number theoretic results pertaining to
the Langlands program \cite{Shin2011Galoisrepresentationsarising}. 

In Section \ref{sec:Silver-gates} we specialize again to the Clifford
gates, using coarse geometry to translate the covering results for
the Clozel-Hecke and level points to ones on words in the C+D gates.
For this we combine the theory of Bass-Serre normal form with the
explicit synthesis/word problem algorithm from Section \ref{sec:Synthesis-and-arithmeticity}.
In addition to Theorem \ref{thm:H-covering} which was quoted above,
we obtain from this analysis that almost every $g$ has an $\varepsilon$-approximation
of length at most $3\log_{\rho}\frac{1}{\varepsilon}+O\big(\log\log\frac{1}{\varepsilon}\big)$
(Corollary \ref{cor:H-a.o.a.c.}). This is a slightly weaker almost-covering
property from that of Golden Gates, earning Clifford+D the title of
a ``silver'' gate set.
\begin{acknowledgement*}
We thank A.\ R.\ Kalra for presenting this problem to us. S.E.\ was
supported by ISF grant 1577/23, and O.P.\ was supported by ISF grant
2990/21.
\end{acknowledgement*}

\section{Synthesis and arithmeticity for Clifford+D\protect\label{sec:Synthesis-and-arithmeticity}}

The goal of this section is to prove the arithmeticity of the C+D
gate set. We begin by describing compact unitary groups in three variables
in general, but after Proposition \ref{prop:level_dist} we restrict
to the case of the Clifford gates, for which the definitions simplify
considerable.

Let $F$ be a totally real number field, $E/F$ a CM extension, $\mathcal{O}=\mathcal{O}_{F}$
and $\mathcal{O}_{E}$ the rings of integers of $F$ and $E$, $\varepsilon_{1},\ldots,\varepsilon_{d}\colon F\hookrightarrow\mathbb{R}$
the real embeddings ($d=[F:\mathbb{Q}]$), and $\Phi\in GL_{3}\left(E\right)$
a totally definite Hermitian form.
\begin{defn}
\label{def:U3-z9-1}The \emph{unitary group scheme} $G=U_{3}^{E,\Phi}$
associated with the form $\Phi$ is defined by assigning to every
$\mathcal{O}$-algebra $A$ the group
\[
G\left(A\right)=U_{3}\left(A\otimes_{\mathcal{O}}\mathcal{O}_{E}\right)=\left\{ g\in GL_{3}\left(A\otimes_{\mathcal{O}}\mathcal{O}_{E}\right)\,\middle|\,g^{*}\Phi g=\Phi\right\} .
\]
We shall also consider on occasions the \emph{special unitary group}
$G'=SU_{3}^{E,\Phi}$ (defined by adding $\det g=1$) and the \emph{projective
group schemes} $\overline{G}=PU_{3}^{E,\Phi}=G/Z(G)$ ($G$ modulo
its center), and $\overline{G}'=PSU_{3}^{E,\Phi}=G'/Z(G')$. 
\end{defn}

Since $E/F$ is a CM-extension and $\Phi$ is totally definite, the
group of real points of $G$ is
\begin{equation}
G\left(F\otimes_{\mathbb{Z}}\mathbb{R}\right)=G(F_{\varepsilon_{1}}\times\ldots\times F_{\varepsilon_{d}})=G(F_{\varepsilon_{1}})\times\ldots\times G(F_{\varepsilon_{d}})\cong U(3)^{d}.\label{eq:real-points}
\end{equation}
For a prime ideal $\Pi$ in $\mathcal{O}$\textcolor{red}{{} }which
does not split over $E$ we consider the $\Pi$-integers 
\begin{align*}
\mathcal{O}\big[\smash{\tfrac{1}{\Pi}}\big] & =\left\{ \alpha\in F\,\middle|\,v(\alpha)\geq0\text{ for every finite valuation of \ensuremath{F} other than \ensuremath{v_{\Pi}}}\right\} \\
 & =\left\{ \smash{\tfrac{\alpha}{\beta}}\,\middle|\,\alpha,\beta\in\mathcal{O},\beta\notin\Pi'\text{ for every prime ideal }\Pi'\neq\Pi\right\} ,
\end{align*}
and study the $\Pi$-arithmetic group $\Gamma=G\big(\mathcal{O}\big[\tfrac{1}{\Pi}\big]\big)$,
which is naturally embedded in $U(3)^{d}$ via (\ref{eq:real-points}).
The group $\Gamma$ acts naturally on an infinite tree $\mathcal{T}$,
which is the Bruhat-Tits building associated with the $\Pi$-adic
group $G_{\Pi}:=G(F_{\Pi})$ (where $F_{\Pi}$ is the $\Pi$-adic
completion of $F$). It is simplest to describe $\mathcal{T}$ using
the (reduced) Bruhat-Tits building $\mathcal{B}$ of the group $\widetilde{G}:=GL_{3}\left(E_{\pi}\right)$,
where $\pi$ is a prime factor of $\Pi$ in $\mathcal{O}_{E}$. This
is a 2-dimensional building, whose vertices correspond to the cosets
$\widetilde{G}/\widetilde{K}$, where $\widetilde{K}:=E_{\pi}^{\times}GL_{3}(\mathcal{O}_{E_{\pi}})$
(here $\mathcal{O}_{E_{\pi}}$ are the $\pi$-adic integers in $E_{\pi}$).
Three vertices $\{g_{i}\widetilde{K}\}_{i=1,2,3}$ form a triangle
in $\mathcal{B}$ iff they give rise to a chain of $\mathcal{O}_{E_{\pi}}$-lattices
$\pi g_{3}\mathcal{O}_{E_{\pi}}^{3}<g_{1}\mathcal{O}_{E_{\pi}}^{3}<g_{2}\mathcal{O}_{E_{\pi}}^{3}<g_{3}\mathcal{O}_{E_{\pi}}^{3}$,
possibly after permuting the $g_{i}$, and scaling each by some power
of $\pi$ (see \cite[V(8)]{Brown1989} for more details). We shall
assume in this paper that $\Phi\in\widetilde{K}$, as this is a bit
simpler and covers the cases we are interested in (see \cite[§5.1]{Evra2022Ramanujanbigraphs}
for the general case). 

The group $G_{\Pi}$ consists of the fixed-points of the involution
$g^{\#}=\Phi^{-1}(g^{*})^{-1}\Phi$ on $\widetilde{G}$, and $\#$
induces a simplicial involution on $\mathcal{B}$, via $(g\widetilde{K})^{\#}=g^{\#}\widetilde{K}$.
The tree $\mathcal{T}$ is the fixed-section of this involution, and
it has a bipartite decomposition $\mathrm{Ver}_{\mathcal{T}}=L_{\mathcal{T}}\sqcup R_{\mathcal{T}}$,
where $L_{\mathcal{T}}$ consists of the $\mathcal{B}$-vertices fixed
by $\#$, and $R_{\mathcal{T}}$ of the midpoints of $\mathcal{B}$-edges
which $\#$ reflects. If $\Pi$ is ramified in $E$ then $\mathcal{T}$
is a regular tree of degree $N_{F/\mathbb{Q}}(\Pi)+1$, and if $\Pi$
is inert then $\mathcal{T}$ is bi-regular with $L_{\mathcal{T}}$
having degrees $N_{F/\mathbb{Q}}(\Pi)^{3}+1$ and $R_{\mathcal{T}}$
degrees $N_{F/\mathbb{Q}}(\Pi)+1$. The group $G_{\Pi}$ acts transitively
on the edges of $\mathcal{T}$, which implies that it acts transitively
on $L_{\mathcal{T}}$ and $R_{\mathcal{T}}$. We denote by $v_{0}$
the vertex corresponding to $\widetilde{K}$ itself, which is in $L_{\mathcal{T}}$
and has stabilizer $K_{\Pi}:=G_{\Pi}\cap\widetilde{K}=G(\mathcal{O}_{F_{\Pi}})$.

Let $\mathrm{ord}_{\pi}\colon E_{\pi}\rightarrow\mathbb{Z}$ be the
$\pi$-valuation on $E_{\pi}$, normalized by $\mathrm{ord}_{\pi}\left(\pi\right)=1$.
For $g\in G_{\Pi}$ denote
\[
\mathrm{ord}_{\pi}g=\min\nolimits_{i,j}\ord_{\pi}(g_{ij})=\max\left\{ t\,\,\middle|\,\,\pi^{-t}g\in M_{3}\left(\mathcal{O}_{E_{\pi}}\right)\right\} ,
\]
and define the \emph{level }map on $G_{\Pi}$ to be 
\begin{equation}
\ell\colon G_{\Pi}\rightarrow2\mathbb{N},\qquad\ell\left(g\right)=-2\mathrm{ord}_{\pi}g.\label{eq:l}
\end{equation}

\begin{prop}
\label{prop:level_dist}The distance in $\mathcal{T}$ between vertices
in $L_{\mathcal{T}}$ is given by $\mathrm{dist}\left(gv_{0},hv_{0}\right)=\ell\left(h^{-1}g\right)$,
for any $g,h\in G_{\Pi}$.
\end{prop}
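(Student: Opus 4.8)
The plan is to reduce the computation of the tree distance between two $L_{\mathcal{T}}$-vertices $gv_0$ and $hv_0$ to a computation inside the ambient building $\mathcal{B}$ of $\widetilde{G}=GL_3(E_\pi)$, and then to a purely lattice-theoretic statement about $\mathcal{O}_{E_\pi}$-lattices. Since $G_\Pi$ acts on $\mathcal{T}$ and $\ell$ is visibly a class function of $h^{-1}g$ in the sense that $\mathrm{dist}(gv_0,hv_0)=\mathrm{dist}(h^{-1}gv_0,v_0)$, it suffices to prove $\mathrm{dist}(gv_0,v_0)=\ell(g)$ for all $g\in G_\Pi$. First I would use the description of $\mathcal{T}$ as the fixed-section of the involution $\#$ on $\mathcal{B}$, together with the fact that $v_0=\widetilde{K}$ lies in $L_{\mathcal{T}}$: a path in $\mathcal{T}$ between two $L$-vertices is, after unwinding the bipartite structure $L_{\mathcal{T}}\sqcup R_{\mathcal{T}}$, a gallery in $\mathcal{B}$ whose combinatorial length is exactly twice the $\mathcal{T}$-distance. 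So the claim becomes: the gallery distance in $\mathcal{B}$ from $v_0$ to $gv_0$ equals $-2\,\mathrm{ord}_\pi g$ (noting $\mathrm{ord}_\pi g\le 0$ for $g\in G_\Pi\setminus K_\Pi$, since $g$ and $g^{-1}=g^\#$ both have entries controlled symmetrically by $\Phi\in\widetilde K$).

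Next I would translate this into lattices. The vertex $v_0$ corresponds to the standard lattice $\Lambda_0=\mathcal{O}_{E_\pi}^3$ (up to scaling), and $gv_0$ to $\Lambda=g\Lambda_0$. Writing $t=-\mathrm{ord}_\pi g\ge 0$, the definition of $\mathrm{ord}_\pi$ says $t$ is minimal with $\pi^t g\in M_3(\mathcal{O}_{E_\pi})$, i.e.\ $\pi^t\Lambda\subseteq\Lambda_0$ and $t$ is the smallest such exponent. Combined with the dual statement obtained from $g^\#=g^{-1}$ (which forces $\pi^t\Lambda_0\subseteq\Lambda$ after appropriate normalization, using $\Phi\in\widetilde K$ so that $\Phi$-duality preserves the standard lattice), we get the two-sided containment $\pi^t\Lambda_0\subseteq\Lambda\subseteq\pi^{-t}\Lambda_0$ with $t$ sharp on at least one side. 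The gallery distance in the building $\mathcal{B}$ of $GL_3$ between homothety classes $[\Lambda_0]$ and $[\Lambda]$ is computed by the elementary divisors: if $\Lambda_0$ and $\Lambda$ are in relative position $(\pi^{a_1},\pi^{a_2},\pi^{a_3})$ with $a_1\ge a_2\ge a_3$, then the gallery distance is $a_1-a_3$ (the "width" of the invariant-factor vector, independent of translation since we are in the reduced building). Unwinding, $a_1=t$ and $a_3=-t$, giving gallery distance $2t$, hence $\mathrm{dist}_{\mathcal{T}}(v_0,gv_0)=t=-\mathrm{ord}_\pi g$; but wait — this gives $\ell(g)/2$, so I must be careful: the factor of $2$ in $\ell(g)=-2\,\mathrm{ord}_\pi g$ is precisely the factor relating gallery length in $\mathcal{B}$ to edge length in the fixed-tree $\mathcal{T}$, \emph{or} it reflects that the relative-position vector is $(\pi^t,\ast,\pi^{-t})$ of width $2t$ and $\mathcal{T}$-distance equals that width. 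I would pin down this bookkeeping by tracking one explicit diagonal example (e.g.\ $g=\mathrm{diag}(\pi^{-t},1,\pi^{t})$-type elements, which occur in $G_\Pi$ when $\Pi$ is ramified) and checking both sides agree, then invoke $G_\Pi$-transitivity on edges of $\mathcal{T}$ and the Cartan/elementary-divisor decomposition of $G_\Pi$ to reduce the general $g$ to such a diagonal representative.

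The main obstacle I anticipate is the precise combinatorial relationship between the building $\mathcal{B}$ of $GL_3(E_\pi)$ and the fixed-tree $\mathcal{T}$ — specifically, correctly accounting for the two vertex types, for midpoints of $\#$-reflected edges in $R_{\mathcal{T}}$, and for the scaling-by-powers-of-$\pi$ ambiguity in the reduced building. Getting the constant right (why $-2\,\mathrm{ord}_\pi$ and not $-\mathrm{ord}_\pi$) is the entire content of the proposition, and it hinges on whether a single step in $\mathcal{T}$ from $v_0\in L_{\mathcal{T}}$ to a neighbor in $R_{\mathcal{T}}$ and back corresponds to a relative-position change of $\pi^{\pm 1}$ (contributing $2$ to the width) or of $\pi^{\pm 1/2}$ (impossible) — i.e.\ it comes from the fact that the $\#$-involution identifies a $\mathcal{B}$-edge-midpoint with a lattice chain of "length $2$" in the $GL_3$ picture. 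I would resolve this by carefully invoking the lattice-chain description of triangles in $\mathcal{B}$ quoted in the excerpt (the chain $\pi g_3\mathcal{O}_{E_\pi}^3<g_1\mathcal{O}_{E_\pi}^3<g_2\mathcal{O}_{E_\pi}^3<g_3\mathcal{O}_{E_\pi}^3$) and reading off how $\#$ acts on such chains, which forces the step size. The number theory beyond this point is routine: it is just the elementary divisor theorem over the DVR $\mathcal{O}_{E_\pi}$.
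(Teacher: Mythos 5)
Your route is genuinely different from the paper's: the paper simply reduces to $\mathrm{dist}(h^{-1}gv_0,v_0)$ and cites Proposition 3.3 of Evra--Parzanchevski (noting the proof there, written for an unramified Gaussian prime, generalizes), whereas you reconstruct the argument from scratch inside the building $\mathcal{B}$ of $GL_3(E_\pi)$. Your ingredients are the right ones: the reduction to $h=1$; the observation that $g^{-1}=\Phi^{-1}g^{*}\Phi$ with $\Phi\in\widetilde{K}$ forces $\mathrm{ord}_{\pi}(g^{-1})=\mathrm{ord}_{\pi}(g)$, hence the symmetric containment $\pi^{t}\Lambda_0\subseteq\Lambda\subseteq\pi^{-t}\Lambda_0$ (sharp on \emph{both} sides, not just one); and the elementary-divisor computation of distance in $\mathcal{B}$. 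What your sketch buys is a self-contained proof covering both the ramified and inert cases; what it costs is that you must do the $\mathcal{B}$-versus-$\mathcal{T}$ bookkeeping yourself rather than inheriting it.

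That bookkeeping is the one point you leave genuinely open, and you offer two mutually exclusive resolutions. The correct one is your second: since $\det g$ is a unit and both containments are sharp, the relative position is $(\pi^{t},1,\pi^{-t})$ of width $2t$, and the $\mathcal{T}$-distance between $L$-vertices \emph{equals} this width (equivalently, equals the $1$-skeleton distance in $\mathcal{B}$), giving $\mathrm{dist}(gv_0,v_0)=2t=\ell(g)$. Your first-pass claim that a $\mathcal{T}$-geodesic unwinds to a $\mathcal{B}$-gallery of \emph{twice} the length is the wrong option: a length-$2$ path in $\mathcal{T}$ from $v_0$ to $gv_0$ through an $R$-vertex (the midpoint of a $\#$-reflected $\mathcal{B}$-edge $\{u,u^{\#}\}$ forming a triangle with both endpoints) corresponds to $\mathcal{B}$-distance $2$, as the case $t=1$ confirms: the only sorted integer vector with entries in $[-1,1]$ summing to $0$ and not identically zero is $(1,0,-1)$. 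One further caution: your proposed test elements $\mathrm{diag}(\pi^{-t},1,\pi^{t})$ do not lie in $G_\Pi$ when $\Phi=I$ (they are unitary only for the antidiagonal form), so the reduction to a diagonal representative should be phrased via the elementary-divisor class of $g$ rather than via an actual diagonal element of $G_\Pi$; alternatively, use convexity of the fixed-point set of $\#$ together with edge-transitivity of $G_\Pi$ on $\mathcal{T}$ to propagate the $t=1$ computation.
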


\begin{proof}
We have $\mathrm{dist}\left(gv_{0},hv_{0}\right)=\mathrm{dist}\left(h^{-1}gv_{0},v_{0}\right)$,
and it is proved in Proposition 3.3 of \cite{Evra2018RamanujancomplexesGolden}
that the latter equals $\ell\left(h^{-1}g\right)$ (the claim there
is for $E=\mathbb{Q}\left[i\right]$ and $\pi\in\mathbb{Z}[i]$ an
unramified prime, but the proof holds more generally to our case).
\end{proof}
\medskip{}

For the rest of this section we restrict to the specific case which
corresponds to the extended Clifford gates, for which the story is
simpler than the general case. We denote $\xi=\zeta_{9}$ and $\sigma=\xi+\xi^{-1}$,
and take $E=\mathbb{Q}\left[\xi\right]$ to be the $9$-th cyclotomic
field and $F=\mathbb{Q}\left[\sigma\right]$ its maximal totally
real subfield. The rings of integers of $E$ and $F$ are $\mathcal{O}_{E}=\mathbb{Z}\left[\xi\right]$
and $\mathcal{O}_{F}=\mathbb{Z}\left[\sigma\right]$, and both of
them are PID. We take $\Phi=I$ to be the standard Hermitian form,
so that $G=U_{3}^{E,\Phi}$ is given by 
\[
G\left(A\right)=U_{3}\left(A\left[\xi\right]\right)=\left\{ g\in GL_{3}\left(A\left[\xi\right]\right)\,\middle|\,g^{*}g=I\right\} ,
\]
where $A[\xi]=\nicefrac{A[x]}{(m_{\xi}^{F}(x))}$, for $m_{\xi}^{F}(x)=x^{2}-\sigma x+1$
the minimal polynomial of $\xi$ over $F$.

We denote $\pi=1-\xi$ and take $\Pi=\pi\bar{\pi}=2-\sigma\in\mathcal{O}_{F}$.
We note that $\pi$ (resp. $\Pi$) is a prime in $\mathcal{O}_{E}$
(resp. $\mathcal{O}_{F}$) with $\pi^{6}\sim3$ (resp. $\Pi^{3}\sim3$),
and observe that the $\Pi$-arithmetic group $\Gamma$ is
\[
\Gamma=G\big(\mathcal{O}_{F}\big[\tfrac{1}{\Pi}\big]\big)=\left\{ g\in U(3)\,\middle|\,\text{all the entries of \ensuremath{g} are in \ensuremath{\mathbb{Z}\left[\zeta_{9},\tfrac{1}{3}\right]}}\right\} ,
\]
which indeed contains all the gates from Definition \ref{def:C+T-and-C+D}.
Let $\mathrm{Gal}\left(E/\mathbb{Q}\right)=\left\langle \varphi\right\rangle \cong\nicefrac{\mathbb{Z}}{6}$
with $\varphi\left(\xi\right)=\xi^{2}$. Since $\varphi^{3}$ generates
the Galois group of the CM-extension $E/F$, we shall denote $\varphi^{3}\left(\alpha\right)$
by $\overline{\alpha}$. We take $\varepsilon_{1}$ to be the real
embedding $\varepsilon_{1}\colon\sigma\mapsto2\cos\left(\tfrac{2\pi}{9}\right)\colon F\hookrightarrow\mathbb{R}$,
and let $\varepsilon_{2}=\varepsilon_{1}\circ\varphi$ and $\varepsilon_{3}=\varepsilon_{1}\circ\varphi^{2}$
be the two other real embeddings of $F$. They relate to the (absolute)
norm of $E$ by
\begin{equation}
\prod\nolimits_{i=1}^{3}\varepsilon_{i}(\overline{\alpha}\alpha)=\varepsilon_{1}(\overline{\alpha}\alpha\varphi(\overline{\alpha}\alpha)\varphi^{2}(\overline{\alpha}\alpha))=\varepsilon_{1}(N_{E/\mathbb{Q}}(\alpha))=N_{E/\mathbb{Q}}(\alpha)\qquad\left(\forall\alpha\in E\right).\label{eq:norm_emb}
\end{equation}
As mentioned, both $\mathcal{O}_{E}$ and $\mathcal{O}_{F}$ are
PID (and in particular UFD), with unit groups\footnote{These computations were carried out in sage \cite{SageDevelopers2023SageMathSageMathematics},
which itself relies on PARI \cite{2022PARI/GPversion2.15.2}.}
\begin{align}
\mathcal{O}_{E}^{\times} & =\left\langle -\xi\right\rangle \times\left\langle u_{1}:=1+\xi\right\rangle \times\left\langle u_{2}:=1+\xi^{2}\right\rangle \cong\nicefrac{\mathbb{Z}}{18}\times\mathbb{Z}\times\mathbb{Z}\nonumber \\
\mathcal{O}_{F}^{\times} & =\left\langle -1\right\rangle \times\left\langle 1-\sigma\right\rangle \times\left\langle \sigma\right\rangle \cong\nicefrac{\mathbb{Z}}{2}\times\mathbb{Z}\times\mathbb{Z}.\label{eq:OFx}
\end{align}
We shall also need the unitary subgroup $U_{E/F}^{1}=\left\{ \alpha\in\mathcal{O}_{E}^{\times}\,\middle|\,N_{E/F}(\alpha)=\overline{\alpha}\alpha=1\right\} $,
and since $\overline{u_{1}}u_{1}=(1-\sigma)^{-2}$ and $\overline{u_{2}}u_{2}=\sigma^{2}$,
we see from (\ref{eq:OFx}) that $U_{E/F}^{1}=\left\langle -\xi\right\rangle $.

Since $\Pi$ is ramified in $E$ and $N_{F/\mathbb{Q}}(\Pi)=3$, the
Bruhat-Tits tree of $G_{\Pi}$ is $4$-regular. By Proposition \ref{prop:level_dist},
the level-zero elements in $G_{\Pi}$ are precisely $\mathrm{Stab}_{G_{\Pi}}(v_{0})=K_{\Pi}$.
We denote:
\[
C:=\mathrm{Stab}_{\Gamma}\left(v_{0}\right)=\Gamma\cap K_{\Pi}=G\left(\mathcal{O}_{F}\big[\tfrac{1}{\Pi}\big]\right)\cap G\left(\mathcal{O}_{F_{\Pi}}\right)=G\left(\mathcal{O}_{F}\right).
\]

\begin{lem}
\label{lem:C}
\begin{enumerate}
\item The group $C$ consists of the monomial matrices with entries in $\left\langle -\xi\right\rangle $:
\begin{equation}
C=\mathcal{M}_{3}\ltimes\mathcal{D}\cong S_{3}\ltimes\left(\nicefrac{\mathbb{Z}}{18}\right)^{3},\label{eq:C-M3-D}
\end{equation}
where $\mathcal{M}_{3}$ are the permutation matrices.
\item The group $\langle H,\mathcal{D}\rangle$ contains $C$.
\end{enumerate}
\end{lem}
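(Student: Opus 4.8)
The plan is to handle the two parts in order, as part (2) will lean on the explicit description from part (1).

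\emph{Part (1).} First I would show the inclusion $\mathcal{M}_3 \ltimes \mathcal{D} \subseteq C$: permutation matrices are clearly in $U_3(\mathcal{O}_F)$, and each diagonal matrix in $\mathcal{D}$ has entries which are units in $\mathcal{O}_E = \mathbb{Z}[\xi]$ (the entries $\pm\zeta_9^a$ lie in $\langle -\xi\rangle$), with $|{\pm\zeta_9^a}| = 1$, so the matrix is unitary and integral; these generate the semidirect product, which is visibly isomorphic to $S_3 \ltimes (\mathbb{Z}/18)^3$ since $\langle -\xi\rangle \cong \mathbb{Z}/18$. For the reverse inclusion, take $g = (g_{ij}) \in C = G(\mathcal{O}_F) \subseteq U(3)$, so $g \in GL_3(\mathcal{O}_E)$ with $g^*g = I$. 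The key point is that $\mathcal{O}_E$ is totally complex with all archimedean absolute values of the same type, so each row of $g$ consists of entries $g_{ij} \in \mathcal{O}_E$ with $\sum_j |\varepsilon(g_{ij})|^2 = 1$ under \emph{every} embedding $\varepsilon\colon E \hookrightarrow \mathbb{C}$ (this uses that $g$, being in $\Gamma \subseteq U(3)^3$ via the three real embeddings of $F$ and hence all embeddings of $E$, is unitary in every component — or more simply that $g^*g=I$ is an identity over $\mathcal{O}_E$ and applying any Galois conjugate preserves it since $\Phi = I$ is Galois-stable). Hence for each $(i,j)$, either $g_{ij} = 0$, or $|\varepsilon(g_{ij})| \le 1$ for all $\varepsilon$ while $\prod_\varepsilon |\varepsilon(g_{ij})| \ge 1/(\text{bounded})$; pushing this correctly, a nonzero entry $g_{ij}$ with $|\varepsilon(g_{ij})| \le 1$ for all $\varepsilon$ forces $|N_{E/\mathbb{Q}}(g_{ij})| \le 1$, hence $= 1$ since it is a nonzero algebraic integer, hence $g_{ij} \in \mathcal{O}_E^\times$ with all conjugates on the unit circle, which by Kronecker's theorem makes $g_{ij}$ a root of unity, i.e.\ $g_{ij} \in \mu(E) = \langle -\xi\rangle$. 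But then in that row the single entry $g_{ij}$ already has $|\varepsilon(g_{ij})| = 1$, forcing all other entries in the row to vanish; the same applies to columns. So $g$ is monomial with entries in $\langle -\xi\rangle$, giving $g \in \mathcal{M}_3 \ltimes \mathcal{D}$.

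\emph{Part (2).} Now I would show $C \subseteq \langle H, \mathcal{D}\rangle$. Since $C = \mathcal{M}_3 \ltimes \mathcal{D}$ and $\mathcal{D} \le \langle H, \mathcal{D}\rangle$ trivially, it suffices to produce every permutation matrix in $\langle H, \mathcal{D}\rangle$, and for this it is enough to realize a single transposition and a single $3$-cycle (or two transpositions generating $S_3$). The natural move is to compute a product $H D H$ or $H D H^{-1}$ for a well-chosen $D \in \mathcal{D}$ and observe that, because $H$ is (a scalar times) the $3\times 3$ discrete Fourier transform over $\mathbb{Z}/3$, conjugating a diagonal matrix by $H$ yields a circulant, and choosing the diagonal phases so the circulant collapses to a permutation matrix. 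Concretely I expect that a diagonal $D$ with entries among $\{1, \zeta_3, \zeta_3^2\}$ (available in $\mathcal{D}$ since $\zeta_3 = \zeta_9^3$) makes $H D H$ or $H D H^{-1}$ equal to a cyclic permutation matrix up to a scalar in $\langle -\xi\rangle$, which can then be absorbed using $\mathcal{D}$; similarly $H^2$ itself is, up to scalar and a diagonal gate, a permutation (an order-two "flip"). Combining these two permutations generates $\mathcal{M}_3 \cong S_3$, and together with $\mathcal{D}$ we recover all of $C$.

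The main obstacle is the reverse inclusion in part (1): making rigorous the step that a nonzero entry $g_{ij}$ of $g \in U_3(\mathcal{O}_E)$ with $g^*g = I$ must be a root of unity. The clean way is to note that $g^*g = I$ holds as a matrix identity over $\mathcal{O}_E$, hence applying any $\tau \in \mathrm{Gal}(E/\mathbb{Q})$ (equivalently, passing to any complex embedding) gives $\tau(g)^* \tau(g) = I$ as well, since $\Phi = I$ is fixed by Galois and complex conjugation on $E$ commutes with $\tau$; thus $\sum_j |\tau(g_{ij})|^2 = 1$ for all $\tau$ and all rows $i$. This bounds each $|\tau(g_{ij})| \le 1$ uniformly, so $|N_{E/\mathbb{Q}}(g_{ij})| \le 1$, forcing a nonzero $g_{ij}$ to be a unit with all archimedean absolute values equal to $1$, hence a root of unity by Kronecker. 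The verification in part (2) is then a short explicit matrix computation once the right $D$ is chosen, and the isomorphism types in (\ref{eq:C-M3-D}) are immediate from (\ref{eq:OFx}).
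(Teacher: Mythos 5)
Your proposal is correct and follows essentially the same route as the paper: part (1) is the same norm argument (each nonzero entry $\alpha$ has $\varepsilon_i(\overline{\alpha}\alpha)\leq 1$ at every archimedean place while $\prod_i\varepsilon_i(\overline{\alpha}\alpha)=N_{E/\mathbb{Q}}(\alpha)$ is a positive integer, forcing $\overline{\alpha}\alpha=1$), with your detour through Kronecker's theorem being a cosmetic variant of the paper's direct identification $U^{1}_{E/F}=\langle-\xi\rangle$. The explicit computation you defer in part (2) is exactly the paper's identity (\ref{eq:M3-fromCD}) with $W=\diag(1,\xi^{3},\xi^{6})$, confirming your expectation that conjugating such a diagonal by $H$ yields the permutation matrices up to signs absorbed by $\mathcal{D}$.
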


\begin{proof}
\emph{(1)} Let $g\in C$. Since $g^{*}g=I$ we have $\sum_{\ell=1}^{3}\overline{g_{k\ell}}g_{k\ell}=1$
for any fixed $k$ (and similarly when we swap the roles of $k$ and
$\ell$). Note that $\varepsilon_{i}\left(\bar{\alpha}\alpha\right)>0$
for any $\alpha\in E^{\times}$ and any $i$, and that if $0\neq\alpha\in\mathcal{O}_{E}$
and $\varepsilon_{i}\left(\bar{\alpha}\alpha\right)<1$ then there
exists $j\ne i$ such that $\varepsilon_{j}\left(\bar{\alpha}\alpha\right)>1$
by (\ref{eq:norm_emb}), since $N_{E/\mathbb{Q}}(\alpha)\in\mathbb{Z}$.
Hence for any $k$ there exists a unique $\ell$ such that $\overline{g_{k\ell}}g_{k\ell}=1$
and $\overline{g_{m\ell}}g_{m\ell}=0$ for $m\ne\ell$, namely, $g$
is a monomial matrix with coefficients in $U_{E/F}^{1}=\left\langle -\xi\right\rangle $.

\emph{(2)} This follows from the fact that for $W=\diag(1,\xi^{3},\xi^{6})$
we have 
\begin{equation}
\mathcal{M}_{3}=\left\{ 1,-H^{2},-HWH,-HW^{2}H,H^{3}WH,H^{3}W^{2}H\right\} .\qedhere\label{eq:M3-fromCD}
\end{equation}
\end{proof}
\begin{prop}
\label{prop:l>5} If $\gamma\in\Gamma$ and $\ell\left(\gamma\right)<6$
then $\gamma\in C$.
\end{prop}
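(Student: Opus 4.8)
The statement asserts that no element of $\Gamma$ has level $2$ or $4$; note that $\ell(\gamma)=0$ already gives $\gamma\in C$, since by Proposition~\ref{prop:level_dist} it means $\mathrm{dist}(\gamma v_{0},v_{0})=0$, so $\gamma\in\mathrm{Stab}_{\Gamma}(v_{0})=C$. So I would assume $\ell(\gamma)=2t$ with $t\in\{1,2\}$ and derive a contradiction. Put $M:=\pi^{t}\gamma$. Since the entries of $\gamma$ lie in $\mathbb{Z}[\xi,\tfrac{1}{3}]=\mathcal{O}_{E}[\tfrac{1}{\pi}]$ (as $\pi^{6}\sim 3$) and $\mathrm{ord}_{\pi}(\gamma)=-t$, the matrix $M$ has all entries in $\mathcal{O}_{E}$, is not divisible by $\pi$ (i.e.\ $\mathrm{ord}_{\pi}M=0$), and satisfies $M^{*}M=(\bar{\pi}\pi)^{t}I=\Pi^{t}I$ (because $\gamma^{*}\gamma=I$). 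The whole argument will rest on these properties of $M$ together with two facts that are at hand: $\mathcal{O}_{E}/\pi\mathcal{O}_{E}\cong\mathbb{F}_{3}$ (as $N_{E/\mathbb{Q}}(\pi)=N_{F/\mathbb{Q}}(\Pi)=3$), with complex conjugation acting trivially on it since $\pi$ is the unique prime of $\mathcal{O}_{E}$ above the ramified $\Pi$ and hence $\bar{\pi}\mathcal{O}_{E}=\pi\mathcal{O}_{E}$; and $\prod_{i=1}^{3}\varepsilon_{i}(\Pi)=\prod_{i=1}^{3}\varepsilon_{i}(\bar{\pi}\pi)=N_{E/\mathbb{Q}}(\pi)=3$ by~(\ref{eq:norm_emb}).

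The key step --- and, I expect, the one that takes a moment to see --- is that a single column of $M$ must already be ``full''. Since $\mathrm{ord}_{\pi}M=0$, choose $(j_{0},k_{0})$ with $\pi\nmid M_{j_{0}k_{0}}$. The $(k_{0},k_{0})$-entry of $M^{*}M=\Pi^{t}I$ reads $\sum_{j}\overline{M_{jk_{0}}}\,M_{jk_{0}}=\Pi^{t}$. Reduce this identity modulo $\pi$: using that conjugation is trivial on $\mathcal{O}_{E}/\pi\cong\mathbb{F}_{3}$ we get $\overline{M_{jk_{0}}}\,M_{jk_{0}}\equiv M_{jk_{0}}^{2}\pmod{\pi}$, which equals $0$ when $\pi\mid M_{jk_{0}}$ and $1$ when $\pi\nmid M_{jk_{0}}$; and the right-hand side is $0$ since $\pi\mid\Pi$. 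Hence $\#\{\,j:\pi\nmid M_{jk_{0}}\,\}\equiv 0\pmod{3}$; being at least $1$ (it contains $j_{0}$) and at most $3$, it equals $3$. In particular $M_{1k_{0}},M_{2k_{0}},M_{3k_{0}}$ are all nonzero. (One might first reach for the finer structure of $M\bmod\pi$ --- its rank, the isotropic cone of the standard symmetric form over $\mathbb{F}_{3}$, and so on --- but this crude count is all that is needed.)

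Finally I would close by AM--GM over the three real places and the three entries of that column. Set $a^{(i)}_{j}:=\varepsilon_{i}\bigl(\overline{M_{jk_{0}}}\,M_{jk_{0}}\bigr)>0$. For each $j$, $\prod_{i}a^{(i)}_{j}=N_{E/\mathbb{Q}}(M_{jk_{0}})$ by~(\ref{eq:norm_emb}), a positive rational integer and hence $\ge 1$. For each $i$, $\sum_{j}a^{(i)}_{j}=\varepsilon_{i}\bigl(\sum_{j}\overline{M_{jk_{0}}}\,M_{jk_{0}}\bigr)=\varepsilon_{i}(\Pi^{t})=\varepsilon_{i}(\Pi)^{t}$, so AM--GM gives $\prod_{j}a^{(i)}_{j}\le\bigl(\tfrac{1}{3}\varepsilon_{i}(\Pi)^{t}\bigr)^{3}$. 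Multiplying over $i$ and using $\prod_{i}\varepsilon_{i}(\Pi)=3$:
\[
1\ \le\ \prod_{j}N_{E/\mathbb{Q}}(M_{jk_{0}})\ =\ \prod_{i,j}a^{(i)}_{j}\ \le\ \prod_{i}\Bigl(\tfrac{1}{3}\varepsilon_{i}(\Pi)^{t}\Bigr)^{3}\ =\ \Bigl(\tfrac{3^{t}}{27}\Bigr)^{3}\ =\ 3^{\,3(t-3)},
\]
which forces $t\ge 3$, contradicting $t\in\{1,2\}$. Note the estimate is sharp precisely because $N_{F/\mathbb{Q}}(\Pi)=3<27$: at $t=3$ it degenerates to $1\le 1$, consistent with level $6$ being attained (e.g.\ by $H$, whose entries have $\pi$-valuation $-3$ as $\sqrt{-3}\sim\pi^{3}$). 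The substance of the proof is thus concentrated in the previous paragraph's column count and in this numerical coincidence.
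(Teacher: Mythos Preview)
Your proof is correct and takes a genuinely different route from the paper's. The paper also scales by $\pi^{t}$ to get $g=\pi^{t}\gamma\in M_{3}(\mathcal{O}_{E})$ with $g^{*}g=\Pi^{t}I$, but then analyses a \emph{single} entry $\alpha$: from $\varepsilon_{i}(\bar{\alpha}\alpha)\le\varepsilon_{i}(\Pi)^{t}$ it bounds $N_{E/\mathbb{Q}}(\alpha)\le 3^{t}$, deduces that the only prime dividing $\alpha$ is $\pi$, writes $\alpha=(-\xi)^{r}u_{1}^{x}u_{2}^{y}\pi^{z}$, and then uses the explicit real values of $\varepsilon_{i}(\bar{u}_{j}u_{j})$ and $\varepsilon_{i}(\Pi)$ to obtain three linear inequalities in $(x,y)$ whose only integer solution forces $z=t$ and $\alpha\in(-\xi)^{r}\pi^{t}$ (the argument is completed by a picture of the feasible region). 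Your argument instead extracts a column with all three entries $\pi$-nonzero via the mod-$\pi$ count (using that conjugation is trivial on $\mathcal{O}_{E}/\pi\cong\mathbb{F}_{3}$ and squares there are $0$ or $1$), and then closes with AM--GM across the three entries and three real places. This is both shorter and more conceptual: it avoids the explicit unit-group analysis and the numerical/graphical step entirely, and makes transparent why the threshold is exactly $6$ (the inequality $1\le 3^{3(t-3)}$ is sharp at $t=3$ precisely because $N_{F/\mathbb{Q}}(\Pi)=3$ matches the matrix size). The paper's approach, on the other hand, yields finer structural information about the entries near equality, which is closer in spirit to the monomial description of $C$ in Lemma~\ref{lem:C}.
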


\begin{proof}
Denote $\mathfrak{e}=\ell\left(\gamma\right)/2=-\ord_{\pi}\gamma$,
and let $g=\pi^{\mathfrak{e}}\gamma$, which is in $M_{3}(\mathcal{O}_{E})$
and satisfies $g^{*}g=\Pi^{\mathfrak{e}}I$. If $\mathfrak{e}=0$,
this means that $g^{*}g=I$ and thus $\gamma=g\in C$, so we can assume
from now on $\mathfrak{e}\in\left\{ 1,2\right\} $. Denoting the first
row of $g$ by $\left(\alpha,\beta,\gamma\right)$, we observe that
$\varepsilon_{i}(\overline{\alpha}\alpha)+\varepsilon_{i}(\overline{\beta}\beta)+\varepsilon_{i}(\overline{\gamma}\gamma)=\varepsilon_{i}(\Pi)^{\mathfrak{e}}$
for $1\leq i\leq3$, which forces $\varepsilon_{i}(\overline{\alpha}\alpha)\leq\varepsilon_{i}(\Pi)^{\mathfrak{e}}$.
Using (\ref{eq:norm_emb}), we obtain
\begin{equation}
N_{E/\mathbb{Q}}(\alpha)=\prod\nolimits_{i=1}^{3}\varepsilon_{i}(\overline{\alpha}\alpha)\leq\prod\nolimits_{i=1}^{3}\varepsilon_{i}(\Pi)^{\mathfrak{e}}=\prod\nolimits_{i=1}^{3}\varepsilon_{i}(\overline{\pi}\pi)^{\mathfrak{e}}=N_{E/\mathbb{Q}}(\pi)^{\mathfrak{e}}=3^{\mathfrak{e}}.\label{eq:norm_ineq}
\end{equation}
Assume now that $\alpha\neq0$, let $\mathfrak{p}\in\mathcal{O}_{E}$
be a prime factor of $\alpha$, and let $p\in\mathbb{Z}$ be the prime
below it. Since $N_{E/\mathbb{Q}}(\mathfrak{p})$ is a positive power
of $p$, $N_{E/\mathbb{Q}}(\mathfrak{p})\leq3^{\mathfrak{e}}\leq9$
forces $p\leq7$. Furthermore, for $p=2,3,5,7$ we have $N_{E/\mathbb{Q}}(\mathfrak{p})=2^{6},3,5^{6},7^{3}$
respectively (for any $\mathfrak{p}$ above $p$), so that we must
have $\mathfrak{p}=\pi$, up to associates. Thus, we can write $\alpha=(-\xi)^{r}u_{1}^{x}u_{2}^{y}\pi^{z}$,
and $3^{z}=N_{E/\mathbb{Q}}(\alpha)\leq3^{\mathfrak{e}}$ forces $z\leq\mathfrak{e}$.
From $\overline{\alpha}\alpha=(1-\sigma)^{-2x}\sigma^{2y}\Pi^{z}$
and $\overline{\alpha}\alpha+\overline{\beta}\beta+\overline{\gamma}\gamma=\Pi^{\mathfrak{e}}$
we obtain that 
\begin{equation}
\varepsilon_{i}((1-\sigma)^{-2})^{x}\varepsilon_{i}(\sigma^{2})^{y}\leq\varepsilon_{i}(\Pi^{\mathfrak{e}-z})\qquad(1\leq i\leq3),\label{eq:ineq}
\end{equation}
and (a 3-digit approximation of) the relevant real values is:\medskip{}
\\
\hspace*{\fill}%
\begin{tabular}{|c|c|c|c|c|}
\hline 
$\eta$ & $\overline{\eta}\eta$ & $\varepsilon_{1}(\overline{\eta}\eta)$ & $\varepsilon_{2}(\overline{\eta}\eta)$ & $\varepsilon_{3}(\overline{\eta}\eta)$\tabularnewline
\hline 
\hline 
$u_{1}=1+\xi$ & $(1-\sigma)^{-2}$ & 3.53 & 2.35 & 0.121\tabularnewline
\hline 
$u_{2}=1+\xi^{2}$ & $\sigma^{2}$ & 2.35 & 0.121 & 3.53\tabularnewline
\hline 
$\pi=1-\xi$ & $\Pi=2-\sigma$ & 0.468 & 1.65 & 3.88\tabularnewline
\hline 
$\pi^{2}=(1-\xi)^{2}$ & $\Pi^{2}=(2-\sigma)^{2}$ & 0.219 & 2.73 & 15.0\tabularnewline
\hline 
\end{tabular}\hspace*{\fill}

\medskip{}
For each value of $\mathfrak{e}-z$, we obtain from (\ref{eq:ineq})
(by taking log) three linear inequalities in $x$ and $y$, whose
common solutions (for $\mathfrak{e}-z\in\left\{ 1,2\right\} $) are
shown in Figure \ref{fig:ineq}. 
\begin{figure}[H]
\begin{minipage}[t]{0.49\columnwidth}%
\begin{center}
\includegraphics[width=0.75\columnwidth]{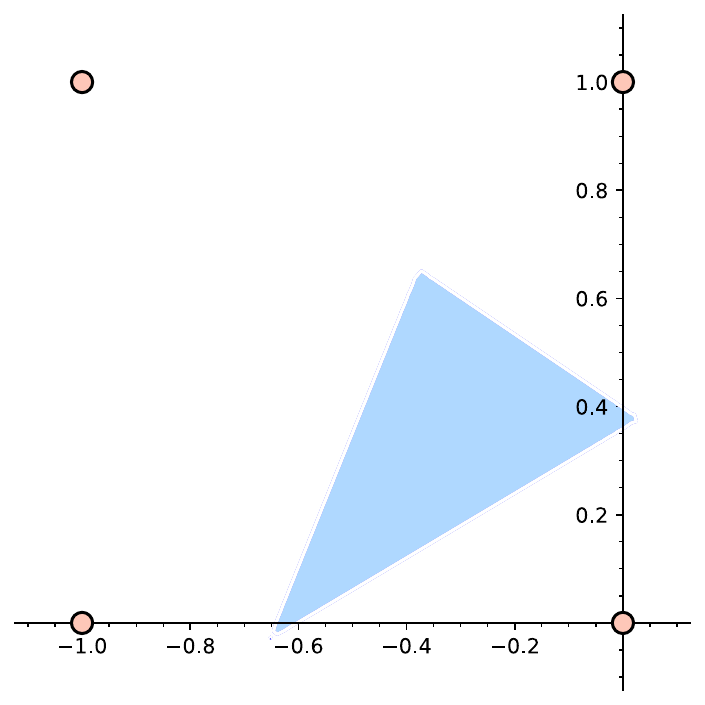}
\par\end{center}%
\end{minipage}%
\begin{minipage}[t]{0.49\columnwidth}%
\begin{center}
\includegraphics[width=0.75\columnwidth]{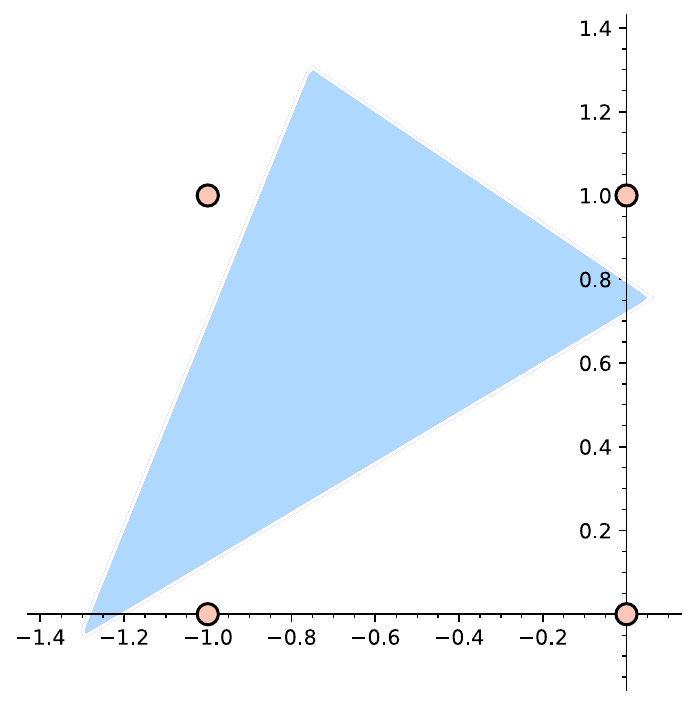}
\par\end{center}%
\end{minipage}

\caption{\protect\label{fig:ineq}The set of $\left(x,y\right)\in\mathbb{R}^{2}$
which satisfy the inequalities in (\ref{eq:ineq}) for $\mathfrak{\mathfrak{e}}-z=1$
(left) and $\mathfrak{\mathfrak{e}}-z=2$ (right). In both cases there
are no integral solutions.}
\end{figure}
For both $\mathfrak{e}-z\in\left\{ 1,2\right\} $ there is no integral
solution, so that we must have $\mathfrak{e}-z=0$. But then (\ref{eq:norm_ineq})
becomes an equality, which forces an equality in $\varepsilon_{i}(\overline{\alpha}\alpha)\leq\varepsilon_{i}(\Pi)^{\mathfrak{e}}$
for each $i$ separately, so from $\varepsilon_{1}(\overline{\alpha}\alpha)=\varepsilon_{1}(\Pi)^{\mathfrak{e}}$
we conlcude that $\beta=\gamma=0$. We obtained $(1-\sigma)^{-2x}\sigma^{2y}\Pi^{\mathfrak{e}}=\overline{\alpha}\alpha=\Pi^{\mathfrak{e}}$,
and from (\ref{eq:OFx}) we infer that $x=y=0$, i.e.\ $\alpha=\left(-\xi\right)^{r}\pi^{\mathfrak{e}}$.
We have assumed $\alpha\neq0$, but if $\alpha=0$ then the same analysis
holds for either $\beta$ or $\gamma$. The same goes for every row
and column in $g$, and in total we have obtained that $\gamma=\pi^{-\mathfrak{e}}g$
is monomial with entries in $\left\langle -\xi\right\rangle $.
\end{proof}
From Propositions \ref{prop:level_dist} and \ref{prop:l>5} we obtain:
\begin{cor}
\label{cor:dist>=00003D6}Let $v,w\in\Gamma v_{0}$. Then either $v=w$,
or $\mathrm{dist}\left(v,w\right)\geq6$.
\end{cor}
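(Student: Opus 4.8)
The plan is to reduce the statement immediately to Propositions \ref{prop:level_dist} and \ref{prop:l>5}, since together they essentially give the result with no further work. First I would write $v = gv_0$ and $w = hv_0$ with $g,h\in\Gamma$; both lie in $L_{\mathcal{T}}$ because $v_0\in L_{\mathcal{T}}$, so Proposition \ref{prop:level_dist} applies and gives $\mathrm{dist}(v,w) = \ell(h^{-1}g)$, where $h^{-1}g\in\Gamma$.

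Next I would argue by contrapositive using Proposition \ref{prop:l>5}. Suppose $\mathrm{dist}(v,w) < 6$, i.e.\ $\ell(h^{-1}g) < 6$. Then Proposition \ref{prop:l>5} forces $h^{-1}g\in C = \mathrm{Stab}_\Gamma(v_0)$, hence $h^{-1}gv_0 = v_0$, i.e.\ $gv_0 = hv_0$, which is exactly $v = w$. Equivalently: if $v\neq w$ then $h^{-1}g\notin C$, so $\ell(h^{-1}g)\geq 6$ (note also that $\ell$ takes values in $2\mathbb{N}$, so the only possibilities below $6$ are $0,2,4$, and Proposition \ref{prop:l>5} rules out $2$ and $4$ for elements outside $C$, while $0$ characterizes $C$ itself by $\mathrm{Stab}_{G_\Pi}(v_0) = K_\Pi$). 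This yields $\mathrm{dist}(v,w) = \ell(h^{-1}g)\geq 6$, completing the argument.

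There is no real obstacle here: the corollary is a direct packaging of the two preceding propositions, the genuine content having already been established in Proposition \ref{prop:l>5} (via the norm estimates and the integrality analysis summarized in Figure \ref{fig:ineq}). The only minor point to state carefully is that $v_0$ and all its $\Gamma$-translates lie in the single color class $L_{\mathcal{T}}$, which is what licenses the use of the level formula of Proposition \ref{prop:level_dist}; this is immediate since $\Gamma\subset G_\Pi$ preserves the bipartition and $v_0\in L_{\mathcal{T}}$.
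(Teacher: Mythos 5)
Your proof is correct and is exactly the paper's argument: the corollary is stated in the paper as an immediate consequence of Propositions \ref{prop:level_dist} and \ref{prop:l>5}, which is precisely the reduction you carry out. Nothing further is needed.
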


\begin{rem}
We note that \cite{Kalra2023SynthesisAirthmeticSingle} obtained analogues
results to Proposition \ref{prop:l>5} and Corollary \ref{cor:dist>=00003D6},
for the action of C+D on the projective plane.
\end{rem}

For $v\in L_{\mathcal{T}}$, let us say that two vertices $u,w$ are
in the same \emph{$v$-clan} if they are in $S_{6}(v)$, the 6-sphere
around $v$, and have a common grandfather in the 4-sphere of $v$
(in other words, $\dist(u,w)\leq4$). Each 6-sphere $S_{6}(v)$ has
size $4\cdot3^{5}=972$, and is divided to $108$ $v$-clans of size
$9$ each.
\begin{thm}
\label{thm:orbits} Let $\Lambda$ be the subgroup of $\Gamma$ generated
by $H$ and $\mathcal{D}$.
\begin{enumerate}
\item The $\Gamma$-orbit of any $v\in\Gamma v_{0}$ contains a unique member
of each $v$-clan.
\item The same holds for the $\Lambda$-orbit of $v\in\Gamma v_{0}$.
\item The $\Lambda$-orbit and $\Gamma$-orbit of $v_{0}$ are equal $\left(\Lambda v_{0}=\Gamma v_{0}\right)$.
\end{enumerate}
\end{thm}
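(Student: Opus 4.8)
The plan is to prove the three parts together by induction on the distance $\dist(v,v_0)$, using the fact (Corollary \ref{cor:dist>=00003D6}) that the orbit $\Gamma v_0$ is $6$-separated, together with the combinatorics of the $4$-regular tree $\mathcal{T}$. The key structural observation is that a $6$-sphere $S_6(v)$ around an $L_{\mathcal{T}}$-vertex $v$ contains $972$ vertices which organize into $108$ clans of $9$; since $\Gamma v_0$ is $6$-separated, distinct members of $\Gamma v_0 \cap (S_6(v)\cup\{v\})$ other than $v$ itself all lie in $S_6(v)$ and pairwise lie at distance $\geq 6$, hence in distinct clans (clan-mates are at distance $\leq 4$). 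So ``a unique member of each $v$-clan'' is the best one could possibly hope for, and the content of part (1) is that this maximum is attained.

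First I would set up the base case: take $v = v_0$, with $\Stab_\Gamma(v_0)=C = \mathcal{M}_3\ltimes\mathcal{D}$ by Lemma \ref{lem:C}(1), and note $\Stab_\Lambda(v_0)\supseteq C$ by Lemma \ref{lem:C}(2), so in fact $\Stab_\Gamma(v_0)=\Stab_\Lambda(v_0)=C$. Now I need an explicit element of $\Gamma$ (better, of $\Lambda$) of level exactly $6$, i.e.\ moving $v_0$ to a chosen vertex of $S_6(v_0)$: the natural candidate is a word involving $T$ — note $T = \diag(\zeta_9,1,\zeta_9^{-1})$ has $\ell(T)$ computable from $\ord_\pi$ of its entries, but being diagonal and unitary over $\mathcal{O}_E$ it has level $0$; the level-$6$ motion must come from $H$-conjugates of diagonal gates, e.g.\ something like $H^{-1}TH$ or $HDH$ for suitable $D\in\mathcal{D}$. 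I would compute $\ell$ of such a candidate using $\pi^6\sim 3$ and the level map (\ref{eq:l}), exhibiting a single $\Lambda$-element of level $6$. Then, using that $C = \Stab(v_0)$ acts on $S_6(v_0)$ and that the $C$-action permutes the $108$ clans, I would count orbits: $C$ acts on the $4$ edges at $v_0$ (via $S_3$ on three ``non-$v_0$-side'' directions — actually all $4$, but the local structure at an $L$-vertex reflects the $GL_3$ building geometry), and I expect $C$ to act on the $108$ clans with few orbits; combined with the one explicit level-$6$ element and the $\mathcal{D}$-twists (which multiply entries by $18$th roots of unity and should move within and between clans appropriately), I would argue the $\Lambda$-orbit hits every clan. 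The reduction labeled (\ref{eq:reduct}) in the introduction — ``some good fortune'' — presumably is exactly the arithmetic identity that makes this clan-covering work out, and I would look for a clean congruence modulo $\pi$ or $\Pi$ that classifies the $108$ clans.

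For the inductive step: given $v\in\Gamma v_0$ at distance $2k\geq 6$ from $v_0$, write $v = \gamma v_0$, and consider a vertex $w\in S_6(v)\cap \Gamma v_0$. I want to show $w\in\Lambda v_0$ assuming (inductively) that all vertices of $\Gamma v_0$ closer to $v_0$ than $v$ lie in $\Lambda v_0$ — but the subtlety is that $w$ need not be closer to $v_0$. The right inductive scheme is therefore on $\dist(v_0,v)$ with the statement ``$\Lambda v_0 \supseteq \Gamma v_0 \cap B(v_0, 2k)$'': given this, a vertex $w$ at distance $2k+\text{(something)}$ has a point of $\Gamma v_0$ on its geodesic to $v_0$ at distance $\leq 2k$ that is within $6$ of... — here I'd need that consecutive $\Gamma v_0$-points along geodesics are at distance exactly $6$, which follows if $\Gamma v_0$ is ``$6$-dense along its own geodesics,'' itself a consequence of part (1) applied at the intermediate vertex. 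Concretely: let $v'$ be the vertex of $\Gamma v_0$ on the geodesic $[v_0,w]$ with $\dist(v',w)=6$ (exists by the $6$-separation and a connectivity/Serre-tree argument once part (1) is known at all vertices of $\Gamma v_0 \cap B(v_0,2k)$); by induction $v'=\lambda v_0$ for some $\lambda\in\Lambda$; then $\lambda^{-1}w \in S_6(v_0)\cap\Gamma v_0$, so by the base case $\lambda^{-1}w\in\Lambda v_0$, hence $w\in\Lambda v_0$. This bootstraps parts (1) and (2) simultaneously, and part (3) is the special case $v=v_0$ combined with the fact that $\Gamma v_0$ is connected through its $6$-separated structure (every vertex of $\Gamma v_0$ is reached by a chain of $6$-steps from $v_0$, since $\Gamma$ is generated by $C$ together with a level-$6$ element — this last point needs the explicit generation, or can be extracted from Theorem \ref{thm:main}'s synthesis algorithm, though logically one should prove \ref{thm:orbits} first and deduce \ref{thm:main}).

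The main obstacle, I expect, is the base-case clan count: showing that the finitely many elements $HDH$, $HDH D'$, etc.\ (words of length $\leq$ a few in $H,\mathcal{D}$) of level exactly $6$ already surject onto the $108$ clans of $S_6(v_0)$. This is a finite but delicate computation: one must identify the clan of a level-$6$ matrix $\gamma$ from its reduction modulo $\pi^2$ (or from the flag it determines in $\mathbb{F}_3^3$-type data coming from the ramified building), show the $C$-action on clans has manageably many orbits, and check that the $\mathcal{D}$-twists and a single $H$-conjugation generate enough clan-permutations to be transitive on clans — this is where the ``good fortune'' of (\ref{eq:reduct}) must be invoked, and without seeing that identity I would anticipate spending most of the effort reverse-engineering which arithmetic coincidence (likely $1+\xi+\xi^2+\dots$ telescoping, or $u_1 u_2 \equiv$ unit modulo $\pi^k$) makes the map surjective rather than merely injective.
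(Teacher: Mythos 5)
Your skeleton is right --- the $6$-separation of $\Gamma v_{0}$ (Corollary \ref{cor:dist>=00003D6}) forces at most one orbit point per clan, one needs a surjectivity argument onto the $108$ clans, and part (3) should follow by a descent along the tree --- but both of the critical steps are left with genuine gaps, and in each case you miss the specific mechanism the paper uses.

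For the base case, no delicate computation is needed. First, $H$ itself is the level-$6$ element: its entries are units divided by $\sqrt{-3}$, and $\operatorname{ord}_{\pi}(\sqrt{-3})=3$, so $\ell(H)=6$ and $Hv_{0}\in S_{6}(v_{0})$ (your candidates $H^{-1}TH$, $HDH$ are unnecessary detours; in fact $\ell(H^{-1}TH)=8$). Second, the surjectivity is a one-line orbit--stabilizer count: $\mathrm{Stab}_{C}(Hv_{0})=C\cap HCH^{-1}$ has order $18^{2}$, so $|C\cdot Hv_{0}|=3!\cdot18^{3}/18^{2}=108$, which is exactly the number of clans; combined with the injectivity into clans this is a pigeonhole, with no need to identify clans via reductions mod $\pi^{2}$ or to analyze the $C$-action on clans. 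Your version of this step is left at ``I would argue'' and ``I expect,'' which is where the actual content lies.

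For the inductive step of (3), your concrete scheme requires a point of $\Gamma v_{0}$ lying \emph{exactly on} the geodesic $[v_{0},w]$ at distance $6$ from $w$. This does not follow from part (1) and is false in general: part (1) only guarantees an orbit point in the \emph{clan} of the geodesic vertex, i.e.\ within distance $4$ of it, not equal to it. The orbit is $6$-separated but not ``$6$-dense along its own geodesics.'' The correct step --- and this is precisely the ``good fortune'' of (\ref{eq:reduct}), which you instead go looking for as an arithmetic congruence in the base case --- is that if $w$ is the vertex at distance $6$ from $v$ on the geodesic to $v_{0}$ and $u$ is the unique $\Lambda v$-point in the $v$-clan of $w$, then $\dist(u,v_{0})\leq\dist(u,w)+\dist(w,v_{0})\leq4+(n-6)=n-2$: the clan diameter $4$ is strictly smaller than the step size $6$, so the distance to $v_{0}$ strictly decreases even though $u$ need not lie on the geodesic. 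With that replacement your induction closes; as written, it does not.
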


\begin{proof}
We prove (1) and (2) simultaneously: note that $\ell\left(H\right)=6$,
so that $Hv_{0}\in S_{6}(v_{0})$. As $C$ fixes $v_{0}$, it acts
on $S_{6}(v_{0})$, and the stabilizer of $Hv_{0}$ is $Stab_{C}(Hv_{0})=C\cap HCH^{-1}$,
which has size $18^{2}$. Thus, the $C$-orbit of $Hv_{0}$ is of
size 
\[
\frac{|C|}{|C\cap HCH^{-1}|}=\frac{3!\cdot18^{3}}{18^{2}}=108,
\]
and as $\{H\}\cup C\subseteq\Lambda\subseteq\Gamma$, both $\Lambda$
and $\Gamma$ take $v_{0}$ to at least $108$ vertices in $S_{6}(v_{0})$.
On the other hand, by Corollary \ref{cor:dist>=00003D6}, $\Gamma$
(and thus also $\Lambda$) cannot take $v_{0}$ to two members of
the same $v_{0}$-clan, since these have distance $\leq4$ between
them. Since there are $108$ $v_{0}$-clans, $\Gamma$ and $\Lambda$
must obtain one vertex from each. By translation, the same holds for
a general $v$ in the $\Gamma$-orbit of $v_{0}$.

(3) For $v\in\Gamma v_{0}$, we want to show that $v\in\Lambda v_{0}$,
and we proceed by induction on $n=\dist(v,v_{0})$, where $n=0$ is
clear. Assume that $n>0$, which by Cor.\ \ref{cor:dist>=00003D6}
implies $n\geq6$. Let $w\in S_{6}(v)$ be the vertex on the path
from $v$ to $v_{0}$, so that $\dist(w,v_{0})=n-6$:\\
\hspace*{\fill}\includegraphics[scale=2]{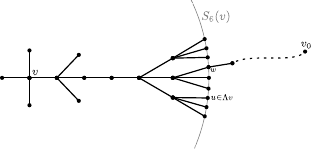}\hspace*{\fill}\\
Let $u$ be the unique member of the $v$-clan of $w$ which is in
the $\Lambda$-orbit of $v$, so that 
\begin{equation}
\dist(u,v_{0})\leq\dist(u,w)+\dist(w,v_{0})\leq4+n-6=n-2.\label{eq:reduct}
\end{equation}
Since $u\in\Lambda v\subseteq\Gamma v=\Gamma v_{0}$, we can use the
induction hypothesis to conclude that $u\in\Lambda v_{0}$, and thus
also $v\in\Lambda v_{0}$.
\end{proof}
We can now prove our first main theorem:
\begin{thm}
\label{thm:main}
\begin{enumerate}
\item The C+D gate set $\left\{ H\right\} \cup\mathcal{D}$ generates $\Gamma$.
\item There is an efficient algorithm to solve the word problem in $\Gamma$
w.r.t.\ $\left\{ H\right\} \cup\mathcal{D}$.
\end{enumerate}
\end{thm}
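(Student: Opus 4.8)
The plan is to deduce both parts from Theorem \ref{thm:orbits}(3), which tells us that the group $\Lambda=\langle H,\mathcal{D}\rangle$ and $\Gamma$ have the same orbit of $v_0$ in the tree $\mathcal{T}$. First I would record the standard observation that, because $G_\Pi$ acts transitively on the edges (hence on $L_\mathcal{T}$), the stabilizer $C=\mathrm{Stab}_\Gamma(v_0)=G(\mathcal{O}_F)$ together with coset representatives for $\Gamma v_0$ generates $\Gamma$: concretely, for any $\gamma\in\Gamma$ the vertex $\gamma v_0$ lies in $\Gamma v_0=\Lambda v_0$, so there is $\lambda\in\Lambda$ with $\lambda v_0=\gamma v_0$, whence $\lambda^{-1}\gamma\in\mathrm{Stab}_\Gamma(v_0)=C$. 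Thus $\Gamma=\Lambda\cdot C$. But by Lemma \ref{lem:C}(2) we have $C\subseteq\langle H,\mathcal{D}\rangle=\Lambda$, so $\Gamma=\Lambda$, which is part (1).

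For part (2) — the algorithmic statement — I would make the above argument effective by turning it into a recursion on the tree distance $n=\dist(\gamma v_0,v_0)=\ell(\gamma)$ (Proposition \ref{prop:level_dist}), mirroring the proof of Theorem \ref{thm:orbits}(3). Given $\gamma\in\Gamma$, compute $\ell(\gamma)$ from the $\pi$-valuations of its entries; if $\ell(\gamma)=0$ then $\gamma\in C$ and we write it explicitly as a monomial matrix via (\ref{eq:C-M3-D}) and (\ref{eq:M3-fromCD}). Otherwise $n\geq 6$ by Corollary \ref{cor:dist>=00003D6}; let $w$ be the vertex at distance $6$ from $v_0$ on the geodesic to $\gamma v_0$. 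The key computational step is to locate, among the $108$ cosets $cHv_0$ ($c$ ranging over coset representatives of $C\cap HCH^{-1}$ in $C$ — a fixed finite list that can be precomputed once), the one whose $v_0$-clan contains $w$; concretely, we search for a word $\mu$ of the form ($C$-element)$\cdot H\cdot$($C$-element) with $\dist(\mu^{-1}\gamma v_0,v_0)=\dist(w, v_0) + \dist(\ldots)\le n-2$, using (\ref{eq:reduct}). Then $\mu^{-1}\gamma$ has strictly smaller level, and we recurse. Finally, concatenating the words produced at each step, and using that the $C$-factors are short words in $H$ and $\mathcal{D}$, gives a word of length $O(n)=O(\ell(\gamma))$ in the C+D gates; since $\ell(\gamma)$ is linear in the number of digits needed to specify the entries of $\gamma$ in $\mathbb{Z}[\zeta_9,\tfrac13]$, this is an efficient algorithm.

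The main obstacle is the "descent" step: given the target vertex $\gamma v_0$ (or its projection $w$ to $S_6(v_0)$), actually identifying which of the $108$ clans $w$ lies in, and producing the corresponding $\Lambda$-word. This requires a concrete bijection between the $108$ $v_0$-clans in $S_6(v_0)$ and the $108$ cosets in $C/(C\cap HCH^{-1})$, together with a way to read off, from the matrix $\gamma$ (or from its reduction modulo a suitable power of $\pi$, since the first $6$ steps of the geodesic are determined by $\gamma \bmod \pi^{?}$), which coset applies. I expect this to be a finite but slightly delicate lookup: one must verify that the clan of $w$ is determined by local tree data near $v_0$ that is computable from the $\pi$-adic expansion of $\gamma$, and that the chosen representative indeed reduces the distance by at least $2$ as guaranteed by Theorem \ref{thm:orbits}(2) and (\ref{eq:reduct}). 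Everything else — the base case, the termination, and the linear length bound — is then routine.
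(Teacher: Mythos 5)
Your proposal is correct and follows essentially the same route as the paper: part (1) via the orbit-plus-stabilizer principle using Theorem \ref{thm:orbits}(3) and Lemma \ref{lem:C}(2), and part (2) via the same distance-reducing descent on the tree, resolving the ``which clan'' step exactly as the paper does, by trying all $108$ coset representatives and checking which one drops the level by $2$. The only refinement you miss is the paper's observation that the $\mathcal{D}$-orbit of $Hv_{0}$ already has size $108$, so the conjugating $C$-elements can be taken directly from $\mathcal{D}$, slightly shortening the output words; this does not affect correctness.
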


\begin{proof}
(1) This follows from Theorem \ref{thm:orbits}(3) and Lemma \ref{lem:C}(2)
by a general principle: If $G\curvearrowright X$, and $H\leq G$
is such that $Gx=Hx$ and $\mathrm{Stab}_{G}\left(x\right)\subseteq H$
for some $x\in X$, then $G=H$. Indeed, for $g\in G$ there must
exist $h\in H$ such that $gx=hx$, hence $h^{-1}g\in\mathrm{Stab}_{G}\left(x\right)$,
and therefore $g=hh^{-1}g\in H$. The case at hand is that of $G=\Gamma$,
$X=V_{\mathcal{T}}$, $H=\Lambda$ and $x=v_{0}$.

(2) Let $\gamma\in\Gamma$. By the proof of Theorem \ref{thm:orbits},
there exists $c_{1}\in C$ such that $\dist(\gamma c_{1}Hv_{0},v_{0})\leq\dist(\gamma v_{0},v_{0})-2$.
Furthermore, computing that $Stab_{\mathcal{D}}(Hv_{0})=\mathcal{D}\cap HCH^{-1}=\left\langle -\xi I,\diag(1,\xi^{3},\xi^{9})\right\rangle ,$we
obtain that the $\mathcal{D}$-orbit of $Hv_{0}$ is of size $|\mathcal{D}|/(18\cdot3)=108$.
Thus, it coincides with the $C$-orbit of $Hv_{0}$, so we can assume
that $c_{1}$ is in fact in $\mathcal{D}$. We can continue in this
manner to find $c_{2},\ldots,c_{r}\in\mathcal{D}$ such that 
\[
\dist(\gamma c_{1}Hc_{2}H\ldots c_{j}Hv_{0},v_{0})\leq\dist(\gamma c_{1}Hc_{2}H\ldots c_{j-1}Hv_{0},v_{0})-2\qquad\left(\forall1\leq j\leq r\right).
\]
In particular, $\dist(\gamma c_{1}Hc_{2}H\ldots c_{r}Hv_{0},v_{0})=0$,
so that $c_{r+1}:=\gamma c_{1}Hc_{2}H\ldots c_{r}H\in C$. We obtain
$\gamma=c_{r+1}H^{-1}c_{r}^{-1}H^{-1}c_{r-1}^{-1}\ldots H^{-1}c_{1}^{-1}$,
and $c_{r+1}$ can be expressed using C+D via (\ref{eq:C-M3-D}),
(\ref{eq:M3-fromCD}). To actually find the $c_{j}$ which shortens
the distance to $v_{0}$, one can choose $108$ representatives for
$\nicefrac{\mathcal{D}}{\mathcal{D}\cap HCH^{-1}}$, and try each
of them. However, there is also a way to do this in a single step
using the $p$-adic Iwasawa decomposition -- this is described in
\cite[§3.3]{Evra2018RamanujancomplexesGolden}.
\end{proof}

\section{Thin groups \protect\label{sec:Amalgamation-thin}}

In this section we move back to the setting of general S-arithmetic
subgroups of $PU(3)$, and study the question of Thinness. We present
criteria for Zariski denseness and for being of infinite index, and
show in §\ref{subsec:-Thinness-of-Clifford+T} that both apply to
the C+T gates, so that they generate a thin matrix group in $PU(3)$,
and in fact, even in $PU(3)^{3}$.

Recall that $\Sigma\subseteq PU(3)$ is called a universal gate set
if and only if the group $\Delta=\left\langle \Sigma\right\rangle $
is dense in $PU(3)$. When the entries of $\Sigma$ are in $\overline{\mathbb{Q}}$
, this is equivalent to $\Delta$ (embedded in $PU(3)$ in a fixed
manner) being Zariski dense in $PGL_{3}(\mathbb{C})$ (see \cite{bourgain2012spectral}).
A Zariski dense subgroup $\Delta$ of an S-arithmetic group is called
a \emph{thin matrix group }if its index is infinite (for more on thin
matrix groups see \cite{breuillard2014thin}). 

The next Proposition which relies on the work of Weigel \cite{Weigel1995certainclassFrattini},
gives a useful criterion for proving that a subgroup of a $\Pi$-arithmetic
group is Zariski dense in $G'=SU_{3}^{E,\Phi}$. As we have noted,
the latter is equivalent to topological density in $G'\left(F_{\varepsilon_{i}}\right)\cong SU\left(3\right)$,
for any $i=1,\ldots,d$, separately. What we shall prove is the stronger
property of denseness in the product group:
\begin{prop}
\label{prop:Weigel}Let $\ell\geq5$ be a rational prime, unramified
in $E$, with prime decomposition $\left(\ell\right)=\mathfrak{p}_{1}\cdot\ldots\cdot\mathfrak{p}_{f}$
in $\mathcal{O}$, such that $G'\left(F_{\mathfrak{p}_{i}}\right)$
is unramified for every $i$. If $\Pi$ is an $\mathcal{O}$-prime
coprime to $\ell$, and $\Delta\leq G'\big(\mathcal{O}\big[\frac{1}{\Pi}\big]\big)$
satisfies $\Delta\mod{\ell}=G'\left(\mathcal{O}/\ell\right)$, then
$\Delta$ is dense in $\prod_{i=1}^{d}G'\left(F_{\varepsilon_{i}}\right)\cong SU\left(3\right)^{d}$
(both in Zariski and archimedean topology).
\end{prop}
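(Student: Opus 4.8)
The plan is to deduce global density from a strong approximation / congruence-image statement, using the fact that for a semisimple simply connected group the map to the product of archimedean points has dense image once the group is Zariski dense in each factor. Since $G' = SU_3^{E,\Phi}$ is simply connected, the group $\mathbf{H} = \mathrm{Res}_{F/\mathbb{Q}} G'$ is also simply connected and $\mathbf{H}(\mathbb{R}) \cong SU(3)^d$ is compact. First I would recall that to prove density of $\Delta$ in $SU(3)^d$ it suffices, by a standard argument (the closure of $\Delta$ is a closed subgroup of a compact Lie group, hence a Lie subgroup, and its Lie algebra is an ideal; so I must rule out all proper closed subgroups), to show two things: (a) $\Delta$ is Zariski-dense in $\mathbf{H}$, equivalently in $G'(\mathbb{C})$ under each of the $d$ embeddings simultaneously, and (b) $\Delta$ is \emph{infinite} in a way that is not concentrated in a proper subproduct. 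Actually the cleanest route is: the Zariski closure of $\Delta$ inside $\mathbf{H}$ is a $\mathbb{Q}$-subgroup; the hypothesis $\Delta \bmod \ell = G'(\mathcal{O}/\ell)$ forces this closure to surject onto the full reduction mod every $\mathfrak{p}_i \mid \ell$, which (because $G'$ is almost simple over $F$ and $\ell \geq 5$ so that $G'(\mathcal{O}/\mathfrak{p}_i)$ has no small-index subgroups other than via the center) pins the closure down to be all of $\mathbf{H}$ — a Goursat-type argument rules out the closure being a proper subproduct or a diagonal, since the residue fields $\mathcal{O}/\mathfrak{p}_i$ have different sizes $\ell^{f_i}$ when the $\mathfrak{p}_i$ have distinct residue degrees, and the simple groups $PSU_3$ or $PSL_3$ over fields of different sizes are non-isomorphic.

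The key technical input — and this is where Weigel's work enters — is to upgrade the congruence image mod $\ell$ to congruence images mod $\ell^n$ for all $n$, i.e.\ to show $\Delta \twoheadrightarrow G'(\mathcal{O}/\ell^n)$ for all $n \geq 1$. This is a Frattini argument: Weigel \cite{Weigel1995certainclassFrattini} shows that for $\ell \geq 5$ and $G'$ a simply connected group with good reduction at the primes above $\ell$, the kernel of $G'(\mathbb{Z}_\ell\text{-points}) \to G'(\mathcal{O}/\ell)$ — more precisely the pro-$\ell$ congruence subgroup — is a \emph{Frattini} subgroup of the profinite completion relative to the $\ell$-adic picture, meaning any subset generating modulo it generates everything topologically. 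Concretely: a closed subgroup $D$ of $\prod_i G'(\mathcal{O}_{\mathfrak{p}_i})$ that surjects onto $\prod_i G'(\mathcal{O}/\mathfrak{p}_i)$ must be the whole thing, because the congruence kernel is pro-$\ell$ and contained in the Frattini subgroup. I would therefore take $D = \overline{\Delta}$ (closure in the $\ell$-adic congruence topology), observe $D$ surjects mod $\ell$ by hypothesis, invoke Weigel to get $D = \prod_i G'(\mathcal{O}_{\mathfrak{p}_i})$, and conclude in particular $\Delta$ is dense $\ell$-adically — but more to the point, its Zariski closure contains a group mapping onto $G'(\mathbb{Z}_\ell)$-points, hence (being a $\mathbb{Q}$-group containing a finite-index-in-every-congruence-quotient subgroup) must be all of $\mathbf{H}$.

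From Zariski density of $\Delta$ in $\mathbf{H} = \mathrm{Res}_{F/\mathbb{Q}} G'$ I then pass to archimedean density: let $\overline{\Delta}^{\mathrm{arch}}$ be the closure of $\Delta$ in the compact Lie group $\mathbf{H}(\mathbb{R}) = SU(3)^d$. It is a closed Lie subgroup; its identity component is normal, so its Lie algebra is a sum of some of the $d$ copies of $\mathfrak{su}(3)$; and $\overline{\Delta}^{\mathrm{arch}}$ is Zariski-dense in $\mathbf{H}(\mathbb{C})$ because $\Delta$ is. A proper closed subgroup of $SU(3)^d$ that is Zariski-dense in $SL_3(\mathbb{C})^d$ cannot exist: projecting to any pair of coordinates, a Zariski-dense closed subgroup of $SU(3)^2$ is either all of $SU(3)^2$ or the graph of an automorphism $SU(3) \to SU(3)$; but every automorphism of $SU(3)$ is either inner or the complex-conjugation/transpose-inverse outer automorphism, and a graph subgroup is not Zariski-dense in $SL_3 \times SL_3$ unless the automorphism is trivial, which is excluded by $\Delta$ being infinite index / Zariski-dense in the full product (here one uses that the $d$ embeddings $\varepsilon_i$ are pairwise non-conjugate, so no algebraic relation between coordinates is forced). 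Hence $\overline{\Delta}^{\mathrm{arch}} = SU(3)^d$, giving both archimedean and Zariski density as claimed.

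The main obstacle I anticipate is the bookkeeping around non-split primes and ramification: the proposition allows $\Pi$ to be ramified and requires only that $G'(F_{\mathfrak{p}_i})$ be unramified at the primes \emph{above $\ell$}, and one must be careful that $\mathcal{O}/\ell$ is a product of fields $\mathcal{O}/\mathfrak{p}_i$ with possibly different residue degrees, so that $G'(\mathcal{O}/\ell) = \prod_i G'(\mathcal{O}/\mathfrak{p}_i)$ is a product of (mostly) quasisimple groups of Lie type over different finite fields — the Goursat argument ruling out proper subdirect products needs $\ell \geq 5$ (so these are actually quasisimple with known small subgroup structure, via the classification / results of Steinberg) and needs the residue-degree/field-size discrepancy to break possible isomorphisms between factors. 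The other delicate point is correctly citing Weigel's Frattini result in the form needed (for both split-type $A_2$, i.e.\ $SL_3$, and the quasi-split unitary $^2\!A_2$ reductions, depending on whether $\mathfrak{p}_i$ splits, is inert, or ramifies in $E$), and checking that $\ell \geq 5$ and unramifiedness are exactly the hypotheses under which the $\ell$-adic congruence subgroup is Frattini. Everything else — the Lie-algebra-ideal argument over $\mathbb{R}$ and the passage Zariski $\Rightarrow$ archimedean — is routine for simply connected groups with compact real points.
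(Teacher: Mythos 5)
Your proposal is correct and follows essentially the same route as the paper: Weil restriction to $\mathbb{Q}$, Weigel's Frattini property of the congruence kernel above $\ell$ (with the inert case supplied by Dettweiler--Reiter, as the paper notes) to promote the mod-$\ell$ surjection to $\ell$-adic and hence Zariski density, and then compact-Lie-group structure theory for archimedean density, where the paper invokes weak approximation plus Tannaka's theorem in place of your Lie-algebra/Goursat argument. One non-load-bearing slip: the finite-level Goursat step invoking distinct residue degrees of the $\mathfrak{p}_i$ is unnecessary, since the hypothesis already gives surjection onto the full product $G'(\mathcal{O}/\ell)=\prod_i G'(\mathcal{O}/\mathfrak{p}_i)$, and it would in any case fail when $\ell$ splits into primes of equal residue degree.
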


\begin{proof}
Let $\mathcal{G}=\mathrm{Res}_{\mathcal{O}/\mathbb{Z}}G'$ be Weil's
restriction of scalars of $G'$ from $\mathcal{O}$ to $\mathbb{Z}$,
namely, $\mathcal{G}$ is the group scheme over $\mathbb{Z}$ defined
by $\mathcal{G}\left(A\right)=G'\left(A\otimes_{\mathbb{Z}}\mathcal{O}\right)$
for any $\mathbb{Z}$-algebra $A$. In particular, $\mathcal{G}\left(\mathbb{R}\right)=G'\left(\mathbb{R}\otimes_{\mathbb{Z}}\mathcal{O}\right)=\prod_{i=1}^{d}G'\left(F_{\varepsilon_{i}}\right)$,
$\mathcal{G}\left(\mathbb{Z}_{\ell}\right)=G'\left(\mathbb{Z}_{\ell}\otimes_{\mathbb{Z}}\mathcal{O}\right)=\prod_{i=1}^{f}G'\left(\mathcal{O}_{\mathfrak{p}_{i}}\right)$,
$\mathcal{G}\left(\mathbb{Z}/\ell\right)=G'\left(\mathcal{O}/\ell\right)=\prod_{i=1}^{f}G'(\mathcal{O}/\mathfrak{p}_{i})$
and $\Delta\leq G'\big(\mathcal{O}\big[\frac{1}{\Pi}\big]\big)\leq G'\left(\mathcal{O}\big[\tfrac{1}{p}\big]\right)=\mathcal{G}\big(\mathbb{Z}\big[\frac{1}{p}\big]\big)$,
where $p$ is the rational prime below $\Pi$. By assumption, $\Delta$
satisfies $\Delta\mod{\ell}=\mathcal{G}\left(\mathbb{Z}/\ell\right)$,
and we shall prove that $\Delta$ is Zariski dense in $\mathcal{G}(\mathbb{Q})$.
Since $\ell$-adic topology is finer than the Zariski topology it
suffices to prove the density in the $\ell$-adic topology. Let $\widehat{\Delta}\leq\mathcal{G}\left(\mathbb{Z}_{\ell}\right)$
be the completion of $\Delta$ in the $\ell$-adic topology, so that
$\widehat{\Delta}\mod{\ell}=\mathcal{G}\left(\mathbb{Z}/\ell\right)$,
and we want to show that $\widehat{\Delta}=\mathcal{G}\left(\mathbb{Z}_{\ell}\right)$,
as the latter is Zariski dense in $\mathcal{G}(\mathbb{Q}_{\ell})$
(see \cite{Lubotzky1999GenerationSLn}). By the work of Wiegel on
Frattini extensions \cite{Weigel1995certainclassFrattini}, for $r_{i}\colon G'\left(\mathcal{O}_{\mathfrak{p}_{i}}\right)\to G'\left(\mathcal{O}/\mathfrak{p}_{i}\right)\colon g\mapsto g\negmedspace\mod{\mathfrak{p}}_{i}$
we know that $\ker r_{i}$ is contained in the Frattini subgroup of
$G'\left(\mathcal{O}_{\mathfrak{p}_{i}}\right)$ (for $\mathfrak{p_{i}}$
split see \cite[Cor.\ A]{Weigel1995certainclassFrattini}, and for
$\mathfrak{p}_{i}$ inert see \cite[Lem.\ 3.7]{Dettweiler2006threedimensionalGalois}).
Since the Frattini of the product of groups is the product of their
Frattinis it follows that $\ker r_{\ell}$, where $r_{\ell}\colon\mathcal{G}(\mathbb{Z}_{\ell})\to\mathcal{G}(\mathbb{Z}/\ell)$,
$r_{\ell}(g)=g\mod{\ell}$, is contained in the Frattini subgroup
of $\mathcal{G}(\mathbb{Z}_{\ell})$. By \cite[Cor.\ 1]{Gruenberg1967Projectiveprofinitegroups},
if $H\leq\mathcal{G}\left(\mathbb{Z}_{\ell}\right)$ is an open subgroup
such that $H\cdot\Phi=\mathcal{G}\left(\mathbb{Z}_{\ell}\right)$,
where $\Phi$ is the Frattini subgroup of $\mathcal{G}\left(\mathbb{Z}_{\ell}\right)$,
then $H=\mathcal{G}\left(\mathbb{Z}_{\ell}\right)$. Since $\widehat{\Delta}\leq\mathcal{G}\left(\mathbb{Z}_{\ell}\right)$
is an open subgroup, this shows that $\widehat{\Delta}\mod{\ell}=\mathcal{G}\left(\mathbb{Z}/\ell\right)$
implies $\widehat{\Delta}=\mathcal{G}\left(\mathbb{Z}_{\ell}\right)$,
as claimed. It follows that $\Delta$ is Zariski dense in $\mathcal{G}(\mathbb{Q})$,
and since $\mathcal{G}(\mathbb{Q})$ is archimedean-dense in $\mathcal{G}(\mathbb{R})$
by weak approximation \cite{Platonov1994Algebraicgroupsand}, it is
also Zariski-dense in it, so that $\Delta$ is Zariski dense in $\mathcal{G}(\mathbb{R})$.
Finally, let $\overline{\Delta}$ be the archimedean closure of $\Delta$
in $\mathcal{G}(\mathbb{R})$; as $\overline{\Delta}$ is a compact
Lie subgroup of $\mathcal{G}(\mathbb{R})$, it is algebraic by the
classical work of Tannaka \cite{Tannaka1939UberdenDualitatssatz},
so by the Zariski denseness we have $\overline{\Delta}=\mathcal{G}(\mathbb{R})$.
\end{proof}
The next Proposition gives us a simple criterion for certain subgroups
of groups acting on trees to be of infinite index. Let $G$ be a
group which acts on an infinite tree $\mathcal{T}$ without inverting
edges\footnote{This is true for any subgroup of $p$-adic $U_{3}$, but can be arranged
in general by passing to the barycentric subdivision of the tree.}, and let $Y=\mathcal{T}/G$ be the quotient graph. Choose a spanning
tree $T\subseteq Y$, and a section $j\colon Y\rightarrow\mathcal{T}$
of the quotient map $\mathcal{T}\rightarrow Y$, such that $j|_{T}$
is an isomorphism. This $j$ is not necessarily a graph morphism,
as $j(\{v,w\})$ can be different from $\{j(v),j(w)\}$ for vertices
$v,w$: every edge $e$ in $Y$ which is not in $T$ is of the form
$\{v,w\}$ with $j(e)=\{j(v),g_{e}j(w)\}$ for some $g_{e}\in G$.
The same holds for $e$ which is in $T$ as well, with $g_{e}=1$.
Denoting $G_{v}=\Stab_{G}\left(j(v)\right)$, it is not hard to see
that $G$ is generated by 
\begin{equation}
\left\{ g_{e}\,\middle|\,e\in E_{Y\backslash T}\right\} \cup\bigcup\nolimits_{v\in V_{Y}}G_{v},\label{eq:BS-gens}
\end{equation}
and Bass-Serre theory (see \cite{serre1980trees}) goes further to
give an explicit presentation of $G$. It defines a graph of groups
$\left(G,Y\right)$, where $G_{v}$ is as above, $G_{e}=\Stab_{G}(j(e))$
for every $e\in E_{Y}$, and whenever $j(e)=\{j(v),g_{e}j(w)\}$ there
are inclusion maps $G_{e}\subseteq G_{v}$ (naturally) and $G_{e}\hookrightarrow G_{w}$
via $x\mapsto g_{e}^{-1}xg_{e}$. By \cite[§5.4]{serre1980trees},
one then has $G=\pi_{1}\left(G,Y,T\right)$.
\begin{prop}
\label{prop:B-S}In the settings above, let $S_{v}\subseteq G_{v}$
be some subset for each $v\in V_{Y}$, and 
\[
K=\left\langle \left\{ g_{e}\,\middle|\,e\in E_{Y\backslash T}\right\} \cup\bigcup\nolimits_{v\in V_{Y}}S_{v}\right\rangle .
\]
If there exists $v_{0}\in V_{Y}$ such that $\left\langle S_{v_{0}}\cup\bigcup_{v_{0}\in e}G_{e}\right\rangle \lneq G_{v_{0}}\lneq G$,
then $\left[G:K\right]=\infty$.
\end{prop}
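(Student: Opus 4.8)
The plan is to exploit the Bass--Serre presentation $G=\pi_{1}(G,Y,T)$ and the normal form theorem for fundamental groups of graphs of groups. The key structural observation is that $K$ is generated by the edge-generators $\{g_{e}\}$ together with the subgroups $S_{v}\leq G_{v}$, so $K$ is the image of the fundamental group of the ``graph of subgroups'' obtained by replacing each vertex group $G_{v}$ by $\langle S_{v}\cup\bigcup_{v\in e}G_{e}\rangle$ and keeping all edge groups $G_{e}$. Call this subgroup $H_{v}:=\langle S_{v}\cup\bigcup_{v\in e}G_{e}\rangle\leq G_{v}$; by hypothesis $H_{v_{0}}\lneq G_{v_{0}}$. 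Note that each $H_{v}$ still contains all the incident edge groups $G_{e}$ (for $v\in e$), which is exactly what is needed for the inclusion maps $G_{e}\hookrightarrow H_{v}$ to make sense and for the graph-of-groups machinery to apply to the subgraph-of-groups $(H,Y,T)$.

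First I would set up the natural homomorphism $\varphi\colon\pi_{1}(H,Y,T)\to\pi_{1}(G,Y,T)=G$ induced by the inclusions $H_{v}\hookrightarrow G_{v}$ (and the identity on edge groups and on the $g_{e}$). Its image is precisely $K$, by the generating set (\ref{eq:BS-gens}) applied to $K$. Second, I would argue that $\varphi$ is injective: this follows from the Bass--Serre normal form theorem, since a reduced word in $\pi_{1}(H,Y,T)$ (alternating edge-traversals and nontrivial coset representatives in $H_{v}/G_{e}$) maps to a reduced word in $\pi_{1}(G,Y,T)$ — reducedness is preserved because a coset representative nontrivial in $H_{v}$ modulo an edge group stays nontrivial in $G_{v}$ modulo that same edge group, the edge groups being common. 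Hence $K\cong\pi_{1}(H,Y,T)$ as an abstract group, and moreover $K$ acts on the Bass--Serre tree $\mathcal{T}'$ of $(H,Y,T)$ with vertex stabilizers the conjugates of the $H_{v}$ and quotient graph $Y$. Third, I would compare this with the action of $G$ on $\mathcal{T}=\mathcal{T}'$: both have quotient $Y$, so the index $[G:K]$ equals (by counting orbits of vertices lying over a fixed vertex $v$ of $Y$, via the orbit-stabilizer correspondence $G/K \leftrightarrow$ the $K$-orbits inside a $G$-orbit) the ``total'' index $\prod$-type quantity, and in particular $[G:K]\geq [G_{v_{0}}:H_{v_{0}}]$. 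Since $H_{v_{0}}\lneq G_{v_{0}}$ this index is already $\geq 2$; but to get $\infty$ I would instead observe that the vertex group $G_{v_{0}}$ acts on the full tree $\mathcal{T}$ and, because $\langle S_{v_{0}}\cup\bigcup_{v_{0}\in e}G_{e}\rangle\lneq G_{v_{0}}$ with $G_{v_{0}}\lneq G$, one can produce infinitely many double cosets. Concretely: pick $\gamma\in G_{v_{0}}\setminus H_{v_{0}}$; the elements $\gamma, \gamma^2, \dots$ (or, if $\gamma$ has finite order, a sequence built from $\gamma$ and a hyperbolic element of $G$ moving $j(v_{0})$ far away — available since $G_{v_{0}}\lneq G$ and $\mathcal{T}$ is infinite) lie in pairwise distinct cosets $K\gamma^{n}$, because equality $K\gamma^{m}=K\gamma^{n}$ would force $\gamma^{m-n}\in K\cap G_{v_{0}}=H_{v_{0}}$ (the last equality again from the normal form: an element of $G$ fixing $j(v_{0})$ and lying in $K=\pi_{1}(H,Y,T)$ must lie in the vertex group $H_{v_{0}}$), contradicting the choice of $\gamma$ when $\gamma$ has infinite order, or producing the needed contradiction via the translation-length argument when it is torsion.

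The main obstacle I expect is the last step — upgrading ``$[G:K]\geq 2$'' to ``$[G:K]=\infty$'' — which is where the hypothesis $G_{v_{0}}\lneq G$ (as opposed to merely $H_{v_{0}}\lneq G_{v_{0}}$) gets used in an essential way. The clean way to handle it is the identity $K\cap G_{v_{0}}=H_{v_{0}}$, which isolates all the work in a single application of the Bass--Serre normal form theorem; granting that, infinitude of the index is immediate either from an infinite-order $\gamma$, or, in the torsion case, by conjugating $\gamma$ by powers of any element $h\in G$ with $d_{\mathcal{T}}(j(v_{0}),h\,j(v_{0}))$ growing, so that the fixed-point sets of $\gamma$ and $h^{-n}\gamma h^{n}$ become disjoint and the corresponding cosets $Kh^{-n}\gamma h^{n}$ (equivalently, a suitable sequence of products) are distinct. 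The existence of such an $h$ is exactly the content of $G_{v_{0}}\lneq G$ together with $\mathcal{T}$ being infinite: $G$ does not fix $j(v_{0})$, hence $G$ contains elements (indeed hyperbolic ones, as $G$ has no global fixed point and $\mathcal{T}$ is a tree) moving $j(v_{0})$ arbitrarily far. I would write the normal-form argument carefully and relegate the (standard) amalgam/HNN bookkeeping to a citation of \cite{serre1980trees}.
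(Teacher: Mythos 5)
Your setup is sound up to a point: replacing each vertex group by $H_{v}=\langle S_{v}\cup\bigcup_{v\in e}G_{e}\rangle$, the induced map $\varphi\colon\pi_{1}(H,Y,T)\to G$ is indeed injective by the normal form theorem (reducedness is preserved because the edge groups are unchanged), and this yields $K\cap G_{v_{0}}\subseteq\varphi(\pi_{1}(H,Y,T))\cap G_{v_{0}}=H_{v_{0}}$. (Two small inaccuracies along the way: the image of $\varphi$ is $\langle\{g_{e}\}\cup\bigcup_{v}H_{v}\rangle$, which contains $K$ but also all the edge groups, so it need not equal $K$ --- harmless, since it suffices that some overgroup of $K$ has infinite index; and the Bass--Serre tree of $(H,Y,T)$ is not $\mathcal{T}$ but only embeds into it as a proper invariant subtree.) The genuine gap is the final step, producing infinitely many cosets. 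First, for $\gamma\in G_{v_{0}}\setminus H_{v_{0}}$ of infinite order, $K\gamma^{m}=K\gamma^{n}$ only gives $\gamma^{m-n}\in H_{v_{0}}$, which does not contradict $\gamma\notin H_{v_{0}}$: already $\gamma^{2}$ may lie in $H_{v_{0}}$, so the cosets $K\gamma^{n}$ can repeat. Second, the fallback via an element $h$ moving $j(v_{0})$ arbitrarily far is not justified: $G_{v_{0}}\lneq G$ only says $G$ does not fix $j(v_{0})$, and you have not excluded that $G$ fixes some other vertex of $\mathcal{T}$, in which case all orbits are bounded and no such $h$ (let alone a hyperbolic one) exists. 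Even granting such an $h$, the distinctness of the cosets $Kh^{-n}\gamma h^{n}$ does not follow from $K\cap G_{v_{0}}\subseteq H_{v_{0}}$, since the relevant element $h^{-m}\gamma h^{m-n}\gamma^{-1}h^{n}$ lies in no conjugate of $G_{v_{0}}$ that this identity controls; a genuine ping-pong argument is needed and is not supplied.

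The paper sidesteps all of this by packaging the hypothesis into a single nontrivial amalgam: adjoin to $Y$ an auxiliary edge $e_{\infty}=\{v_{0},v_{\infty}\}$ carrying the vertex group $G_{v_{0}}$ and edge group $K_{v_{0}}=\langle S_{v_{0}}\cup\bigcup_{v_{0}\in e}G_{e}\rangle$; contracting $e_{\infty}$ exhibits $G\cong G_{v_{0}}*_{K_{v_{0}}}\pi_{1}(K,Y,T)$. The hypothesis $K_{v_{0}}\lneq G_{v_{0}}$ gives one index $\geq2$, and $G_{v_{0}}\lneq G$ gives $[\pi_{1}(K,Y,T):K_{v_{0}}]\geq2$ (otherwise the amalgam collapses to $G_{v_{0}}$), so the Bass--Serre tree of this amalgam is infinite and $[G:\pi_{1}(K,Y,T)]$, being the number of vertices on one side of that tree, is infinite. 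If you want to keep your more hands-on route, the correct replacement for your last step is the normal form in this auxiliary amalgam: for $a\in G_{v_{0}}\setminus K_{v_{0}}$ and $b\in\pi_{1}(K,Y,T)\setminus K_{v_{0}}$, the cosets $\pi_{1}(K,Y,T)\,(ab)^{n}$ are pairwise distinct. As it stands, your argument establishes only $[G:K]\geq[G_{v_{0}}:H_{v_{0}}]\geq2$, not infinitude.
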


\begin{proof}
Let $\left(K,Y\right)$ be the graph of groups with $K_{*}=G_{*}$
for $*\in V_{Y}\cup E_{Y}$, except for $K_{v_{0}}=\left\langle S_{v_{0}}\cup\bigcup_{v_{0}\in e}G_{e}\right\rangle \lneq G_{v_{0}}$,
and with the inclusion maps restricted from those of $\left(G,Y\right)$.
We then have $K\leq\pi_{1}\left(K,Y,T\right)$, so it is enough to
prove that the latter is of infinite index in $G=\pi_{1}\left(G,Y,T\right)$.
Let $\overline{Y}$ be the graph obtained by adjoining to $Y$ a new
vertex $v_{\infty}$ and an edge $e_{\infty}=\left\{ v_{0},v_{\infty}\right\} $,
and let $\left(\overline{K},\overline{Y}\right)$ be the graph of
groups extending $\left(K,Y\right)$ by $\overline{K}_{v_{\infty}}=G_{v_{0}}$,
$\overline{K}_{e_{\infty}}=K_{v_{0}}$, with the natural inclusion
maps. Taking $\overline{T}=T\cup\left\{ e_{\infty}\right\} $, we
observe that 
\[
G_{v_{0}}*_{K_{v_{0}}}\pi_{1}\left(K,Y,T\right)=\pi_{1}\left(\overline{K},\overline{Y},\overline{T}\right)\cong\pi_{1}\left(G,Y,T\right)=G
\]
where the isomorphism is obtained by contracting the edge $e_{\infty}$.
It now follows from the other direction of Bass-Serre theory \cite[§5.3]{serre1980trees}
that $G$ acts on a biregular tree of degrees $\left[G_{v_{0}}:K_{v_{0}}\right]$
and $\left[\pi_{1}(K,Y,T):K_{v_{0}}\right]$, transitively on each
side of the tree, with respective vertex stabilizers $G_{v_{0}}$
and $\pi_{1}(K,Y,T)$. This tree is infinite since we have assumed
$\left[G_{v_{0}}:K_{v_{0}}\right]>1$, and $\left[\pi_{1}(K,Y,T):K_{v_{0}}\right]=1$
would give $G=G_{v_{0}}$. It thus follows that $\left[G:\pi_{1}(K,Y,T)\right]=\infty$
as claimed.
\end{proof}

\subsection{\protect\label{subsec:-Thinness-of-Clifford+T}Thinness of Clifford+T}

In this section we specialize again to the extended Clifford gates,
so that $F,E,\pi,\Pi,\varepsilon_{i},\varphi$ are as in Section \ref{sec:Synthesis-and-arithmeticity}.
Let $\overline{G}=G/Z$, where $Z$ is the center of $G$, which is
a projective unitary group scheme, i.e. $\overline{G}\left(F_{\varepsilon_{1}}\right)=PU(3)$.
Since $Z\left(F_{\Pi}\right)$ acts trivially on $\mathcal{T}$, we
get $\overline{G}\left(F_{\Pi}\right)\curvearrowright\mathcal{T}$.
In this section we denote $\Gamma=\overline{G}\left(\mathcal{O}_{F}\left[\frac{1}{\Pi}\right]\right)\curvearrowright\mathcal{T}$. 

From Proposition \ref{prop:level_dist} we obtain that $Hv_{0}$ is
of distance $\ell(H)=6$ from $v_{0}$. Labeling the vertices on the
path from $v_{0}$ to $Hv_{0}$ by $v_{0},\ldots,v_{6}=Hv_{0}$, we
denote $C_{j}=\mathrm{Stab}_{\Gamma}(v_{j})$ for $j=0,1,2,3$, and
\begin{equation}
C_{D}=\mathrm{Stab}_{\Gamma}\left(D\right)=C_{0}\cap C_{3},\quad\text{where}\quad D=\underset{v_{0}}{\bullet}-\underset{v_{1}}{\bullet}-\underset{v_{2}}{\bullet}-\underset{v_{3}}{\bullet}.\label{eq:D-domain}
\end{equation}
Note that since we moved to the projective group, in this section
$\Gamma$ denotes the group $\Gamma$ of Section \ref{sec:Synthesis-and-arithmeticity}
modulo its center $\left\langle -\xi\right\rangle $.
\begin{lem}
\label{lem:stabilizers-transitive} Denote $N(v_{i})=\left\{ u\in V_{\mathcal{T}}\,\middle|\,\mathrm{dist}\left(u,v_{i}\right)=1\right\} $
for $i=0,1,2,3$, and $N'(v_{i})=N(v_{i})\setminus\left\{ v_{i-1}\right\} $
for $i=1,2,3$. Then:
\begin{enumerate}
\item $C_{0}$ acts transitively on $N(v_{0})$.
\item $C_{0}\cap C_{i}$ acts transitively on $N'(v_{i})$ for $i=1,2,3$.
\item $C_{3}$ acts transitively on $N(v_{3})$. 
\end{enumerate}
\end{lem}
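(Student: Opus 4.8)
The plan is to exploit the explicit description of the Clifford group $C=C_0=\mathcal{M}_3\ltimes\mathcal{D}$ from Lemma \ref{lem:C} together with the explicit matrix $H$, and to translate the tree-combinatorial claims into counting statements about orbits of subgroups of $C$ on spheres. Throughout I would use Proposition \ref{prop:level_dist}, which identifies $\dist(gv_0,hv_0)$ with $\ell(h^{-1}g)$, and Proposition \ref{prop:l>5}, which guarantees that any group element moving $v_0$ to a vertex within distance $<6$ lies in $C$.

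First I would handle part (1). The tree is $4$-regular, so $|N(v_0)|=4$. I would produce explicit elements of $C$ (built from permutation matrices in $\mathcal{M}_3$ and diagonal matrices in $\mathcal{D}$, e.g.\ the element $W=\diag(1,\xi^3,\xi^6)$ already used in Lemma \ref{lem:C}) that cyclically permute the four neighbours of $v_0$; equivalently, I would show that $\Stab_{C_0}(v_1)=C_0\cap C_1$ has index $4$ in $C_0$, so the orbit of $v_1$ exhausts $N(v_0)$. Since $v_1$ corresponds to a level-$2$ vertex, $C_0\cap C_1$ consists of those monomial matrices in $C$ that also stabilise the relevant $\mathcal{O}_{E_\pi}$-lattice; I would compute this intersection directly and check $[C_0:C_0\cap C_1]=4$. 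Part (3) is then immediate by symmetry: conjugating by $H$ sends $v_0$ to $v_6=Hv_0$ and reverses the segment $D$, and since $H\in\Gamma$ normalises nothing in particular but does conjugate $C_0=\Stab_\Gamma(v_0)$ onto $\Stab_\Gamma(Hv_0)$, the action of $C_3=\Stab_\Gamma(v_3)$ on $N(v_3)$ mirrors that of $C_0$ on $N(v_0)$ after an appropriate relabelling; alternatively one argues directly that $C_3$ is $\Gamma$-conjugate (indeed $C$-conjugate via Lemma \ref{lem:stabilizers-transitive}(1)--(2) bootstrapped) to a Clifford-type group with the same transitivity.

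The main work is part (2): showing $C_0\cap C_i$ acts transitively on $N'(v_i)$ for $i=1,2,3$, where $|N'(v_i)|=3$. The point is that a neighbour of $v_i$ other than $v_{i-1}$ is a vertex at distance $i+1$ from $v_0$ lying ``below'' $v_i$, and I must move these three by elements that simultaneously fix $v_0,v_1,\ldots,v_i$. I would set up coordinates on the Bruhat-Tits tree $\mathcal{T}$ inside the building $\mathcal{B}$ of $GL_3(E_\pi)$ as in the preamble, so that the path $v_0,\ldots,v_i$ corresponds to an explicit flag of lattices, and compute the pointwise stabiliser $C_0\cap C_i$ as a congruence-type subgroup of the monomial group $C$. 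The key numeric check is that this stabiliser has index exactly $3$ in $C_0\cap C_{i-1}$ (with $C_{-1}$ read as nothing, recovering the ``$4$'' in part (1)), which combined with the fact that $C_0\cap C_{i-1}$ fixes $v_{i-1}$ and hence permutes $N'(v_i)$, forces transitivity. Concretely I expect the chain $C_0\supsetneq C_0\cap C_1\supsetneq C_0\cap C_2\supsetneq C_0\cap C_3$ to have successive indices $4,3,3$, matching the degrees of the tree, and I would verify this by exhibiting, at each stage, an order-$3$ element of $C$ (a diagonal matrix with entries cube roots of unity times $\pm\xi$-powers) that fixes $v_0,\ldots,v_i$ but rotates $N'(v_i)$; the element $W=\diag(1,\xi^3,\xi^6)$ and its conjugates by permutation matrices are the natural candidates.

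The hard part will be the bookkeeping that makes the lattice-stabiliser computations unambiguous: one must pin down which lattice chain represents the particular geodesic $v_0,\ldots,v_3$ inside $D$, keep track of the $\#$-involution defining $\mathcal{T}\subseteq\mathcal{B}$, and then verify that the monomial subgroup of $C$ fixing that chain really does act transitively downward rather than, say, fixing each downward neighbour. Once the indices $[C_0\cap C_{i-1}:C_0\cap C_i]=\deg(v_i)-1$ are established, transitivity is automatic by the orbit-counting argument, so the whole lemma reduces to these finite index computations in the explicitly presented group $C\cong S_3\ltimes(\mathbb{Z}/18)^3$ acting on a concrete $4$-regular tree.
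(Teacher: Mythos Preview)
Your approach to parts (1) and (2) is genuinely different from the paper's, and in principle workable, but more laborious. The paper does \emph{not} compute lattice stabilisers or successive indices in the chain $C_0\supset C_0\cap C_1\supset\cdots$; instead it invokes the already-proved Theorem~\ref{thm:orbits}(1), which gives that the $C_0$-orbit of $Hv_0$ hits each of the $108$ $v_0$-clans in $S_6(v_0)$. Since clans are determined by their common grandfather in $S_4(v_0)$, this immediately yields that $C_0$ is transitive on $S_4(v_0)$, hence on each $S_i(v_0)$ for $i\leq4$. Part (1) is then $N(v_0)=S_1(v_0)$, and for part (2) one observes that any $c\in C_0$ sending $v_{i+1}$ into $N'(v_i)\subset S_{i+1}(v_0)$ must fix the geodesic $v_0,\ldots,v_i$, so $c\in C_0\cap C_i$ automatically. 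This bypasses all the lattice bookkeeping you flag as ``the hard part.'' Note also a shift-by-one slip in your write-up: the group acting on $N'(v_i)$ is $C_0\cap C_i$ (which fixes $v_i$), not $C_0\cap C_{i-1}$, and the relevant index is $[C_0\cap C_i:C_0\cap C_{i+1}]$; to handle $i=3$ your chain would need a fourth step down to $C_0\cap C_4$.

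Your argument for part (3), however, has a genuine gap. The symmetry claim ``$C_3$ mirrors $C_0$'' fails: $v_3$ is \emph{not} in the $\Gamma$-orbit of $v_0$ (indeed $D$ turns out to be a fundamental domain), and $C_3$ and $C_0$ are not conjugate---they have different orders ($216$ versus $1944$ in the projective group). Moreover, $H$ does not reverse the segment $D=v_0\text{--}v_3$; it reverses the full geodesic $v_0\text{--}v_6$. The correct observation, which the paper uses, is that $H^2\in C_0$, so $H$ swaps $v_0$ and $v_6$ and therefore \emph{fixes} $v_3$; in particular $H\in C_3$ and $H$ sends $v_2$ to $v_4\in N'(v_3)$. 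Combined with part (2) for $i=3$ (transitivity of $C_0\cap C_3$ on $N'(v_3)$), this single extra element $H$ connects $v_2$ to the rest of $N(v_3)$, proving (3). Your proposal does not isolate this step.
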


\begin{proof}
By the proof of Theorem \ref{thm:orbits}(1), combined with the fact
that each $v_{0}$-clan is determined by its common grandfather in
$S_{4}(v_{0})$, we get that $C_{0}$ acts transitively on $S_{4}(v_{0})$.
This implies that $C_{0}$ acts transitively also on $S_{i}(v_{0})$,
for any $i=1,2,3$. Since $N(v_{0})=S_{1}(v_{0})$ and $N'(v_{i})\subset S_{i+1}(v_{0})$
for $i=1,2,3$, we get that $C_{0}$ acts transitively on $N(v_{0})$
and on $N'(v_{i})$ for $i=1,2,3$, in particular proving the first
claim. Note that if $c\in C_{0}$ is such that $cv_{i+1}\in N'(v_{i})$,
then $cv_{i}=v_{i}$, hence $c\in C_{i}$, and therefore $C_{0}\cap C_{i}$
acts transitively on $N'(v_{i})$ for $i=1,2,3$, proving the second
claim. Next, we note that $H^{2}\in C_{0}$, so that $H$ interchanges
$v_{0}$ and $Hv_{0}$, and thus reverses the path between them. In
particular, $H$ fixes $v_{3}$, and takes $v_{2}$ to $v_{4}\in N'(v_{3})$.
Combined with the fact that $C_{0}\cap C_{3}$ acts transitively on
$N'(v_{3})$, we get that $C_{3}$ acts transitively on $N(v_{3})$.
\end{proof}
\begin{prop}
\label{prop:FD} The path $D$ forms a fundamental domain for $\Gamma\curvearrowright\mathcal{T}$.
\end{prop}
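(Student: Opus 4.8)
The plan is to show that every vertex of $\mathcal{T}$ is $\Gamma$-equivalent to exactly one vertex of $D=v_0-v_1-v_2-v_3$, and that no two vertices of $D$ are $\Gamma$-equivalent. The second (disjointness) part should follow from level considerations: $v_0\in L_{\mathcal{T}}$ and $v_2\in L_{\mathcal{T}}$, while $v_1,v_3\in R_{\mathcal{T}}$, so a $\Gamma$-translate cannot identify a left-vertex with a right-vertex; and $v_0,v_2$ cannot be $\Gamma$-equivalent because $\mathrm{dist}(v_0,v_2)=2<6$, contradicting Corollary \ref{cor:dist>=00003D6}. Similarly $v_1$ and $v_3$ are at distance $2$, and since $\Gamma$ acts on $R_{\mathcal{T}}$ one can transfer this to a statement about $L_{\mathcal{T}}$: if $\gamma v_1=v_3$ then $\gamma$ maps the $L$-neighbor $v_0$ of $v_1$ to an $L$-neighbor of $v_3$, i.e.\ to a vertex of $\Gamma v_0$ at distance $\le 4$ from $v_0$, again contradicting Corollary \ref{cor:dist>=00003D6} unless that vertex equals $v_0$ — but then $\gamma\in C_0$ fixes $v_1$, so $v_1=v_3$, absurd. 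So $D$ meets each $\Gamma$-orbit at most once.

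The substance is surjectivity: every vertex lies in $\Gamma D$. I would prove this by induction on $\mathrm{dist}(v,D)$, using the transitivity statements of Lemma \ref{lem:stabilizers-transitive} at the three ``interior'' reductions and Theorem \ref{thm:orbits}(3) at the ends. Concretely, given $v\notin D$, let $n=\mathrm{dist}(v,D)$ and let $p$ be the point of $D$ closest to $v$. If $p\in\{v_1,v_2,v_3\}$ is an interior vertex and the geodesic from $v$ leaves $D$ through a neighbor in $N'(v_i)$ (for $p=v_i$) or through $N(v_3)\setminus\{v_2\}$, then by Lemma \ref{lem:stabilizers-transitive}(2)--(3) the stabilizer $C_0\cap C_i$ (resp.\ $C_3$) acts transitively on those neighbors, so some element of that stabilizer — which fixes all of $D$, or at least $v_i$ and enough of $D$ — moves $v$ to a vertex whose geodesic to $D$ passes through a different neighbor; iterating and combining with the tree structure lets one push $v$ to distance $\le n-1$ from $D$ after correcting by a stabilizer element. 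The genuinely new move is at the end $v_0$: if $p=v_0$, then $v$ lies beyond some neighbor $u\in N(v_0)=S_1(v_0)$; using Theorem \ref{thm:orbits}(3), which says $\Lambda v_0=\Gamma v_0$ and in particular $\Gamma$ acts with the clan-structure on $S_6(v_0)$, together with $Hv_0=v_6$ and $H$ reversing the path $v_0,\dots,v_6$, one shows that applying a suitable element of $C_0$ followed by $H$ sends the branch hanging off $v_0$ into the branch on the $v_3$ side, strictly decreasing $\mathrm{dist}(v,D)$ — this is the tree-walking argument already used in the proof of Theorem \ref{thm:main}(2), repackaged as a statement about $D$ rather than about the single vertex $v_0$.

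I would organize the write-up as: (i) disjointness via Corollary \ref{cor:dist>=00003D6} and the bipartition; (ii) a lemma that $\Gamma D \supseteq \bigcup_{i}$(the subtree hanging off $v_i$) reduces, vertex by vertex, to $D$ itself, using Lemma \ref{lem:stabilizers-transitive} for the interior and the $H$-reflection trick for the endpoint $v_0$; (iii) conclude that $\Gamma D=\mathcal{T}$, so $D$ is a fundamental domain. The main obstacle is step (ii) at the endpoint: one must be careful that the element of $C_0$ used to position $v$ before applying $H$ does not disturb the part of $D$ already matched, and that the reflection $H$ genuinely decreases distance to the whole path $D$ and not merely to $v_0$ — this is exactly the content of the estimate $\mathrm{dist}(u,v_0)\le \mathrm{dist}(u,w)+\mathrm{dist}(w,v_0)\le 4+(n-6)=n-2$ from (\ref{eq:reduct}), which must now be re-derived with $D$ in place of $\{v_0\}$; the other subtlety is making sure the interior reductions via $C_0\cap C_i$ do not create new distance to the far end of $D$, which is where one uses that $C_0$ already fixes $v_0,\dots,v_3$ simultaneously (it lies in $C_D$).
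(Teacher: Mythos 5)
Your skeleton---orbit-disjointness via Corollary \ref{cor:dist>=00003D6}, and surjectivity by pushing vertices toward $D$ using the transitivity statements of Lemma \ref{lem:stabilizers-transitive}---is the paper's argument. The paper phrases surjectivity non-inductively: parts (1), (2), (3) of the Lemma show that \emph{every} edge incident to a vertex of $D$ lies in the $\Gamma$-orbit of an edge of $D$, whence $\Gamma D=\mathcal{T}$ by connectedness of the tree; and disjointness is a single triangle inequality, $\dist(v_0,\gamma v_0)\le\dist(v_0,v_j)+\dist(v_i,v_0)\le 5$ for $i\ne j$, which forces $\gamma v_0=v_0$ and then $i=j$ by comparing distances to $v_0$ --- no appeal to the bipartition is needed.

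The genuine problem is your treatment of the endpoint $v_0$. You appear to have overlooked Lemma \ref{lem:stabilizers-transitive}(1), which says exactly that $C_0$ acts transitively on $N(v_0)$: if the geodesic from $v$ meets $D$ at $v_0$ through a neighbor $u\ne v_1$, a single $c\in C_0$ with $cu=v_1$ gives $\dist(cv,D)\le\dist(cv,v_1)=\dist(v,u)=n-1$, exactly as at the other three vertices. There is no ``genuinely new move'' at $v_0$, and no need for Theorem \ref{thm:orbits}(3), the clan structure, or the estimate (\ref{eq:reduct}). Worse, the substitute move you propose---an element of $C_0$ \emph{followed by} $H$---need not decrease $\dist(\cdot,D)$ and can increase it: $Hc$ sends $v_0\mapsto v_6$ and $u\mapsto v_5$, and if $cv$ sits in a branch of $v_1$ other than the $v_0$- and $v_2$-branches, then $Hcv$ sits in a branch of $v_5$ away from $v_4$ and $v_6$, so $\dist(Hcv,D)=\dist(Hcv,v_3)=n+1$, whereas $\dist(cv,D)=n-1$ already. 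So that step, as written, fails; drop the $H$ and quote Lemma \ref{lem:stabilizers-transitive}(1). A minor further slip: in your $v_1$ versus $v_3$ disjointness case, ``$\gamma\in C_0$ fixes $v_1$'' is unjustified ($C_0$ does not fix $v_1$); the correct finish is that $\gamma v_0=v_0$ together with $\gamma v_1=v_3$ would give $\dist(v_0,v_3)=\dist(v_0,v_1)=1$, a contradiction.
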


\begin{proof}
It follows from Lemma \ref{lem:stabilizers-transitive}(1) that all
the edges incident to $v_{0}$ are in the $\Gamma$-orbit of $\left\{ v_{0},v_{1}\right\} $,
and from (3) that those incident to $v_{3}$ are in the orbit of $\left\{ v_{2},v_{3}\right\} $.
From (2) we obtain that for $i=1,2$ the edges connecting $v_{i}$
to a vertex in $N'(v_{i})$ are in the orbit of $\left\{ v_{i},v_{i+1}\right\} $.
Thus, $\Gamma D=\mathcal{T}$, and had some $\gamma\in\Gamma$ taken
$v_{i}\in D$ to $v_{j}\in D$ for $i\neq j$, then we would have
\[
\dist\left(v_{0},\gamma v_{0}\right)\leq\dist\left(v_{0},v_{j}\right)+\dist\left(v_{j},\gamma v_{0}\right)=\dist\left(v_{0},v_{j}\right)+\dist\left(v_{i},v_{0}\right)\leq5
\]
contradicting Corollary \ref{cor:dist>=00003D6}. 
\end{proof}
Next we wish to describe the stabilizers of the vertices and edges
in the fundamental domain. 
\begin{prop}
\label{prop:FD-stabilizers}We have $C_{D}\leq C_{2}\leq C_{1}\leq C_{0}$,
\begin{align}
C_{j} & =\left\{ c\in C_{0}\,\middle|\,\ell\left(H^{-1}cH\right)\leq12-2j\right\} \qquad\left(j=0,1,2\right),\label{eq:Cj}\\
C_{D} & =\left\{ c\in C_{0}\,\middle|\,\ell\left(H^{-1}cH\right)\leq6\right\} =S_{3}\ltimes\left\langle \left(\begin{smallmatrix}1\\
 & \zeta_{3}\\
 &  & 1
\end{smallmatrix}\right),\left(\begin{smallmatrix}1\\
 & 1\\
 &  & \zeta_{3}
\end{smallmatrix}\right)\right\rangle \label{eq:CD}
\end{align}
$C_{3}=\langle\{H\}\cup C_{D}\rangle$ and $C_{D}=C_{2}\cap C_{3}$.
Their sizes and some minimal generating sets are given by:
\[
\xymatrix@R=2pt{\stackrel[1944]{v_{0}}{\bullet}\ar@{-}^{486}[rr] &  & \stackrel[486]{v_{1}}{\bullet}\ar@{-}^{162}[rr] &  & \stackrel[162]{v_{2}}{\bullet}\ar@{-}^{54}[rr] &  & \stackrel[216]{v_{3}}{\bullet}\ar@{.}[r] & \overset{v_{4}}{\bullet}\ar@{.}[r] & \overset{v_{5}}{\bullet}\ar@{.}[r] & \overset{v_{6}}{\bullet}}
\]
\end{prop}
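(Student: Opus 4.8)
The plan is to leverage Proposition \ref{prop:level_dist} to translate everything into statements about the level map $\ell$, and then use Lemma \ref{lem:C} together with explicit computation. First I would establish the characterization of the $C_j$. A vertex $c v_0$ with $c\in C_0$ fixes $v_j$ (the $j$-th vertex on the geodesic from $v_0$ to $v_6=Hv_0$) if and only if $c$ fixes the whole segment $[v_0,v_j]$, since $c$ fixes $v_0$ and a path in a tree between two fixed points is fixed pointwise. Now $v_j$ lies on $[v_0, Hv_0]$, so $\dist(v_j, Hv_0)=6-j$ and $\dist(H^{-1}v_j, v_0)=6-j$; the condition "$c$ fixes $v_j$" is equivalent to "$cHv_0$ and $Hv_0$ have the same projection onto $v_0$ up to level $\geq 2j$", which one rewrites as $\dist(cHv_0, Hv_0)\leq \dist(v_0,Hv_0)\cdot 2 - 2j = 12-2j$ — more carefully, $c$ fixes $v_0,\dots,v_j$ iff the geodesics $[v_0,cHv_0]$ and $[v_0,Hv_0]$ agree on their first $j$ edges, and since both have length $6$, by the tripod/Gromov-product description in a tree this is exactly $\dist(cHv_0,Hv_0)\leq 12-2j$. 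By Proposition \ref{prop:level_dist} this equals $\ell(H^{-1}c^{-1}\cdot H \cdot \text{(something)})$; being careful with left- versus right-translates, $\dist(cHv_0, Hv_0) = \dist(v_0, H^{-1}c^{-1}Hv_0)=\ell(H^{-1}c^{-1}H)$, and since $\ell(g)=\ell(g^{-1})$ (because $\ord_\pi g = \ord_\pi g^{-1}$ for $g\in G_\Pi$, as $g^* g$ is a scalar) this is $\ell(H^{-1}cH)$. This gives (\ref{eq:Cj}) for $j=0,1,2$, and the chain $C_D\leq C_2\leq C_1\leq C_0$ is immediate since the conditions are nested.

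Next, for $C_D = C_0\cap C_3$: a priori this is $\{c\in C_0 : \ell(H^{-1}cH)\leq 0\}=\{c\in C_0: H^{-1}cH\in C_0\}=C_0\cap HC_0H^{-1}$. Using Lemma \ref{lem:C}(1), $C_0 = \mathcal{M}_3\ltimes\mathcal{D}$ consists of monomial matrices with entries in $\langle-\xi\rangle$, so after passing to the projective group (entries mod $\langle-\xi I\rangle$, i.e. diagonal part in $(\mathbb{Z}/18)^3/\langle(1,1,1)\rangle$) one must intersect this explicitly with its $H$-conjugate. This is a finite computation: conjugating a monomial matrix by $H$ and demanding the result be again monomial with roots-of-unity entries. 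I expect this pins down the diagonal part to the subgroup generated by $\diag(1,\zeta_3,1)$ and $\diag(1,1,\zeta_3)$ (equivalently, diagonal monomial matrices with $\zeta_3$-entries summing to a scalar), while the permutation part remains all of $S_3$ — matching the stated $S_3\ltimes\langle(\begin{smallmatrix}1&&\\&\zeta_3&\\&&1\end{smallmatrix}),(\begin{smallmatrix}1&&\\&1&\\&&\zeta_3\end{smallmatrix})\rangle$, which has order $6\cdot 9=54$ in the projective group. Then $C_2\cap C_3$: from (\ref{eq:Cj}) with $j=2$ and the $C_3$-condition one gets $\ell(H^{-1}cH)\leq 8$ and $c\in HC_0H^{-1}$ simultaneously, but since $H^{-1}C_3H = C_3$ (as $H$ reverses the path and $v_3$ is its midpoint, conjugation by $H$ fixes $v_3$), one has $C_2\cap C_3 = C_3\cap(C_3\cap HC_2H^{-1})$-type manipulation; cleanly, $C_2\cap C_3 = \{c\in C_0: \ell(H^{-1}cH)\leq 8\text{ and }c\in HC_0H^{-1}\}$, and since $c\in HC_0H^{-1}$ already forces $\ell(H^{-1}cH)\leq 0$, this collapses to $C_D$.

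For $C_3 = \langle\{H\}\cup C_D\rangle$: the inclusion $\supseteq$ holds because $H$ fixes $v_3$ (shown in the proof of Lemma \ref{lem:stabilizers-transitive}) and $C_D=C_0\cap C_3\subseteq C_3$. For $\subseteq$, I would argue that $\langle\{H\}\cup C_D\rangle$ acts transitively on $N(v_3)$ — indeed $C_D = C_0\cap C_3$ acts transitively on $N'(v_3)$ by Lemma \ref{lem:stabilizers-transitive}(2), and $H$ sends $v_2$ to $v_4\in N'(v_3)$, so together they move $v_2$ to every neighbor of $v_3$ — hence $\langle\{H\}\cup C_D\rangle$ acts transitively on $N(v_3)$, just as $C_3$ does. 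A group acting on a tree, stabilizing a vertex $v$ and acting transitively on $N(v)$, with a point stabilizer (here $C_D$, the stabilizer of the edge $\{v_3,v_2\}$ inside our subgroup) that exhausts the full stabilizer of that neighbor-edge in $C_3$: by an orbit-stabilizer / induction-on-distance argument (the same "general principle" used in the proof of Theorem \ref{thm:main}(1)), $\langle\{H\}\cup C_D\rangle = C_3$. Finally the sizes: $|C_0| = 6\cdot 18^3 / 18 = 1944$ in the projective group (order of $S_3\ltimes(\mathbb{Z}/18)^3$ modulo the diagonal $\mathbb{Z}/18$); $|C_D|=54$; and $|C_j|$ for $j=1,2$ follow from $[C_0:C_j]$ being the size of the $C_0$-orbit of $v_j$, which is $3$ for $j=1$ (the $3$ non-backtracking... wait, $v_1\in N(v_0)$ and $|N(v_0)|=4$, and $C_0$ is transitive on it, giving $[C_0:C_1]=4$, so $|C_1|=486$) and then $|C_2| = |C_1|/[C_1:C_2] = 486/3 = 162$ since $C_1$ acts transitively on $N'(v_1)$ which has size $3$; and $|C_3|$: $C_3$ contains $C_D$ of index $[C_3:C_D]=|N(v_3)|=4$, giving $|C_3|=216$. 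The edge stabilizers are $C_0\cap C_1 = C_1$ (order $486$), $C_1\cap C_2 = C_2$ (order $162$), $C_2\cap C_3 = C_D$ (order $54$). Minimal generating sets I would read off from the explicit monomial-matrix descriptions: $C_0 = \langle H^2, HWH, \mathcal{D}\rangle$ (using (\ref{eq:M3-fromCD}) for the $S_3$ part) down to $C_D = \langle H^2, HWH, \diag(1,\zeta_3,1)\rangle$ or similar small sets, verified by order count.

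The main obstacle I anticipate is the explicit intersection $C_0\cap HC_0H^{-1}$ in the projective group: one must conjugate a general monomial matrix $P\cdot\diag((-\xi)^{a},(-\xi)^{b},(-\xi)^{c})$ by $H$ and determine precisely when the result is again monomial with $18$-th-root-of-unity entries, working modulo the center. Because $H$ is the (scaled) $3\times 3$ DFT over $\mathbb{Q}(\zeta_3)$, this is a concrete but slightly delicate computation — it is essentially the statement that $H D H^{-1}$ is monomial exactly when $D$ is a cube in the diagonal torus (up to center and permutation), i.e. has $\zeta_3$-entries — and getting the exact subgroup (and confirming its order is $54$, not something larger or smaller) is where care is needed. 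Everything downstream (the nesting, the $C_2\cap C_3$ collapse, the $C_3$ generation, the orders) then follows formally from this plus Lemma \ref{lem:stabilizers-transitive} and orbit-stabilizer.
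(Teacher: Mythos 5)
Your overall strategy (tripod/Gromov-product arguments on the tree to get \eqref{eq:Cj}, orbit--stabilizer via Lemma \ref{lem:stabilizers-transitive} for the sizes, and a transitivity argument for $C_3=\langle\{H\}\cup C_D\rangle$) matches the paper's proof in spirit, and those parts are essentially correct. But there is a genuine error in your treatment of $C_D$. You write that $C_D=C_0\cap C_3$ is ``a priori $\{c\in C_0:\ell(H^{-1}cH)\leq 0\}=C_0\cap HC_0H^{-1}$.'' This conflates $C_3=\Stab_\Gamma(v_3)$ (the \emph{midpoint} of the path) with $\Stab_\Gamma(v_6)=\Stab_\Gamma(Hv_0)=HC_0H^{-1}$. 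By your own tripod formula with $j=3$, the correct condition is $\ell(H^{-1}cH)\leq 12-2\cdot 3=6$, which is exactly \eqref{eq:CD}; the condition $\ell(H^{-1}cH)\leq 0$ instead characterizes $C_0\cap\Stab(v_6)$, which is the pointwise stabilizer of the whole path $v_0,\dots,v_6$ and is a \emph{proper} subgroup of $C_D$ of index $3$ (order $18$ in the projective group, versus $|C_D|=54$ --- compare the computation $|C\cap HCH^{-1}|=18^2$ in the proof of Theorem \ref{thm:orbits}). A concrete witness: $S=\diag(1,\zeta_3,1)$ has $\ell(H^{-1}SH)=6$, so $S\in C_D$ but $S\notin HC_0H^{-1}$. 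Consequently the ``finite computation'' you propose (intersecting $C_0$ with its $H$-conjugate and demanding the conjugate be monomial) would produce the wrong group; note the internal inconsistency that you assert its order is $54$ while the group you actually set up to compute has order $18$. The same misidentification infects your derivation of $C_2\cap C_3=C_D$ (``$c\in HC_0H^{-1}$ already forces $\ell(H^{-1}cH)\leq 0$'' is true but irrelevant, since $c\in C_3$ does not imply $c\in HC_0H^{-1}$); the correct and easy route is $C_2\cap C_3\subseteq C_0\cap C_3=C_D\subseteq C_2\cap C_3$, using $C_2\leq C_0$ and the fact that an element fixing $v_0$ and $v_3$ fixes the geodesic between them.

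Once \eqref{eq:CD} is stated with the correct threshold $\leq 6$, the rest of your outline goes through: $|C_D|=54$ follows from orbit--stabilizer ($[C_2:C_2\cap C_3]=|N'(v_2)|=3$) independently of any explicit matrix computation, and the second equality in \eqref{eq:CD} then follows by checking that the $54$-element group $S_3\ltimes\langle\diag(1,\zeta_3,1),\diag(1,1,\zeta_3)\rangle$ is contained in $C_D$ (i.e.\ that $\ell(H^{-1}cH)\leq 6$ for its generators) and comparing orders --- which is how the paper handles it (``by explicit computation''). Your argument that $\langle\{H\}\cup C_D\rangle=C_3$ via transitivity on $N(v_3)$ and the general principle from Theorem \ref{thm:main} is a clean alternative to the paper's direct size computation of $\langle\{H\}\cup C_D\rangle$.
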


\begin{proof}
We have $C_{2}\leq C_{1}\leq C_{0}$, since any $\gamma$ which violates
one of these inclusions must carry $v_{0}$ to a vertex of distance
smaller than $6$, contradicting Corollary~\ref{cor:dist>=00003D6}.
Since $C_{D}=C_{0}\cap C_{1}\cap C_{2}\cap C_{3}$ and $C_{2}\leq C_{1}\leq C_{0}$,
we also obtain $C_{D}=C_{2}\cap C_{3}$. Since $C_{0}$ is $C$ modulo
its center $\left\langle -\xi\right\rangle $, we obtain from Lemma
\ref{lem:C}(1) that $\left|C_{0}\right|=\frac{|C|}{18}=1944$, and
the sizes of $|C_{1}|,|C_{2}|,|C_{D}|,|C_{3}|$ can now be easily
computed by orbit-stabilizer considerations: Lemma \ref{lem:stabilizers-transitive}
and Proposition \ref{prop:FD} determine the orbits of $C_{i}$ acting
on its adjacency edges, and we already know that $C_{2}\leq C_{1}\leq C_{0}$.

To prove (\ref{eq:Cj}) we first observe that $\ell\left(H^{-1}cH\right)=\dist(Hv_{0},cHv_{0})=\dist\left(v_{6},cv_{6}\right)$
by Proposition \ref{prop:level_dist}, and in particular for every
$c\in C_{0}$ we have $\ell\left(H^{-1}cH\right)\leq12$. Now, if
$c\in C_{0}\backslash C_{1}$ then $\dist\left(v_{6},cv_{6}\right)=12$,
and similarly if $c\in C_{1}\backslash C_{2}$ then $\dist\left(v_{6},cv_{6}\right)=10$,
which gives (\ref{eq:Cj}). For the same reason, if $c\in C_{2}\backslash C_{3}$
then $\dist\left(v_{6},cv_{6}\right)=8$, which gives the first equality
in (\ref{eq:CD}), and the second one is by explicit computation.
Finally, we already know that $H$ fixes $v_{3}$, and computing that
$\left\langle \{H\}\cup C_{D}\right\rangle $ has size $216$ we obtain
$C_{3}=\left\langle H\cup C_{D}\right\rangle $.
\end{proof}
\begin{rem}
We give here, without proofs, minimal generating sets for the various
stabilizers (this will be not used in the paper): $C_{0}=\left\langle \text{\ensuremath{\left(\begin{smallmatrix}  &  1\\
  &   &  1\\
 1 
\end{smallmatrix}\right)},\ensuremath{\left(\begin{smallmatrix}1\\
  &   &  \xi\\
  &  -\xi
\end{smallmatrix}\right)}}\right\rangle $, $C_{1}=\left\langle \text{\ensuremath{\left(\begin{smallmatrix}1\\
  &   &  \xi\\
  &  1 
\end{smallmatrix}\right)},\ensuremath{\left(\begin{smallmatrix}  &   &  \zeta_{3}\\
 1\\
  &  1 
\end{smallmatrix}\right)}}\right\rangle $, $C_{2}=\left\langle \text{\ensuremath{\left(\begin{smallmatrix} 1\\
  &   &  \zeta_{3}\\
  &  \overline{\zeta_{3}} 
\end{smallmatrix}\right)},\ensuremath{\left(\begin{smallmatrix}  &  1 \\
  &   &  \xi\\
 \overline{\xi} 
\end{smallmatrix}\right)},\ensuremath{\left(\begin{smallmatrix}  &  1 \\
  &   &  1\\
 \zeta_{3} 
\end{smallmatrix}\right)}}\right\rangle $, $C_{D}=\left\langle H,\text{\ensuremath{\left(\begin{smallmatrix}1\\
  &  1\\
  &   &  \zeta_{3} 
\end{smallmatrix}\right)}}\right\rangle $.
\end{rem}

A nice corollary of our analysis is a structure theorem for $\Gamma$:
\begin{cor}
\label{cor:Bass-Serre} $\Gamma=C_{0}*_{C_{D}}C_{3}$.
\end{cor}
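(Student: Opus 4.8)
The plan is to apply the Bass--Serre structure theorem to the action $\Gamma\curvearrowright\mathcal{T}$, using the fundamental domain $D$ identified in Proposition~\ref{prop:FD}. Since $D$ is a path $v_0-v_1-v_2-v_3$ with four vertices and three edges, and $D$ is a fundamental domain meaning the quotient graph $Y=\mathcal{T}/\Gamma$ is itself this path (a tree), we may take the spanning tree $T=Y$, so there are no edges outside the spanning tree and hence no stable-letter generators $g_e$. The graph of groups $(\Gamma,Y)$ then has vertex groups $C_0,C_1,C_2,C_3$ and edge groups $C_0\cap C_1=C_1$, $C_1\cap C_2=C_2$, $C_2\cap C_3=C_D$ along the three edges (using the inclusions $C_D\leq C_2\leq C_1\leq C_0$ from Proposition~\ref{prop:FD-stabilizers}). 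By the fundamental theorem of Bass--Serre theory (\cite[\S5.3--5.4]{serre1980trees}), $\Gamma=\pi_1(\Gamma,Y,T)$ is the iterated amalgam along these edges.

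First I would write out this iterated amalgam explicitly: $\pi_1(\Gamma,Y,T)=C_0 *_{C_1} C_1 *_{C_2} C_2 *_{C_D} C_3$. The key simplification is that an amalgamated free product $A *_B B$, where $B$ is literally one of the factors and the amalgamation is over the identity inclusion $B\hookrightarrow B$, collapses: $A *_B B\cong A$. Applying this twice — first collapsing $C_0 *_{C_1} C_1 = C_0$, then $C_0 *_{C_2} C_2 = C_0$ (here using $C_2\leq C_1\leq C_0$, so the edge group $C_2$ sits inside $C_0$ via the composite inclusion) — leaves exactly $C_0 *_{C_D} C_3$, which is the claimed presentation.

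The one point requiring a little care, and the main (modest) obstacle, is making sure the edge-group inclusions in the graph of groups really are the naive ones, so that the collapsing is legitimate. Because $D$ is an honest fundamental domain and $\Gamma$ acts without inverting edges on the tree coming from $p$-adic $U_3$, the section $j\colon Y\to\mathcal{T}$ can be chosen to be a genuine graph isomorphism onto $D$; then for each edge $e=\{v_i,v_{i+1}\}$ of $Y$ one has $j(e)=\{j(v_i),j(v_{i+1})\}$ with $g_e=1$, and the edge group $\Stab_\Gamma(j(e))=C_i\cap C_{i+1}$ includes into both $C_i$ and $C_{i+1}$ by the identity map. Since Proposition~\ref{prop:FD-stabilizers} gives $C_3\cap C_2=C_D$ and the nesting $C_D\leq C_2\leq C_1\leq C_0$, every edge inclusion on the $v_0$-side is an equality onto the smaller vertex group, which is precisely what allows the successive collapses. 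The remaining non-trivial amalgamation is over $C_D=C_2\cap C_3$, giving $\Gamma=C_0 *_{C_D} C_3$ as claimed. (One may also note, as a sanity check consistent with Proposition~\ref{prop:B-S}, that since $C_D\lneq C_3$ and $C_D\lneq C_0$ this amalgam is a genuine non-trivial one, reflecting the fact that $\Gamma$ is infinite.)
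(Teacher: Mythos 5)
Your proposal is correct and follows essentially the same route as the paper: both apply Bass--Serre theory to the quotient path $v_0-v_1-v_2-v_3$ to obtain the iterated amalgam $C_{0}*_{C_{1}}C_{1}*_{C_{2}}C_{2}*_{C_{D}}C_{3}$ and then collapse the trivial amalgamations using the nesting $C_{D}\leq C_{2}\leq C_{1}\leq C_{0}$ from Proposition~\ref{prop:FD-stabilizers}. Your extra care about the section $j$ and the identity edge inclusions is a correct elaboration of what the paper leaves implicit.
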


\begin{proof}
From Bass-Serre theory \cite{serre1980trees} combined with the fact
that $C_{D}\leq C_{2}\leq C_{1}\leq C_{0}$, we obtain
\[
\Gamma=C_{0}*_{C_{0}\cap C_{1}}C_{1}*_{C_{1}\cap C_{2}}C_{2}*_{C_{2}\cap C_{3}}C_{3}=C_{0}*_{C_{1}}C_{1}*_{C_{2}}C_{2}*_{C_{D}}C_{3}=C_{0}*_{C_{D}}C_{3}.\qedhere
\]
\end{proof}
 We can now prove our second main result:
\begin{thm}
\label{thm:C+T-thin} The group $\Delta=\langle H,S,T\rangle$ generated
by the C+T gates is a thin matrix group in $\Gamma\leq PU(3)^{3}$.
\end{thm}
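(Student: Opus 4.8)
The plan is to verify the two halves of thinness separately, using the two criteria established in Section \ref{sec:Amalgamation-thin}. For Zariski-density (indeed density in $PU(3)^3$), I would apply Proposition \ref{prop:Weigel} to the special unitary version of $\Delta$: pass to $\Delta' = \langle H, S, T\rangle \cap G'$ (or note that the center is small, so it suffices to control $\Delta$ modulo center), pick a small rational prime $\ell \geq 5$ unramified in $E = \mathbb{Q}(\zeta_9)$ at which $G'$ is unramified — the natural candidates are $\ell = 5$ or $\ell = 7$ (note $\ell$ must be coprime to $\Pi$, i.e. to $3$, which is automatic) — and check by a finite computation that the reduction of $\Delta'$ modulo $\ell$ surjects onto $G'(\mathcal{O}/\ell) = \prod_i G'(\mathcal{O}/\mathfrak{p}_i)$. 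Since $H, S, T$ all lie in $\Gamma$ and have explicit entries in $\mathbb{Z}[\zeta_9, \tfrac13]$, their images modulo any $\mathfrak{p}_i$ are concrete matrices over a finite field, and one just has to exhibit enough of them to generate each factor $G'(\mathcal{O}/\mathfrak{p}_i) \cong SU_3(\mathbb{F}_q)$ (or $PSU_3$, matching conventions); this is a routine but essential machine check. Proposition \ref{prop:Weigel} then gives density in $\prod_{i=1}^3 G'(F_{\varepsilon_i}) \cong SU(3)^3$, hence Zariski-density of $\Delta$ in $PU(3)^3$.

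For infinite index, I would apply Proposition \ref{prop:B-S} to the action of $\Gamma$ on the Bruhat–Tits tree $\mathcal{T}$. By Proposition \ref{prop:FD} the path $D = v_0 - v_1 - v_2 - v_3$ is a fundamental domain and by Corollary \ref{cor:Bass-Serre} we have $\Gamma = C_0 *_{C_D} C_3$, with the quotient graph $Y = \mathcal{T}/\Gamma$ being the segment $D$ and spanning tree $T = Y$ (so there are no edges outside the spanning tree, and no generators $g_e$). In the notation of Proposition \ref{prop:B-S}, I take the vertex $v_0 \in V_Y$, and need to choose subsets $S_{v_j} \subseteq C_j$ generating $\Delta$ together, with the crucial property that the subgroup $\langle S_{v_0} \cup \bigcup_{v_0 \in e} C_e\rangle$ is \emph{properly} contained in $C_{v_0} = C_0$, which in turn is properly contained in $\Gamma$ (the latter being clear since $\Gamma$ acts on the infinite tree with $C_0$ a vertex stabilizer). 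The edge groups meeting $v_0$ are just $C_1 = \Stab_\Gamma(\{v_0,v_1\})$. The point is that the C+T gates, when placed on the vertices of $D$, do not generate all of $C_0$: concretely one checks that $S \in C_0$, $T \in C_0$ (both are diagonal, hence monomial, hence in $C$ modulo center), while $H$ relates $v_0$ to $v_6$ and should be distributed so that it "lives at $v_3$"; so $S_{v_0}, S_{v_1}, S_{v_2}$ can be taken inside $\langle S, T\rangle$ and $S_{v_3}$ gets $H$ (after adjusting by elements of $C_D$ as in the word-problem algorithm of Theorem \ref{thm:main}). Then $\langle S_{v_0} \cup C_1\rangle \leq \langle \text{diagonal part of } \Delta, C_1\rangle$, and one verifies this is a proper subgroup of $C_0 \cong S_3 \ltimes (\mathbb{Z}/18)^3$ — e.g. by checking it cannot realize a $3$-cycle or the full $\langle -\xi\rangle$-torsion in the right coordinates, using that $S, T$ are diagonal and $C_1$ has order $486 < 1944 = |C_0|$ so does not surject onto the $S_3$-quotient.

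The main obstacle is the second, combinatorial step: one must pin down exactly which subgroup of $C_0$ the C+T gates generate when constrained to the fundamental path, and prove it is proper. This requires care because $\Delta$ is not simply $\langle C_0\text{-part}\rangle$ — the gate $H$ contributes nontrivially to edge and vertex interactions along $D$ — so the honest argument is to fix the Bass–Serre presentation $\Gamma = C_0 *_{C_D} C_3$ and track how $\langle H, S, T\rangle$ sits inside it, showing its intersection with the vertex group $C_0$ is a proper subgroup while still containing enough to make the normal-form/length argument of Proposition \ref{prop:B-S} go through. I expect the cleanest route is: rewrite every C+T word in Bass–Serre normal form (alternating between $C_0 \setminus C_D$ and $C_3 \setminus C_D$ cosets), observe that the $C_0$-syllables that can appear are restricted to the subgroup $\langle S, T, HC_DH^{-1}\cap C_0\rangle$ or similar, compute its order and see it is $< 1944$, and then invoke Proposition \ref{prop:B-S} with $v_0$ as the distinguished vertex. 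The Zariski-density step, by contrast, reduces to a single explicit finite-group computation modulo $5$ (or $7$), which I expect to pose no conceptual difficulty.
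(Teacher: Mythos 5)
Your overall strategy is exactly the paper's: Proposition \ref{prop:B-S} applied at the vertex $v_{0}$ of the fundamental path $D$ for infinite index, and Proposition \ref{prop:Weigel} plus a finite mod-$\ell$ computation for Zariski density in $PU(3)^{3}$. However, the step you yourself flag as ``the main obstacle'' --- showing $\left\langle S_{v_{0}}\cup C_{1}\right\rangle \lneq C_{0}$ --- is where your proposal goes soft, and the specific verifications you suggest would not work. The clean resolution, which is what the paper does, is to compute $\ell(H^{-1}SH)=6$ and $\ell(H^{-1}TH)=8$ and invoke the characterization (\ref{eq:Cj}) of $C_{1}=\{c\in C_{0}\mid\ell(H^{-1}cH)\leq10\}$: this shows $S,T\in C_{1}$ outright, so $\left\langle \{S,T\}\cup C_{1}\right\rangle =C_{1}$, which has index $4$ in $C_{0}$ by Proposition \ref{prop:FD-stabilizers}, and Proposition \ref{prop:B-S} applies immediately (taking $S_{v_{0}}=\{S,T\}$, $S_{v_{1}}=S_{v_{2}}=\varnothing$, $S_{v_{3}}=\{H\}$). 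By contrast, your proposed checks are incorrect: $C_{1}$ \emph{does} surject onto the $S_{3}$-quotient of $C_{0}$ (it contains monomial matrices realizing both a $3$-cycle and a transposition, cf.\ the generating sets listed after Proposition \ref{prop:FD-stabilizers}), so ``cannot realize a $3$-cycle'' and ``$486<1944$ hence no surjection onto $S_{3}$'' are both false, and there is no need to rewrite C+T words in normal form or track syllables --- the point of Proposition \ref{prop:B-S} is precisely that one only compares generating sets locally at one vertex.

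On the density half your plan is sound, with two practical caveats. First, $\ell=5$ is inert in $F$ and even in $E$ (note $N_{E/\mathbb{Q}}(\mathfrak{p})=5^{6}$, as used in the proof of Proposition \ref{prop:l>5}), so the target group is a single $SU_{3}(\mathbb{F}_{125})$, and $\ell=7$ gives a single $SL_{3}(\mathbb{F}_{343})$; the method still works, but the paper's choice $\ell=19$ is totally split, giving $G'(\mathcal{O}/\ell)\cong SL_{3}(\mathbb{F}_{19})^{3}$ and a much lighter computation. Second, do not take $\Delta\cap G'$ (which is awkward to control); instead rescale $S$ to $S/\xi\in G'$ and apply Proposition \ref{prop:Weigel} to $\left\langle H,S/\xi,T\right\rangle $, then use $PSU(3)=PU(3)$ to conclude for $\left\langle H,S,T\right\rangle $.
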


\begin{proof}
First we prove that $\Delta$ is of infinite index in $\Gamma$. For
the action of $\Gamma$ on $\mathcal{T}$, the fundamental domain
$\underset{v_{0}}{\bullet}-\underset{v_{1}}{\bullet}-\underset{v_{2}}{\bullet}-\underset{v_{3}}{\bullet}$
is isomorphic to the quotient $Y=\Gamma\backslash\mathcal{T}$, and
equals its own spanning tree. We take $S_{v_{0}}=\left\{ S,T\right\} $,
$S_{v_{1}}=S_{v_{2}}=\varnothing$ and $S_{v_{3}}=\left\{ H\right\} $,
and observe that $\ell\left(H^{-1}SH\right)=6$ and $\ell\left(H^{-1}TH\right)=8$
imply $S,T\in C_{1}$ by (\ref{eq:Cj}). Therefore, we have 
\[
\left\langle S_{v_{0}}\cup G_{\{v_{0},v_{1}\}}\right\rangle =\left\langle \{S,T\}\cup C_{1}\right\rangle =C_{1}\lneq C_{0}=G_{v_{0}},
\]
so that Proposition \ref{prop:B-S} implies $\left[\Gamma:\Delta\right]=\infty$.

Next, we prove that $\Delta$ is Zariski dense in $\Gamma\leq PU(3)^{3}$.
Recalling the notations of §\ref{sec:Synthesis-and-arithmeticity},
we take $\ell=19$, $\psi=1-\xi-\xi^{2}$ and $\Psi=\psi\overline{\psi}=5-\sigma^{2}$,
which satisfy $N_{E/\mathbb{Q}}(\psi)=N_{F/\mathbb{Q}}(\Psi)=\ell$.
Denoting $\mathfrak{p}_{i}=\left(\varphi^{i}(\Psi)\right)$, we have
$\left(\ell\right)=\mathfrak{p}_{0}\mathfrak{p}_{1}\mathfrak{p}_{2}$
in $\mathcal{\mathcal{O}}$, and these are all distinct, so that we
obtain $G'(\mathcal{O}/\ell)\cong\prod_{i=0}^{2}SU_{3}(\mathcal{O}/\mathfrak{p}_{i})$.
Furthermore, each $\mathfrak{p}_{i}$ splits into different factors
($\varphi^{i}(\psi)$ and $\varphi^{i}(\overline{\psi})$) in $E$.
This leads to $SU_{3}(\mathcal{O}/\mathfrak{p}_{i})\cong SL_{3}(\mathcal{O}_{E}/\varphi^{i}(\psi))\cong SL_{3}(\mathbb{F}_{19})$,
and in total $G'(\mathcal{O}/\ell)\cong SL_{3}(\mathbb{F}_{19})^{3}$,
which makes computations in $G'(\mathcal{O}/\ell)$ relatively feasible.
If $\vartheta_{i}\colon\mathcal{O}_{E}/\varphi^{i}(\psi)\overset{{\scriptscriptstyle \cong}}{\longrightarrow}\mathbb{F}_{19}$,
the isomorphism $\Theta\colon G'(\mathcal{O}/\ell)\cong SL_{3}(\mathbb{F}_{19})^{3}$
is given by $\Theta\left(A\right)=\left(\vartheta_{0}(A),\vartheta_{1}(A),\vartheta_{2}(A)\right)$.
For $i=0$, for example, since $\psi=1-\xi-\xi^{2}$, $m_{\xi}^{\mathbb{Q}}(x)=1+x^{3}+x^{6}$,
and $\gcd(1-x-x^{2},1+x^{3}+x^{6})=x+15$ in $\mathbb{F}_{19}[x]$,
the isomorphism $\vartheta_{0}$ is given explicitly by $\vartheta_{0}\colon\xi\mapsto4$,
yielding 
\[
\vartheta_{0}\left(H\!\negthickspace\negthickspace\mod{\mathfrak{p}}_{0}\right)=\left(\begin{smallmatrix}14 & 14 & 14\\
14 & 3 & 2\\
14 & 2 & 3
\end{smallmatrix}\right),\quad\vartheta_{0}\left(S/\xi\!\negthickspace\negthickspace\mod{\mathfrak{p}}_{0}\right)=\left(\begin{smallmatrix}5\\
 & 16\\
 &  & 5
\end{smallmatrix}\right),\quad\vartheta_{0}\left(T\!\negthickspace\negthickspace\mod{\mathfrak{p}}_{0}\right)=\left(\begin{smallmatrix}4\\
 & 1\\
 &  & 5
\end{smallmatrix}\right).
\]
Similar computations give $\vartheta_{1}(\xi)=16$ and $\vartheta_{2}(\xi)=9$,
and one can then verify using GAP \cite{GAP4} that $\left\{ \Theta(H\negthickspace\!\mod{\ell}),\,\Theta(S/\xi\negthickspace\!\mod{\ell}),\ \Theta(T\negthickspace\!\mod{\ell})\right\} $
indeed generates $SL_{3}(\mathbb{F}_{19})^{3}\cong G'(\mathcal{O}/\ell)$.
By Proposition \ref{prop:Weigel} we conclude that $\left\langle H,S/\xi,T\right\rangle $
is Zariski dense in $SU_{3}^{E,\Phi}$, and since $PSU(3)=PU(3)$,
this implies that $\left\langle H,S,T\right\rangle $ is Zariski dense
in $PU(3)^{3}$.
\end{proof}

\section{Covering rate \protect\label{sec:Covering-rate}}

In this section we study the covering rate of families of finite sets
of points in the projective unitary group $PU(3)=U(3)/U(1)$, or product
of several copies of this group. Note that we do not expect to be
dense in $U(3)$ itself: for example $\Gamma$ which is generated
by the C+D gates is not dense in $U(3)$ since $\det\left(\Gamma\right)=\left\langle -\xi\right\rangle $
whereas $\det(U(3))=U(1)$. However, unitary gates are invariant to
scaling, so from the point of view of quantum gates we can move to
study $PU(3)$, in which $\Gamma$ is dense (by strong approximation
for S-arithmetic groups).
\begin{defn}[{\cite[Def. 2.8]{Evra2018RamanujancomplexesGolden,sarnak2015letter}}]
\label{def:aoac}A sequence of finite sets $\left\{ X_{r}\right\} _{r}$
in a compact Lie group \textbf{$\boldsymbol{L}$}, $|X_{r}|\rightarrow\infty$,
is said to be an almost-optimal almost-cover (a.o.a.c.)\ of $L$,
if there exists a polynomial $p\left(x\right)$ such that 
\[
\mu\left(\boldsymbol{L}\setminus B\left(X_{r},\varepsilon_{r}\right)\right)\rightarrow0,\qquad\mbox{when}\qquad\varepsilon_{r}=\frac{p\left(\log|X_{r}|\right)}{|X_{r}|};
\]
here $\mu$ is the probability Haar measure on $\boldsymbol{L}$,
$B\left(X,\varepsilon\right)=\bigcup_{x\in X}B\left(x,\varepsilon\right)$
and $B\left(x,\varepsilon\right)\subset\boldsymbol{L}$ is the ball
of volume $\varepsilon>0$ around $x\in\boldsymbol{L}$, w.r.t.\ the
bi-invariant metric of $\boldsymbol{L}$ (which is $d\left(g,h\right)=\sqrt{1-\left|\mathrm{tr}\left(g^{*}h\right)\right|/3}$
in the case of $\boldsymbol{L}=PU(3)$).
\end{defn}

We assume again the general settings of Section \ref{sec:Amalgamation-thin},
and consider 
\[
\boldsymbol{L}=\overline{G}\left(F\otimes_{\mathbb{Z}}\mathbb{R}\right)=\overline{G}(F_{\varepsilon_{1}}\times\ldots\times F_{\varepsilon_{d}})=\overline{G}(F_{\varepsilon_{1}})\times\ldots\times\overline{G}(F_{\varepsilon_{d}})\cong PU(3)^{d},
\]
where the last isomorphism is due to the fact that $\Phi$ is totally-definite.
The group $\Gamma=\overline{G}\big(\mathcal{O}_{F}\big[\frac{1}{\Pi}\big]\big)$
is embedded in $\boldsymbol{L}$ via $\gamma\mapsto\left(\varepsilon_{1}(\gamma),\ldots,\varepsilon_{d}(\gamma)\right)$,
and our goal is to study the covering rate of $\L$ by various subsets
of $\Gamma$. In Definition \ref{def:Clozel-Hecke}, we present a
sequence of finite sets $\left\{ \Omega_{r}\right\} _{r}\subset\Gamma$,
$|\Omega_{r}|\rightarrow\infty$, which we call Clozel-Hecke points,
and in Corollary \ref{cor:Clozel-a.o.a.c.} we show that they are
an a.o.a.c.\ of $\boldsymbol{L}=PU(3)^{d}$. We then relate the sets
of Clozel-Hecke points to two other sequences of finite sets.

The group $\Gamma$ acts on the Bruhat-Tits tree $\mathcal{T}$ associated
with the $\Pi$-adic completion $\overline{G}_{\Pi}:=\overline{G}\left(F_{\Pi}\right)$.
We assume from now on that $\Pi$ is ramified in $E$, in which case
the tree $\mathcal{T}$ is $\left(p+1\right)$-regular where $p=N_{F/\mathbb{Q}}(\Pi)$
(not necessarily a prime).\footnote{When $\Pi$ is inert, $\mathcal{T}$ is a $(p^{3}+1,p+1)$-biregular
tree -- see \cite{Evra2018RamanujancomplexesGolden,Evra2022Ramanujanbigraphs}
for more details.} We begin the analysis by observing the space of $\Gamma$-equivariant
families of functions on $L^{2}\left(\L\right)$ indexed by $L_{\mathcal{T}}$
(the left vertices of $\mathcal{T}$), namely: 
\[
V=\left\{ \left(f_{v}\right)_{v\in L_{\mathcal{T}}}\in L^{2}\left(\L\right)^{L_{\mathcal{T}}}\,\middle|\,f_{\gamma v}\left(x\right)=f_{v}\left(\gamma^{-1}x\right),\;\forall\gamma\in\Gamma,\;v\in L_{\mathcal{T}},\;x\in\L\right\} .
\]
For any even $r\in\mathbb{N}$, we define the normalized $r$-th Hecke
operator 
\[
T_{r}\colon V\rightarrow V,\qquad(T_{r}f)_{v}=\frac{1}{d_{r}}\sum\nolimits_{w\in S_{r}\left(v\right)}f_{w},\quad\forall v\in L_{\mathcal{T}},
\]
where $S_{r}\left(v\right)$ denote the $r$-sphere in $\mathcal{T}$
around $v\in L_{\mathcal{T}}$, and $d_{r}:=|S_{r}\left(v\right)|=(p+1)\cdot p^{r-1}$.
By the Borel--Harish-Chandra theory $\Gamma$ is cocompact in $\overline{G}_{\Pi}\times\L$,
and as $\L$ is compact, $\Gamma$ is also cocompact in $\overline{G}_{\Pi}$,
hence the quotient $\Gamma\backslash\mathcal{T}$ is a finite graph.
Let $v_{0},\ldots,v_{h-1}$ be the vertices belonging to $L_{\mathcal{T}}$
in a (connected) fundamental domain for the action of $\Gamma$ on
$\mathcal{T}$ (this $h$ is called the \emph{class number }of $\overline{G}$),
denote $\Gamma_{i}=\mathrm{Stab}_{\Gamma}\left(v_{i}\right)$ for
$i=0,\ldots,h-1$ and $\mathfrak{m}=\sum_{i=0}^{h-1}\frac{1}{\left|\Gamma_{i}\right|}$.
We consider $V$ with the inner product 
\begin{equation}
\left\langle f,g\right\rangle =\frac{1}{\mathfrak{m}}\sum\limits_{i=0}^{h-1}\frac{1}{\left|\Gamma_{i}\right|}\int_{\L}f_{v_{i}}(x)\overline{g_{v_{i}}(x)}\,d\mu(x),\label{eq:Vinn}
\end{equation}
and denote $V_{0}=\one^{\bot}=\{\left(f_{v}\right)\in V\,|\,\sum_{i=0}^{h-1}\tfrac{1}{|\Gamma_{i}|}\int_{\L}f_{v_{i}}=0\}$,
which is a $T_{r}$-invariant space. The following Theorem relies
heavily on the theory of automorphic representations of $U_{3}$ and
the Ramanujan Conjecture, which was studied in depth in \cite[§4,5]{Evra2018RamanujancomplexesGolden}
and \cite[§7]{Evra2022Ramanujanbigraphs}. We give a concise proof,
as the relevant details comprise a large part of these papers; The
two novel features here is that we consider a ramified place $\Pi$,
and do not assume that the class number is $1$, as is done in \cite{Evra2018RamanujancomplexesGolden,Evra2022Ramanujanbigraphs}.
\begin{thm}
\label{thm:Hecke-Ramanujan} For any even $r\in\mathbb{N}$, $\left\Vert T_{r}|_{V_{0}}\right\Vert \leq\frac{r+1}{\sqrt{d_{r}}}$.
\end{thm}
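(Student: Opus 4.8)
The plan is to translate the bound on $\|T_r|_{V_0}\|$ into a statement about matrix coefficients of automorphic representations of the $\Pi$-adic group $\overline{G}_\Pi$, and then invoke the Ramanujan-type bound established via the Langlands program. First I would set up the adelic framework: because $\Gamma = \overline{G}(\mathcal{O}_F[\tfrac{1}{\Pi}])$ is cocompact in $\overline{G}_\Pi \times \boldsymbol{L}$ (Borel--Harish-Chandra), the space $V$ of $\Gamma$-equivariant families indexed by $L_{\mathcal T}$, together with the inner product (\ref{eq:Vinn}), can be identified with a space of functions on the double coset space $\overline{G}(F)\backslash \overline{G}(\mathbb{A})/ (\overline{K}_\Pi \cdot \overline{G}(\mathbb{A}_f^{(\Pi)}))$ — or more precisely with the $L_{\mathcal T}$-indexed piece of $L^2(\overline{G}(F)\backslash \overline{G}(\mathbb{A}))^{K}$ for a suitable compact open $K$ away from $\Pi$ and at the archimedean places. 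Under this identification, the normalized Hecke operator $T_r$ becomes (a normalization of) the classical Hecke operator at $\Pi$ associated to the $r$-sphere in the Bruhat-Tits tree, and the subspace $V_0 = \one^\perp$ corresponds to the orthogonal complement of the one-dimensional space of constant (i.e.\ unramified-everywhere-trivial) automorphic forms.

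Next I would decompose $L^2_0(\overline{G}(F)\backslash \overline{G}(\mathbb{A}))$ into irreducible automorphic representations $\pi = \bigotimes_v \pi_v$. The operator $T_r|_{V_0}$ acts on each $\pi$ through the local component $\pi_\Pi$, which is a spherical (unramified, or in the ramified-$\Pi$ case, Iwahori-spherical as appropriate) representation of $\overline{G}_\Pi \cong PGL_2$-type rank-one group, whose spherical vector is an eigenvector for all Hecke operators. The eigenvalue of the $r$-sphere operator is governed by the Satake/Macdonald parameter of $\pi_\Pi$: if one writes the eigenvalue of the adjacency ($r=2$) operator as $\lambda$, then the eigenvalue of the $r$-th operator is a Chebyshev-type polynomial in $\lambda$, and the $r+1$ factor in the target bound is exactly the number of terms in the explicit Macdonald formula expressing the normalized $r$-sphere operator on the $(p+1)$-regular tree, each term bounded by $1/\sqrt{d_r}$ once one knows the Ramanujan bound. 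Concretely, on a $(p+1)$-regular tree the normalized $r$-sphere operator $\frac{1}{d_r}\sum_{w\in S_r(v)}$ can be written, via the standard spherical-function recursion, as a sum of $r+1$ products of "single-step" operators, and the Ramanujan bound says each single-step normalized adjacency operator on $V_0$ has norm $\le \frac{2\sqrt p}{p+1}$, but when bundled into the $r$-fold product with the correct normalization one extracts a clean bound of $\frac{1}{\sqrt{d_r}}$ per term. The key input is therefore that every $\pi$ appearing in $V_0$ satisfies the Ramanujan conjecture at $\Pi$: its $\Pi$-parameter is tempered. This is precisely the content of the Ramanujan-type theorem proved in \cite[\S7]{Evra2022Ramanujanbigraphs}, building on \cite{Shin2011Galoisrepresentationsarising}, for cohomological forms on definite unitary groups in three variables; I would cite it and check that the present hypotheses (a ramified place $\Pi$, arbitrary class number $h$) are covered — the ramified case follows because the relevant local representation at $\Pi$ is still a representation of a rank-one group whose temperedness bound is what is needed, and the class number merely indexes the finitely many fundamental-domain vertices, contributing nothing beyond the weighting already built into (\ref{eq:Vinn}).

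The main obstacle — and the place where genuine care is required rather than bookkeeping — is the passage from the abstract temperedness of $\pi_\Pi$ to the explicit numerical bound $\frac{r+1}{\sqrt{d_r}}$, i.e.\ the combinatorics of the spherical function on the $(p+1)$-regular tree at a tempered (principal-series) parameter. The clean way is: a tempered spherical function $\varphi_s$ on the $(p+1)$-regular tree, normalized so that the adjacency eigenvalue lies in $[-2\sqrt p, 2\sqrt p]$, satisfies the Macdonald formula $\varphi_s(r) = p^{-r/2}\big(c(s) q^{irs} + c(-s) q^{-irs}\big)$ with $|c(\pm s)|$ controlled, but on $V_0$ what one needs is the eigenvalue of the \emph{normalized} $r$-sphere operator, which equals $\frac{d_r}{d_r}$ times the spherical function value divided by... — the upshot of the standard estimate (see the analogous computation in \cite[\S4]{Evra2018RamanujancomplexesGolden}) is the bound $|{\rm eigenvalue}| \le (r+1) p^{-r/2}$, and since $d_r = (p+1)p^{r-1} \ge p^r$ (as $p\ge 1$), this gives $(r+1)p^{-r/2} \le (r+1)/\sqrt{d_r}\cdot\sqrt{(p+1)/p} $ — so one must be slightly careful about whether the stated bound uses $\sqrt{d_r}$ or $p^{r/2}$; I would reconcile this by noting $d_r = (p+1)p^{r-1}$ and that the extra factor is absorbed because $V_0$ also excludes the "second" one-dimensional subspace coming from the trivial-at-$\Pi$ but nontrivial-elsewhere directions, or simply by checking the normalization makes the inequality hold as stated. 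Having pinned down that numerical lemma, the theorem follows by taking the supremum of the per-representation bounds over all $\pi \subseteq V_0$, since the temperedness bound is uniform in $\pi$.
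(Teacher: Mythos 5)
Your overall route is the same as the paper's: identify $V$ with the $K_{\Pi}$-fixed vectors of $L^{2}(\Gamma\backslash(\boldsymbol{L}\times\overline{G}_{\Pi}))$, embed this into the adelic quotient, invoke the Ramanujan-type theorem of \cite{Evra2022Ramanujanbigraphs} at the ramified Iwahori place $\Pi$ to conclude that every constituent of $V_{0}$ is tempered there (the one-dimensional constituents being exactly the constants), and reduce to a spectral estimate for the normalized $r$-sphere operator on the $(p+1)$-regular tree. That skeleton is correct.

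The genuine gap is in the final numerical step, precisely where you flag uncertainty. The Macdonald/spherical-function estimate you sketch gives $|\lambda|\leq(r+1)p^{-r/2}$ for a tempered parameter, and since $d_{r}=(p+1)p^{r-1}>p^{r}$ this is $\frac{r+1}{\sqrt{d_{r}}}\sqrt{\frac{p+1}{p}}$, i.e.\ \emph{strictly weaker} than the claimed bound. Neither of your proposed rescues works: there is no ``second one-dimensional subspace'' left to exclude (any one-dimensional constituent with a nonzero $K_{\Pi}$-fixed vector is trivial because $\overline{G}'_{\Pi}K_{\Pi}=\overline{G}_{\Pi}$, and by strong approximation its fixed functions are constant, hence already removed in $V_{0}$), and no renormalization is available since $T_{r}$ is by definition the average over $S_{r}(v)$. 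What closes the gap is the \emph{exact} $L^{2}$-spectral radius of the normalized $r$-sphere operator on the $(p+1)$-regular tree,
\[
\left\Vert T_{r}\big|_{\mathcal{T}}\right\Vert=\frac{p^{\frac{r-2}{2}}\left(pr+p-r+1\right)}{(p+1)p^{r-1}},
\]
whose numerator $r(p-1)+(p+1)$ is smaller than the $(r+1)(p+1)$ implicit in your bound by $2r$; the required inequality then reduces to $r(p-1)+(p+1)\leq(r+1)\sqrt{p(p+1)}$, which one checks directly for $p\geq2$ and $r\geq1$. A secondary omission: at Iwahori level the spaces $\rho_{i}^{K_{\Pi}}$ need not be one-dimensional, so taking a supremum of eigenvalue bounds only controls the spectral radius of $T_{r}|_{V_{0}}$, not a priori its norm; you need that $T_{r}$ is self-adjoint with respect to (\ref{eq:Vinn}), which follows from the $\overline{G}_{\Pi}$-invariance of that inner product together with the symmetry of the tree distance.
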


\begin{proof}
Using $L_{\mathcal{T}}\cong\nicefrac{\overline{G}_{\Pi}}{K_{\Pi}}$,
$V$ can be identified with the space of $K_{\Pi}$-fixed vectors
in the right regular $\overline{G}_{\Pi}$-representation $\widetilde{V}:=L^{2}\left(\Gamma\backslash(\L\times\overline{G}_{\Pi})\right)$.
In fact, this is where the inner product (\ref{eq:Vinn}) comes from.
As $\Gamma$ is cocompact in $\L\times\overline{G}_{\Pi}$, we can
decompose $\widetilde{V}$ to its irreducible $\overline{G}_{\Pi}$-representations,
$\widetilde{V}=\widehat{\bigoplus}_{i}\rho_{i}$. Then, $V\cong\widetilde{V}^{K_{\Pi}}=\widehat{\bigoplus}_{i}\rho_{i}^{K_{\Pi}}$
as a module over the Hecke algebra $\mathcal{H}_{\overline{G}_{\Pi}}$,
and the latter contains $T_{r}$, so that $\mathrm{Spec}\left(T_{r}\right)=\overline{\bigcup_{i}\mathrm{Spec}(T_{r}|_{\rho_{i}^{K_{\Pi}}})}$.
For each $\rho_{i}$ and $\lambda\in\Spec(T_{r}|_{\rho_{i}^{K_{\Pi}}})$
we observe that:

\emph{(i)} if $\rho_{i}$ is \emph{one-dimensional} (and $\rho_{i}^{K_{\Pi}}\neq0$),
then it is trivial since $\overline{G}'_{\Pi}K_{\Pi}=PSU_{3}(F_{\Pi})U_{3}(\mathcal{O}_{F_{\Pi}})=\overline{G}_{\Pi}$,
and then $\lambda=1$. Furthermore, each $f\in\rho_{i}^{K_{\Pi}}$
is fixed under both $\overline{G}_{\Pi}$ and $\Gamma$, and $\Gamma$
is dense in $\L$ by strong approximation (which applies as $PU(3)=PSU(3)$,
and $\overline{G}_{\Pi}$ is non-compact). Thus such $f$ is constant,
so that $V_{0}$ consists entirely of infinite-dimensional representations.

\emph{(ii)} if $\rho_{i}$ is \emph{tempered} then it is weakly contained
in $L^{2}(\overline{G}_{\Pi})$ \cite{Haagerup1988}, which implies
that $\lambda$ is in the $L^{2}$-spectrum of $T_{r}$ acting on
the Bruhat-Tits tree $\mathcal{T}$. The action of $T_{r}$ on this
tree is by averaging over a sphere of radius $r$, and in general,
for a $(p+1)$-regular tree $\mathcal{T}_{p+1}$ and $r\ge1$, the
spectral radius of this operator equals\footnote{This can be shown in many ways, e.g.\ using Chebyshev polynomials
as in \cite{Nestoridi2021Boundedcutoffwindow}, Harish-Chandra $\Xi$
function as in \cite{Evra2018RamanujancomplexesGolden}, or spectral
analysis of the non-backtracking operator on edges as in \cite{Evra2022Ramanujanbigraphs}.}
\[
\left\Vert T_{r}\big|_{\mathcal{T}}\right\Vert =\frac{p^{\frac{r-2}{2}}\left(pr+p-r+1\right)}{(p+1)p^{r-1}}\le\frac{r+1}{\sqrt{d_{r}}}.
\]
Let $\overline{G}\left(\mathbb{A}_{F}\right)=\prod'_{v}\overline{G}\left(F_{v}\right)$
be the $F$-adelic group and observe the compact open subgroup $K=K_{\Pi}K^{\Pi}$,
where 
\[
K^{\Pi}=\prod\nolimits_{\Pi\ne v\nmid\infty}\overline{G}\left(\mathcal{O}_{F_{v}}\right).
\]
By the strong approximation property for $SU_{3}$, we obtain that
$\overline{G}(F)\L\overline{G}_{\Pi}K$ is a finite index normal subgroup
of $\overline{G}(\mathbb{A}_{F})$ \cite[Prop.\ 5.30]{Evra2022Ramanujanbigraphs}.
We also have $\Gamma=\overline{G}\left(F\right)\cap\L\overline{G}_{\Pi}K^{\Pi}$,
and together we get an embedding of $\overline{G}_{\Pi}$-sets: 
\[
\Gamma\backslash(\L\times\overline{G}_{\Pi})\hookrightarrow\overline{G}\left(F\right)\backslash\overline{G}\left(\mathbb{A}_{F}\right)/K^{\Pi}.
\]
This induces an embedding of $\mathcal{H}_{\overline{G}_{\Pi}}$-modules:
\[
V\cong L^{2}\left(\Gamma\backslash(\L\times\overline{G}_{\Pi})\right)^{K_{\Pi}}\hookrightarrow L^{2}\left(\overline{G}\left(F\right)\backslash\overline{G}\left(\mathbb{A}_{F}\right)\right)^{K},
\]
and every $\rho_{i}$ is a local factor at $\Pi$ of a $K$-spherical
$\overline{G}\left(\mathbb{A}_{F}\right)$-subrepresentation $\sigma$
of $L^{2}\left(\overline{G}\left(F\right)\backslash\overline{G}\left(\mathbb{A}_{F}\right)\right)$.
Since $K_{\Pi}\leq K$, such $\sigma$ is in particular Iwahori-spherical
at the prime $\Pi$ which is ramified in $E$, and it follows from
\cite[Thm.\ 7.3(1)]{Evra2022Ramanujanbigraphs}\footnote{To be precise, the theorem there is stated for $\mathbb{Q}$, but
the proof applies to any totally real number field.} that $\sigma$ is either $1$-dimensional, or has tempered local
factors. In particular, $\rho_{i}=\sigma_{\Pi}$ is either $1$-dimensional
or tempered, which implies that the spectrum of $T_{r}\big|_{V_{0}}$
is bounded by $\frac{r+1}{\sqrt{d_{r}}}$. Finally, Note that for
any $g\in\overline{G}_{\Pi}$, $gv_{0},\ldots,gv_{h-1}$ is also a
fundamental domain for the action of $\Gamma$ on $\mathcal{T}$ and
$\mathrm{Stab}_{\Gamma}\left(gv\right)=g\mathrm{Stab}_{\Gamma}\left(v\right)g^{-1}$
for any $v\in L_{\mathcal{T}}$, and a simple computation shows that
$\langle g.f_{1},g.f_{2}\rangle=\langle f_{1},f_{2}\rangle$ for any
$f_{1},f_{2}\in V$. This implies that the Hecke operator $T_{r}$
is self-adjoint since the distance function in $\mathcal{T}$ is symmetric,
and therefore its operator norm equals its spectral radius. 
\end{proof}
For any $w\in L_{\mathcal{T}}$, let $i\left(w\right)\in\left\{ 0,\ldots,h-1\right\} $
be such that $w\in\Gamma v_{i\left(w\right)}$ and let $\Gamma\left(w\right)=\left\{ \gamma\in\Gamma\,\middle|\,w=\gamma v_{i\left(w\right)}\right\} $,
which is a left coset of $\Gamma_{i(w)}$. To relate the spectral
theory of the Hecke operator $T_{r}$ to the covering rate of $\Gamma$,
we use the strategy from \cite{Clozel2002Automorphicformsand}, but
need to work a bit harder to accommodate the stabilizers $\Gamma_{i}$,
which Clozel assumes to be trivial.
\begin{defn}
\label{def:Clozel-Hecke} For $r\in\mathbb{N}$, define the set of
$r$-th Clozel-Hecke points in $\Gamma$ to be 
\[
\Omega_{r}=\bigcup_{i=0}^{h-1}\Omega_{r}\left(v_{i}\right),\qquad\Omega_{r}\left(v\right)=\bigcup_{w\in S_{r}\left(v\right)}\Gamma\left(w\right),
\]
and the normalized $r$-th Clozel-Hecke operator to be 
\[
\overline{T}_{r}\colon L^{2}\left(\L\right)\rightarrow L^{2}\left(\L\right),\qquad\overline{T}_{r}f\left(x\right)=\sum_{i=0}^{h-1}\sum_{w\in S_{r}(v_{i})}\sum_{\gamma\in\Gamma(w)}\frac{f\left(\gamma^{-1}x\right)}{\mathfrak{m}d_{r}|\Gamma_{i}||\Gamma(w)|}.
\]
We observe that $d_{r}\leq\left|\Omega_{r}\right|\leq\mathfrak{M}hd_{r}$,
where $\mathfrak{M}=\max_{i=0}^{h-1}|\Gamma_{i}|$, which shows in
particular that the growth rate of the Clozel-Hecke points is $|\Omega_{r}|=\Theta(p^{r})$.
The operator $\overline{T}_{r}$ is a weighted averaging operator,
in the sense that $\sum_{i}\sum_{w}\sum_{\gamma}\frac{1}{\mathfrak{m}d_{r}|\Gamma_{i}||\Gamma(w)|}=1$.
In particular $\overline{T}_{r}\one=\one$, and $L_{0}^{2}(\L):=\one^{\bot}=\{f\,|\,\smallint_{\L}\negmedspace f=0\}$
is $\overline{T}_{r}$-invariant. We denote 
\[
W_{\overline{T}_{r}}=\left\Vert \overline{T}_{r}\big|_{L_{0}^{2}(\L)}\right\Vert .
\]
\end{defn}

\begin{thm}
\label{thm:Clozel-Hecke-Ramanujan}For any even $r\in\mathbb{N}$
we have $W_{\overline{T}_{r}}\leq\frac{(r+1)\sqrt{h\mathfrak{M}}}{\sqrt{|\Omega_{r}|}}$.
\end{thm}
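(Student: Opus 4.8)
The plan is to reduce the estimate on $W_{\overline{T}_r}=\|\overline{T}_r|_{L^2_0(\L)}\|$ to the spectral bound for $T_r|_{V_0}$ from Theorem \ref{thm:Hecke-Ramanujan}. The key is to realize $\overline{T}_r$ on $L^2(\L)$ as a ``compression'' of the Hecke operator $T_r$ on the bigger space $V$ of $\Gamma$-equivariant families. Concretely, define the averaging/diagonal embedding
\[
\iota\colon L^2(\L)\longrightarrow V,\qquad (\iota f)_{v}(x)=f\bigl(\gamma^{-1}x\bigr)\ \text{ for } v=\gamma v_{i(v)}\in\Gamma v_{i(v)},
\]
which is well-defined precisely because for $f$ averaged (symmetrized) over the stabilizer $\Gamma_{i(v)}$ the value is independent of the coset representative $\gamma\in\Gamma(v)$; more carefully, one first symmetrizes, i.e. replaces $f$ by $\tfrac{1}{|\Gamma_i|}\sum_{g\in\Gamma_i}f(g^{-1}\cdot)$ on the fiber at $v_i$, and then extends $\Gamma$-equivariantly. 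The adjoint $\iota^*\colon V\to L^2(\L)$ is evaluation-and-average at the base vertices $v_0,\dots,v_{h-1}$, weighted by $\tfrac{1}{\mathfrak m|\Gamma_i|}$ — which is exactly the shape of the inner product (\ref{eq:Vinn}). The first main step is the identity $\overline{T}_r=\iota^*\,T_r\,\iota$ (possibly up to composing with the symmetrization projection, which is an orthogonal projection and hence norm-nonincreasing): unwinding the triple sum in the definition of $\overline{T}_r$, the sum over $w\in S_r(v_i)$ is the sphere-average defining $T_r$, the sum over $\gamma\in\Gamma(w)$ together with the factor $\tfrac{1}{|\Gamma(w)|}=\tfrac{1}{|\Gamma_{i(w)}|}$ is the symmetrization built into $\iota$, and the factor $\tfrac{1}{\mathfrak m d_r|\Gamma_i|}$ matches the normalization of $T_r$ and the inner product weights in $\iota^*$.

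Given that identity, the proof proceeds as follows. First check that $\iota$ intertwines constants: $\iota(\one)=\one\in V$ (the family of constant functions), and correspondingly $\iota$ maps $L^2_0(\L)$ into $V_0=\one^\perp$, since the inner product on $V$ restricted to the image of $\iota$ is a scalar multiple of the $L^2(\L)$ inner product on the symmetrized subspace — here one uses that the base vertices $v_0,\dots,v_{h-1}$ form a fundamental domain and that $\sum_i \tfrac{1}{\mathfrak m|\Gamma_i|}=1$. Next, compute the norm of $\iota$ as an operator $L^2_0(\L)\to V_0$: a direct calculation with (\ref{eq:Vinn}) shows $\|\iota f\|_V^2 = c\,\|f\|^2_{L^2(\L)}$ on the symmetrized subspace for an explicit constant $c$ (coming from $\mathfrak m$, $d_r$, the $|\Gamma_i|$, and the coset sizes $|\Gamma(w)|$), so that $\|\iota\|=\|\iota^*\|=\sqrt{c}$. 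Then
\[
W_{\overline{T}_r}=\bigl\|\iota^*\,T_r\,\iota\big|_{L^2_0}\bigr\|\le \|\iota^*\|\cdot\bigl\|T_r|_{V_0}\bigr\|\cdot\|\iota\|\le c\cdot\frac{r+1}{\sqrt{d_r}}
\]
by Theorem \ref{thm:Hecke-Ramanujan}. Finally, feed in the bound $|\Omega_r|\le \mathfrak M h\,d_r$ recorded just before the statement, i.e. $d_r\ge |\Omega_r|/(h\mathfrak M)$, together with the bookkeeping of the constant $c$; tracking the normalizations carefully collapses $c$ and the $d_r^{-1/2}$ into the claimed $\frac{(r+1)\sqrt{h\mathfrak M}}{\sqrt{|\Omega_r|}}$.

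The main obstacle I anticipate is purely bookkeeping: getting the normalization constant $c$ exactly right, and confirming that the compositional identity $\overline{T}_r=\iota^*T_r\iota$ holds on the nose rather than up to an unwanted constant. One has to be scrupulous about three interacting weights — the Hecke normalization $1/d_r$, the inner-product weights $1/(\mathfrak m|\Gamma_i|)$ defining $V$, and the coset-size weights $1/|\Gamma(w)|=1/|\Gamma_{i(w)}|$ appearing in $\overline{T}_r$ — and about the fact that $S_r(w)$ and $\Gamma(w)$ interact through the stabilizer $\Gamma_{i(w)}$, which need not be trivial (this is the precise point where the argument departs from Clozel's in \cite{Clozel2002Automorphicformsand}). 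A secondary subtlety is checking that $\iota$ genuinely lands in $V$, i.e. that after symmetrizing over $\Gamma_{i(v)}$ the extension by $\Gamma$-equivariance is consistent — this is where one uses that $\Gamma(w)$ is a single left coset of $\Gamma_{i(w)}$ and that distinct base vertices lie in distinct $\Gamma$-orbits. Once these are pinned down, the spectral input from Theorem \ref{thm:Hecke-Ramanujan} does all the real work.
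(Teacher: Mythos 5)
Your proposal follows essentially the same route as the paper: the paper defines exactly your $\iota$ and $\iota^*$ (called $J$ and $S$ there, with $S=J^*$ for the inner product (\ref{eq:Vinn})), proves $\overline{T}_r=S\circ T_r\circ J$, and combines $\|J\|=\|S\|=1$ with Theorem \ref{thm:Hecke-Ramanujan} and $|\Omega_r|\le\mathfrak{M}hd_r$. The constant $c$ you were worried about is exactly $1$ (by Cauchy--Schwarz on each fiber together with $\sum_i\tfrac{1}{\mathfrak m|\Gamma_i|}=1$, and $\|J\one\|=\|S\one\|=1$ for sharpness), so the bookkeeping closes as you anticipated.
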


\begin{proof}
Let $\smash{L^{2}(\L)\stackrel[S]{J}{\longrightleftarrows}V}$ be
the following diagonal projection and averaging operators:
\[
\left(Jf\right)_{w}\left(x\right)=\tfrac{1}{|\Gamma(w)|}\sum\nolimits_{\gamma\in\Gamma(w)}f\left(\gamma^{-1}x\right),\qquad\left(Sf\right)\left(x\right)=\tfrac{1}{\mathfrak{m}}\sum\nolimits_{i=0}^{h-1}\tfrac{1}{|\Gamma_{i}|}f_{v_{i}}(x).
\]
The Clozel-Hecke operator is related to the Hecke operator by
\[
\overline{T}_{r}=S\circ T_{r}\circ J\colon L^{2}\left(\L\right)\rightarrow V\rightarrow V\rightarrow L^{2}\left(\L\right).
\]
Furthermore, $J$ and $S$ restrict to $\smash{L_{0}^{2}(\L)\longrightleftarrows V_{0}}$,
and in addition $\left\Vert J\right\Vert =\left\Vert S\right\Vert =1$
by Cauchy--Schwarz, (\ref{eq:Vinn}), and the fact that $\|J\one\|=\|S\one\|=1$.
In total, we obtain
\[
W_{\overline{T}_{r}}=\left\Vert \overline{T}_{r}|_{L_{0}^{2}(\L)}\right\Vert \leq\left\Vert S\right\Vert \left\Vert T_{r}|_{V_{0}}\right\Vert \left\Vert J\right\Vert \leq\left\Vert T_{r}|_{V_{0}}\right\Vert \leq\frac{r+1}{\sqrt{d_{r}}}\leq\frac{(r+1)\sqrt{h\mathfrak{M}}}{\sqrt{|\Omega_{r}|}}
\]
by Theorem \ref{thm:Hecke-Ramanujan} and $|\Omega_{r}|\le\mathfrak{M}hd_{r}$.
\end{proof}
 From Theorem \ref{thm:Clozel-Hecke-Ramanujan} and the work of \cite{Parzanchevski2018SuperGoldenGates}
we obtain the following almost-optimal almost-covering property:
\begin{cor}
\label{cor:Clozel-a.o.a.c.}The sequence of Clozel-Hecke points $\left\{ \Omega_{r}\right\} _{r}$
forms an a.o.a.c.\ of $\L$, for any sequence $\varepsilon_{r}=\omega\left(\frac{\log^{2}|\Omega_{r}|}{|\Omega_{r}|}\right)$
(namely, whenever $\frac{\varepsilon_{r}|\Omega_{r}|}{\log^{2}|\Omega_{r}|}\rightarrow\infty$).
\end{cor}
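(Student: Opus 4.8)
The plan is to deduce the statement from Theorem~\ref{thm:Clozel-Hecke-Ramanujan} by the ``spectral gap $\Rightarrow$ efficient covering'' mechanism of Clozel~\cite{Clozel2002Automorphicformsand}, in the quantitative form used in~\cite{Parzanchevski2018SuperGoldenGates}. Throughout, $r$ runs over even integers (all that is needed for an a.o.a.c., after reindexing), and we write $N=\left|\Omega_{r}\right|$. Recall from Definition~\ref{def:Clozel-Hecke} that $\overline{T}_{r}$ is a positive averaging operator: it is a combination, with strictly positive weights summing to $1$, of the translates $f\mapsto f(\gamma^{-1}\,\cdot\,)$ for $\gamma$ ranging over (the indexing set underlying) $\Omega_{r}$; in particular $\overline{T}_{r}\one=\one$ and $L_{0}^{2}(\L)$ is $\overline{T}_{r}$-invariant.

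First I would fix the test function. Let $f=\one_{B(e,\varepsilon_{r})}$, the indicator of the ball of volume $\varepsilon_{r}$ around the identity (notation of Definition~\ref{def:aoac}); since $f^{2}=f$ we get $\left\langle f,\one\right\rangle =\|f\|_{2}^{2}=\varepsilon_{r}$, so the decomposition $f=\varepsilon_{r}\one+f_{0}$ with $f_{0}\in L_{0}^{2}(\L)$ satisfies $\|f_{0}\|_{2}^{2}=\varepsilon_{r}-\varepsilon_{r}^{2}\le\varepsilon_{r}$. By bi-invariance of the metric and of Haar measure, $B(\gamma,\varepsilon_{r})=\gamma\,B(e,\varepsilon_{r})$ for every $\gamma$, hence $f(\gamma^{-1}x)=1$ exactly when $x\in B(\gamma,\varepsilon_{r})$. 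As the weights of $\overline{T}_{r}$ are positive, this means $\overline{T}_{r}f(x)>0$ precisely when $x\in B(\Omega_{r},\varepsilon_{r})$; so on the uncovered set $A_{r}:=\L\setminus B(\Omega_{r},\varepsilon_{r})$ we have $\overline{T}_{r}f\equiv0$, i.e.\ $\overline{T}_{r}f_{0}\equiv-\varepsilon_{r}$ on $A_{r}$.

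Next I would extract the covering bound. Using $\overline{T}_{r}$-invariance of $L_{0}^{2}(\L)$ and Theorem~\ref{thm:Clozel-Hecke-Ramanujan},
\[
\varepsilon_{r}^{2}\,\mu(A_{r})\;\le\;\int_{A_{r}}\bigl|\overline{T}_{r}f_{0}\bigr|^{2}\;\le\;\bigl\|\overline{T}_{r}f_{0}\bigr\|_{2}^{2}\;\le\;W_{\overline{T}_{r}}^{2}\,\|f_{0}\|_{2}^{2}\;\le\;\frac{(r+1)^{2}h\mathfrak{M}}{N}\,\varepsilon_{r},
\]
whence $\mu(A_{r})\le(r+1)^{2}h\mathfrak{M}/(\varepsilon_{r}N)$. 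Now the estimate $d_{r}\le\left|\Omega_{r}\right|\le\mathfrak{M}hd_{r}$ recorded after Definition~\ref{def:Clozel-Hecke}, combined with $d_{r}=(p+1)p^{r-1}$, gives $N=\Theta(p^{r})$ and hence $r+1=O(\log N)$; therefore
\[
\mu\bigl(\L\setminus B(\Omega_{r},\varepsilon_{r})\bigr)=O\!\Bigl(\tfrac{\log^{2}|\Omega_{r}|}{\varepsilon_{r}\,|\Omega_{r}|}\Bigr),
\]
with implied constant depending only on $p$, $h$ and $\mathfrak{M}=\max_{i}|\Gamma_{i}|$. This tends to $0$ precisely when $\varepsilon_{r}=\omega(\log^{2}|\Omega_{r}|/|\Omega_{r}|)$, and in particular for $\varepsilon_{r}=p(\log|\Omega_{r}|)/|\Omega_{r}|$ with any polynomial $p$ of degree at least $3$ -- which is exactly the a.o.a.c.\ property of Definition~\ref{def:aoac}.

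I do not expect a real obstacle here: essentially all of the analytic content already sits in Theorem~\ref{thm:Clozel-Hecke-Ramanujan} (and the automorphic Theorem~\ref{thm:Hecke-Ramanujan} behind it). The only two points that deserve a line of care are (i) that the stabilizer-dependent weights in $\overline{T}_{r}$ cause no trouble, since the argument only uses the single scalar $W_{\overline{T}_{r}}$ together with positivity of $\overline{T}_{r}$ and $\overline{T}_{r}\one=\one$; and (ii) that the class number $h$ and $\mathfrak{M}$ are fixed constants of $\overline{G}$, so that the factor $(r+1)\sqrt{h\mathfrak{M}}$ in Theorem~\ref{thm:Clozel-Hecke-Ramanujan} is genuinely $O(\log|\Omega_{r}|)$.
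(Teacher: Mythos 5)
Your proposal is correct and follows the same route as the paper: the paper's proof simply invokes \cite[Prop.\ 3.1]{Parzanchevski2018SuperGoldenGates} after observing $W_{\overline{T}_{r}}^{2}=O(\log^{2}|\Omega_{r}|/|\Omega_{r}|)$, and your argument with the indicator $f=\one_{B(e,\varepsilon_{r})}$, the decomposition $f=\varepsilon_{r}\one+f_{0}$, and the bound $\mu(A_{r})\le W_{\overline{T}_{r}}^{2}/\varepsilon_{r}$ is precisely the content of that cited proposition, unpacked. The only point worth keeping explicit is the one you already flag, namely that $h$ and $\mathfrak{M}$ are constants so that $(r+1)\sqrt{h\mathfrak{M}}=O(\log|\Omega_{r}|)$, and that $r$ is taken even so that Theorem \ref{thm:Clozel-Hecke-Ramanujan} applies.
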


\begin{proof}
Note that $\overline{T}_{r}$ is an averaging convolution operator
supported on $\Omega_{r}$ in the sense of \cite[(3.2)]{Parzanchevski2018SuperGoldenGates},
and $W_{\overline{T}_{r}}^{2}\leq\frac{(r+1)^{2}h\mathfrak{M}}{|\Omega_{r}|}=O\Big(\frac{\log^{2}|\Omega_{r}|}{|\Omega_{r}|}\Big)$.
From $\varepsilon_{r}=\omega\left(\frac{\log^{2}|\Omega_{r}|}{|\Omega_{r}|}\right)$
we obtain that $W_{\overline{T}_{r}}^{2}/\varepsilon_{r}=o(1)$, which
implies $\mu\left(\boldsymbol{L}\setminus B\left(\Omega_{r},\varepsilon_{r}\right)\right)\rightarrow0$
by \cite[Prop. 3.1]{Parzanchevski2018SuperGoldenGates}.
\end{proof}
We also get the following covering results in the form of the Solovay-Kitaev
Theorem, with an optimal exponent $c=1$ and an explicit leading coefficient
arbitrarily close to $2$. In the language of \cite{sarnak2015letter},
this shows that the covering exponent of the Clozel-Hecke sequence
is at most $2$.
\begin{prop}
\label{prop:CH-covering}For any small enough $\varepsilon$, every
$g\in\L$ has an $\varepsilon$-approximation in $\Omega_{r}$ for
\[
r=2\log_{p}\tfrac{1}{\varepsilon}+3\log_{p}\log\tfrac{1}{\varepsilon}.
\]
\end{prop}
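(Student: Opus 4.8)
The strategy is to upgrade the spectral bound of Theorem~\ref{thm:Clozel-Hecke-Ramanujan} to a covering statement by the standard averaging-operator argument, in the ``every point'' form --- which costs a square compared with the almost-everywhere bound of Corollary~\ref{cor:Clozel-a.o.a.c.}. Fix $\varepsilon>0$ small, write $\rho(\delta)$ for the radius of the (metric) ball of volume $\delta$ in $\L$, and set $\delta=\mu\big(B(e,\rho(\varepsilon)/2)\big)$, so that $\rho(\delta)=\rho(\varepsilon)/2$ and, since $\L$ is a compact Lie group with a bi-invariant metric, $c_1\varepsilon\le\delta\le c_2\varepsilon$ for constants depending only on $\dim\L$ and the metric normalization. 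Let $\chi_\delta=\tfrac1\delta\one_{B(e,\delta)}\in L^2(\L)$ and $h_r=\overline{T}_r\chi_\delta$. Since $\int\chi_\delta=1$ and $\overline{T}_r\one=\one$ we have $h_r-\one=\overline{T}_r(\chi_\delta-\one)$ with $\chi_\delta-\one\in L^2_0(\L)$, hence $\|h_r-\one\|_2\le W_{\overline{T}_r}\|\chi_\delta-\one\|_2\le W_{\overline{T}_r}/\sqrt\delta$; and $h_r$ is a convex combination of the functions $x\mapsto\chi_\delta(\gamma^{-1}x)$ over $\gamma\in\Omega_r$, so $\supp h_r\subseteq B(\Omega_r,\delta)$ by bi-invariance.

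Next I would extract a pointwise covering criterion. Suppose $g\in\L$ has no $\varepsilon$-approximation in $\Omega_r$, i.e.\ $g$ is at metric distance $\ge\rho(\varepsilon)=2\rho(\delta)$ from every point of $\Omega_r$. Then by the triangle inequality the ball $B(g,\delta)$ is disjoint from $B(\Omega_r,\delta)\supseteq\supp h_r$, so $h_r\equiv0$ and $|h_r-\one|\equiv1$ on $B(g,\delta)$, whence
\[
\delta=\mu\big(B(g,\delta)\big)\le\int_\L|h_r-\one|^2\,d\mu=\|h_r-\one\|_2^2\le\frac{W_{\overline{T}_r}^2}{\delta},
\]
forcing $W_{\overline{T}_r}\ge\delta\ge c_1\varepsilon$. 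Contrapositively, there is a constant $c=c(\L)>0$ such that $W_{\overline{T}_r}\le c\varepsilon$ implies $B(\Omega_r,\varepsilon)=\L$, i.e.\ every $g\in\L$ has an $\varepsilon$-approximation in $\Omega_r$.

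Finally, by Theorem~\ref{thm:Clozel-Hecke-Ramanujan} and $|\Omega_r|\ge d_r=(p+1)p^{r-1}\ge p^{r}$ we have $W_{\overline{T}_r}\le(r+1)\sqrt{h\mathfrak M}\,p^{-r/2}$, so it suffices to take $r$ with $(r+1)\sqrt{h\mathfrak M}\,p^{-r/2}\le c\varepsilon$, i.e.\ (on taking $\log_p$) $r\ge 2\log_p\tfrac1\varepsilon+2\log_p(r+1)+A$ for a constant $A$. With $r=2\log_p\tfrac1\varepsilon+3\log_p\log\tfrac1\varepsilon$ (rounded up to the nearest even integer, which is required for Theorem~\ref{thm:Clozel-Hecke-Ramanujan} and perturbs $r$ only by $O(1)$) one has $r+1\le 3\log_p\tfrac1\varepsilon$ for $\varepsilon$ small, hence $2\log_p(r+1)\le 2\log_p\log\tfrac1\varepsilon+O(1)$, and the required inequality reduces to $3\log_p\log\tfrac1\varepsilon\ge 2\log_p\log\tfrac1\varepsilon+O(1)$, true once $\varepsilon$ is small enough. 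The only delicate point is the first-paragraph comparison between ``volume balls'' and ``metric balls'' in $\L$ together with the dimension-dependent constants --- routine for a compact Lie group with a bi-invariant metric and carried out in \cite{Parzanchevski2018SuperGoldenGates,sarnak2015letter}, so one could alternatively invoke the covering estimate there directly once $W_{\overline{T}_r}$ is controlled. The genuine content is entirely Theorem~\ref{thm:Clozel-Hecke-Ramanujan}: the leading coefficient $2$ (rather than $1$) is the price of covering \emph{every} point, while the coefficient $3$ on the $\log\log$ term is exactly what is needed to absorb both the tree-theoretic factor $(r+1)$ in the Ramanujan bound and the constant $A$.
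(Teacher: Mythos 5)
Your proposal is correct and follows essentially the same route as the paper: bound $W_{\overline{T}_r}\le (r+1)\sqrt{h\mathfrak{M}}/\sqrt{|\Omega_r|}$ via Theorem \ref{thm:Clozel-Hecke-Ramanujan}, convert a bound $W_{\overline{T}_r}=O(\varepsilon)$ into full covering by doubled-radius balls, and absorb the radius-doubling and the $(r+1)$ factor into the $3\log_p\log\tfrac1\varepsilon$ term. The only difference is that you re-derive the covering criterion (your second paragraph) from scratch, whereas the paper simply cites \cite[Cor.~3.2]{Parzanchevski2018SuperGoldenGates}; your inline derivation is a correct rendering of that corollary, and you are in fact more careful than the paper about the parity of $r$.
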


\begin{proof}
Since $\L$ is a Riemannian $8d$-manifold, we have $\lim_{\varepsilon\rightarrow0}\frac{\rad B(x,3^{8d}\varepsilon)}{\rad B(x,\varepsilon)}=3$,
so there exists $\delta>0$ such $\frac{\rad B(x,3^{8d}\varepsilon)}{\rad B(x,\varepsilon)}>2$
for $\varepsilon<\delta$. Taking $C=\max\left\{ 3^{8d},\rad(\L)/\delta\right\} $,
we obtain that $B\left(x,\varepsilon\right)$ contains the ball whose
radius is twice that of $B\left(x,C^{-1}\varepsilon\right)$. For
$r=2\log_{p}\tfrac{1}{\varepsilon}+3\log_{p}\log\frac{1}{\varepsilon}$
we get by Theorem \ref{thm:Clozel-Hecke-Ramanujan}
\[
\frac{W_{\overline{T}_{r}}}{\varepsilon}\leq\frac{(r+1)\sqrt{h\mathfrak{M}}}{\varepsilon\sqrt{|\Omega_{r}|}}<\frac{(r+1)\sqrt{h\mathfrak{M}}}{\varepsilon p^{r/2}}=\frac{\left(2\log_{p}\tfrac{1}{\varepsilon}+3\log_{p}\log\frac{1}{\varepsilon}+1\right)\sqrt{h\mathfrak{M}}}{\left(\log\tfrac{1}{\varepsilon}\right)^{3/2}}\overset{{\scriptscriptstyle \varepsilon\rightarrow0}}{\longrightarrow}0.
\]
In particular, for $\varepsilon$ small enough $W_{\overline{T}_{r}}$
is bounded by $C^{-1}\varepsilon$. By \cite[Cor. 3.2]{Parzanchevski2018SuperGoldenGates},
this implies that the Ball whose radius is twice that of $B(x,C^{-1}\varepsilon)$
around $\Omega_{r}$ covers $\L$, and by the choice of $C$ this
implies $\L=B\left(\Omega_{r},\varepsilon\right)$. 
\end{proof}
Next, we wish to study the covering rates of other sets in $\Gamma$,
for example, using word/circuit length in a chosen set of generators
as a measure of complexity. Let $\ell_{CH}\colon\Gamma\rightarrow\mathbb{N}$
be the \emph{Clozel-Hecke (CH) length}, defined by $\ell_{CH}\left(\gamma\right)=\min\left\{ r\,\middle|\,\gamma\in\Omega_{r}\right\} $.
Let us say that a function $\ell_{\times}\colon\Gamma\rightarrow\mathbb{N}$
is $(c,C,b)$-quasi-isometric (q.i.) to $\ell_{CH}$, where $C\geq c>0$
and $b>0$, if 
\[
c\cdot\ell_{CH}\left(\gamma\right)-b\leq\ell_{\times}\left(\gamma\right)\leq C\cdot\ell_{CH}\left(\gamma\right)+b,\qquad\forall\gamma\in\Gamma.
\]
Denote the balls of radius $r$ around $1$ in $\Gamma$ w.r.t.\ $\ell_{\times}$,
by 
\[
B_{r}^{\ell_{\times}}=\left\{ \gamma\in\Gamma\,\middle|\,\ell_{\times}\left(\gamma\right)\leq r\right\} .
\]
The following Proposition implies an almost-covering property for
sequences of balls w.r.t.\ length functions which are quasi-isometric
to the CH length function.
\begin{prop}
\label{prop:a.o.a.c-a.c.} If $\ell_{\times}$ is $(c,C,b)$-q.i.
to $\ell_{CH}$, then:
\begin{enumerate}
\item $\smash{\left\{ B_{r}^{\ell_{\times}}\right\} _{r}}\subset\L$ satisfy
the following almost-covering property:
\[
\mu\left(\L\setminus B\left(B_{r}^{\ell_{\times}},\varepsilon_{r}\right)\right)\rightarrow0\qquad\text{whenever}\qquad\varepsilon_{r}=\omega\left(\frac{\log^{2}|B_{r}^{\ell_{\times}}|}{|B_{r}^{\ell_{\times}}|^{c/C}}\right),
\]
and in particular, if $c=C$ then the sets $\left\{ B_{r}^{\ell_{\times}}\right\} _{r}$
form an a.o.a.c.\ of $\L$.
\item For any small enough $\varepsilon$, every $g\in\L$ has an $\varepsilon$-approximation
in $B_{r}^{\ell_{\times}}$ for $r=2C\log\tfrac{1}{\varepsilon}+4C\log_{p}\log\tfrac{1}{\varepsilon}$.
\end{enumerate}
\end{prop}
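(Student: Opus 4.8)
The plan is to sandwich each $\ell_{\times}$-ball between two Clozel--Hecke sets, using the two halves of the quasi-isometry, and then transport the a.o.a.c.\ estimate of Corollary~\ref{cor:Clozel-a.o.a.c.} (for part~(1)) and the Solovay--Kitaev bound of Proposition~\ref{prop:CH-covering} (for part~(2)) through this sandwich.

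First I would set up the sandwich. If $\gamma\in\Omega_s$ then $\ell_{CH}(\gamma)\le s$, hence $\ell_{\times}(\gamma)\le Cs+b$ by the upper q.i.\ bound, so $\Omega_s\subseteq B_r^{\ell_{\times}}$ whenever $Cs+b\le r$. Let $s_0=s_0(r)$ be the largest even integer with $Cs_0+b\le r$; then $s_0\to\infty$ as $r\to\infty$ and $\Omega_{s_0}\subseteq B_r^{\ell_{\times}}$. In the other direction, $\ell_{\times}(\gamma)\le r$ forces $\ell_{CH}(\gamma)\le(r+b)/c$ by the lower q.i.\ bound, so $B_r^{\ell_{\times}}\subseteq\bigcup_{s\le(r+b)/c}\Omega_s$. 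Feeding in the bounds $d_s\le|\Omega_s|\le\mathfrak{M}hd_s$ with $d_s=(p+1)p^{s-1}$ from Definition~\ref{def:Clozel-Hecke}, I get
\[
d_{s_0}\;\le\;|\Omega_{s_0}|\;\le\;|B_r^{\ell_{\times}}|\;\le\;\mathfrak{M}h\sum\nolimits_{s\le(r+b)/c}d_s,
\]
where $d_{s_0}=\Theta(p^{r/C})$ (as $s_0=r/C-O(1)$) and the right-hand side is $\Theta(p^{r/c})$. In particular $B_r^{\ell_{\times}}$ is finite, $|B_r^{\ell_{\times}}|\to\infty$, and $\log|B_r^{\ell_{\times}}|=\Theta(r)$.

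For part~(1), the sandwich gives $|B_r^{\ell_{\times}}|^{c/C}=O(p^{r/C})=O(|\Omega_{s_0}|)$ together with $\log|B_r^{\ell_{\times}}|\ge\log|\Omega_{s_0}|$, whence $\tfrac{\log^2|\Omega_{s_0}|}{|\Omega_{s_0}|}=O\!\big(\tfrac{\log^2|B_r^{\ell_{\times}}|}{|B_r^{\ell_{\times}}|^{c/C}}\big)$. Therefore any $\varepsilon_r=\omega\!\big(\tfrac{\log^2|B_r^{\ell_{\times}}|}{|B_r^{\ell_{\times}}|^{c/C}}\big)$ is also $\omega\!\big(\tfrac{\log^2|\Omega_{s_0}|}{|\Omega_{s_0}|}\big)$, so Corollary~\ref{cor:Clozel-a.o.a.c.} yields $\mu(\L\setminus B(\Omega_{s_0},\varepsilon_r))\to0$, and since $\Omega_{s_0}\subseteq B_r^{\ell_{\times}}$ the smaller set $\L\setminus B(B_r^{\ell_{\times}},\varepsilon_r)$ also has vanishing measure. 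When $c=C$ the exponent is $1$ and, because $\log|B_r^{\ell_{\times}}|\to\infty$, the polynomial $x\mapsto x^{3}$ works: $\varepsilon_r=\log^3|B_r^{\ell_{\times}}|/|B_r^{\ell_{\times}}|=\omega(\log^2|B_r^{\ell_{\times}}|/|B_r^{\ell_{\times}}|)$, which is precisely the shape required by Definition~\ref{def:aoac}.

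For part~(2), given small $\varepsilon$ let $s$ be the least even integer $\ge2\log_{p}\tfrac1\varepsilon+3\log_{p}\log\tfrac1\varepsilon$; Proposition~\ref{prop:CH-covering} gives every $g\in\L$ an $\varepsilon$-approximation by some $\gamma\in\Omega_s$, and then $\ell_{\times}(\gamma)\le Cs+b\le2C\log_{p}\tfrac1\varepsilon+4C\log_{p}\log\tfrac1\varepsilon$ once $\varepsilon$ is small enough for $C\log_{p}\log\tfrac1\varepsilon$ to swallow the additive constant $2C+b$; hence $\gamma\in B_r^{\ell_{\times}}$ for that $r$, as claimed. I do not expect a real obstacle here: the only point needing care is that matching the exponent $c/C$ in part~(1) forces the sandwich to be carried out with the \emph{sharp} rates $p^{r/C}$ and $p^{r/c}$ (crude bounds would lose the exponent), and that the rounding in the definition of $s_0$ must be handled so that $s_0$ is an even integer tending to infinity, so that Corollary~\ref{cor:Clozel-a.o.a.c.} legitimately applies to $\Omega_{s_0}$.
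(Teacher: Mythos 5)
Your proposal is correct and follows essentially the same route as the paper: sandwich $B_r^{\ell_\times}$ between Clozel--Hecke balls via the two halves of the quasi-isometry (the paper writes this as $B_{(r-b)/C}^{\ell_{CH}}\subseteq B_r^{\ell_\times}\subseteq B_{(r+b)/c}^{\ell_{CH}}$), compare the growth rates $p^{r/C}$ and $p^{r/c}$, and then transfer Corollary~\ref{cor:Clozel-a.o.a.c.} for part (1) and Proposition~\ref{prop:CH-covering} for part (2). Your treatment of the integrality and parity of the index $s_0$ is in fact slightly more careful than the paper's.
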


\begin{proof}
\emph{(1)} Recall $d_{r}\leq\left|\Omega_{r}\right|\leq\mathfrak{M}hd_{r}$
and $d_{r}=(p+1)\cdot p^{r-1}$, hence $p^{r}\leq|\Omega_{r}|\leq2\mathfrak{M}hp^{r}$.
Since $B_{r}^{\ell_{CH}}=\bigcup_{i=0}^{r}\Omega_{r}$, we get $p^{r}\leq\left|B_{r}^{\ell_{CH}}\right|\leq4\mathfrak{M}hp^{r}$,
and since $\ell_{\times}$ is $\left(c,C,b\right)$-q.i. to $\ell_{CH}$,
we get $B_{(r-b)/C}^{\ell_{CH}}\subseteq B_{r}^{\ell_{\times}}\subseteq B_{(r+b)/c}^{\ell_{CH}}$,
hence $p^{(r-b)/C}\leq\left|B_{r}^{\ell_{\times}}\right|\leq4\mathfrak{M}hp^{(r+b)/c}$.
This in particular implies 
\[
\varepsilon_{r}=\omega\left(\frac{\log^{2}|B_{r}^{\ell_{\times}}|}{|B_{r}^{\ell_{\times}}|^{c/C}}\right)=\omega\left(\frac{((r-b)/C)^{2}}{\left(p^{(r+b)/c}\right)^{c/C}}\right)=\omega\left(\frac{r^{2}}{p^{r/C}}\right)=\omega\left(\frac{\log^{2}|\Omega_{(r-b)/C}|}{|\Omega_{(r-b)/C}|}\right),
\]
so we can use Corollary~\ref{cor:Clozel-a.o.a.c.} applied to $\left\{ \Omega_{(r-b)/C}\right\} _{r}$
and $\varepsilon_{r}$ we obtain (via $\Omega_{(r-b)/C}\subseteq B_{r}^{\ell_{\times}}$)
\[
\mu\left(\L\setminus B\left(B_{r}^{\ell_{\times}},\varepsilon{}_{r}\right)\right)\leq\mu\left(\L\setminus B\left(\Omega_{(r-b)/C},\varepsilon{}_{r}\right)\right)\overset{{\scriptscriptstyle r\rightarrow\infty}}{\longrightarrow}0.
\]
\emph{(2)} From Proposition~\ref{prop:CH-covering} we know that
$B\left(\Omega_{r'},\varepsilon\right)=\L$ for $\varepsilon$ small
enough and $r'=2\log_{p}\tfrac{1}{\varepsilon}+3\log_{p}\log\frac{1}{\varepsilon}$.
For small enough $\varepsilon$ we also have $C\log_{p}\log\tfrac{1}{\varepsilon}>b$,
so that $r\geq Cr'+b$, and the claim then follow from $\Omega_{r'}\subseteq B_{r'}^{\ell_{CH}}\subseteq B_{Cr'+b}^{\ell}\subseteq B_{r}^{\ell}$.
\end{proof}
One case where $c=C$ indeed occurs is the level sets of the level
map (\ref{eq:l}): 
\begin{equation}
B_{r}^{\ell}=\left\{ \gamma\in\Gamma\,\middle|\,\ell\left(\gamma\right)\leq r\right\} .\label{eq:level-sets}
\end{equation}

\begin{lem}
\label{lem:level-CH} For any $\gamma\in\Gamma$ we have $\left|\ell\left(\gamma\right)-\ell_{CH}\left(\gamma\right)\right|\leq4h$,
i.e. $\ell$ is $(1,1,4h)$-q.i.~to $\ell_{CH}$.
\end{lem}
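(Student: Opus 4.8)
The plan is to bound $\ell_{CH}(\gamma)$ from both sides in terms of $\ell(\gamma)=\operatorname{dist}(v_{0},\gamma v_{0})$, using that the quotient graph $\Gamma\backslash\mathcal{T}$ is finite of ``diameter-type'' size controlled by $h$. Recall that $\gamma\in\Omega_{r}$ means there is some $i\in\{0,\ldots,h-1\}$ and $w\in S_{r}(v_{i})$ with $\gamma\in\Gamma(w)$, i.e.\ $\gamma v_{i}=w$, so that $\operatorname{dist}(v_{i},\gamma v_{i})=r$. Thus
\[
\ell_{CH}(\gamma)=\min\bigl\{\operatorname{dist}(v_{i},\gamma v_{i})\;\bigm|\;0\le i\le h-1\bigr\}.
\]
In particular, taking $i=i(v_{0})$ (the index with $v_{0}\in\Gamma v_{i(v_0)}$) and noting $v_{0}\in\Gamma v_{i(v_0)}$ means $v_0 = \delta v_{i(v_0)}$ for some $\delta\in\Gamma$, we have $\operatorname{dist}(v_{i(v_0)},\gamma v_{i(v_0)})=\operatorname{dist}(\delta v_{i(v_0)},\delta\gamma\delta^{-1}\cdot\delta v_{i(v_0)}) = \operatorname{dist}(v_0,\delta\gamma\delta^{-1}v_0)$, which is not literally $\ell(\gamma)$ but is $\ell$ of a conjugate; a cleaner route is to directly compare $\ell_{CH}(\gamma)$ with $\operatorname{dist}(v_0,\gamma v_0)$ via the triangle inequality and a uniform bound on $\operatorname{dist}(v_0,v_i)$.

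First I would establish the \emph{uniform bound} $\max_{i}\operatorname{dist}(v_{0},v_{i})\le 2h$ (or some explicit multiple of $h$): the $v_{i}$ are the $L_{\mathcal{T}}$-vertices of a connected fundamental domain, so they all lie within the fundamental domain, whose $L_{\mathcal{T}}$-vertices number $h$; since consecutive $L_{\mathcal{T}}$-vertices in $\mathcal{T}$ are at distance $2$ (as $\mathcal{T}$ is bipartite with $L_{\mathcal{T}}$ one side), a connected subtree containing $h$ of them has $L_{\mathcal{T}}$-diameter at most $2(h-1)\le 2h$. Call this bound $D\le 2h$. Then for the \emph{upper bound} on $\ell_{CH}$: given $\gamma$, pick $\delta\in\Gamma$ with $\delta v_{0}=v_{i_0}$ for the appropriate $i_0=i(v_0)$; actually simpler: I claim $\ell_{CH}(\gamma)\le \ell(\gamma)+2D$. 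Indeed let $w=\gamma v_{0}$; then $w\in\Gamma v_{i(w)}$, and since $v_{i(w)}$ lies in the fundamental domain while $v_{0}\in\Gamma v_{i(v_0)}$, one transports: writing $v_{0}=\eta v_{i(v_0)}$, we get $\operatorname{dist}(v_{i(v_0)},\gamma'v_{i(v_0)})=\operatorname{dist}(v_0,\gamma v_0)=\ell(\gamma)$ where $\gamma'=\eta^{-1}\gamma\eta\in\Gamma(w')$ for $w'=\gamma' v_{i(v_0)}$, so $\ell_{CH}(\gamma')\le\ell(\gamma)$; and $\ell_{CH}$ is a class function up to the spread of the $v_i$, giving $\ell_{CH}(\gamma)\le \ell_{CH}(\gamma')+2D\le\ell(\gamma)+4h$. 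For the \emph{lower bound}: if $\gamma\in\Omega_{r}$, say $\gamma v_{i}=w\in S_{r}(v_{i})$, then $\ell(\gamma)=\operatorname{dist}(v_{0},\gamma v_{0})\le\operatorname{dist}(v_{0},v_{i})+\operatorname{dist}(v_{i},\gamma v_{i})+\operatorname{dist}(\gamma v_{i},\gamma v_{0})=\operatorname{dist}(v_{0},v_i)+r+\operatorname{dist}(v_i,v_0)\le 2D+r$, so $\ell(\gamma)\le \ell_{CH}(\gamma)+4h$. Combining, $|\ell(\gamma)-\ell_{CH}(\gamma)|\le 4h$.

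The main obstacle I anticipate is the bookkeeping in the upper bound, i.e.\ carefully transporting between $\operatorname{dist}(v_{0},\gamma v_{0})$ and $\operatorname{dist}(v_{i},\gamma v_{i})$ for the ``right'' index $i$, since $\ell_{CH}$ takes a minimum over $i$ while $\ell$ is anchored at $v_{0}$. The resolution is that $v_{0}$ itself equals $v_{i}$ for exactly one fundamental-domain index (or is $\Gamma$-equivalent to one), and for that index the conjugation by the transporting element $\eta$ relates the two distances exactly; the constant $2D\le 4h$ then absorbs both the transport within the $\Gamma$-orbit (which is free for distances, being an isometry) and the spread of the $v_i$. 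One should double-check the bipartiteness point—that $\ell$ only takes even values and that the relevant spheres $S_r(v_i)$ with $v_i\in L_{\mathcal T}$ land in $L_{\mathcal T}$ for even $r$—so that all the quantities in play are genuinely distances between $L_{\mathcal{T}}$-vertices, consistent with Proposition~\ref{prop:level_dist}. Finally, one notes $4h$ is not claimed optimal; any explicit linear-in-$h$ bound suffices for the quasi-isometry statement and hence for Proposition~\ref{prop:a.o.a.c-a.c.}(1) to apply with $c=C=1$.
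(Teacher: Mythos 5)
Your overall strategy --- the triangle inequality anchored at $v_0$ together with the bound $\dist(v_0,v_i)\le 2h$ coming from connectedness of the fundamental domain with $h$ left vertices --- is exactly the paper's argument. However, your starting point misreads the definition of $\Omega_r$: membership $\gamma\in\Omega_r(v_i)$ requires $\gamma\in\Gamma(w)$ for some $w\in S_r(v_i)$, and $\Gamma(w)$ is defined via the orbit representative $v_{i(w)}$ of $w$, which need not coincide with the sphere's center $v_i$. The correct identity is therefore
\[
\ell_{CH}(\gamma)=\min_{0\le i,j\le h-1}\dist(v_i,\gamma v_j),
\]
a minimum over \emph{pairs} of indices, not $\min_i\dist(v_i,\gamma v_i)$ as you display. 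Your lower-bound computation as written only treats the diagonal case $j=i$, so it does not cover all the ways $\gamma$ can lie in $\Omega_r$. Fortunately the repair is cosmetic: for the minimizing pair $(i,j)$ one has $\ell(\gamma)=\dist(v_0,\gamma v_0)\le\dist(v_0,v_i)+\dist(v_i,\gamma v_j)+\dist(\gamma v_j,\gamma v_0)\le 2h+\ell_{CH}(\gamma)+2h$, with the same constants; this is precisely the paper's one-line proof, stated there as a reverse triangle inequality that covers both directions at once.

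On the other direction, your argument is far more complicated than it needs to be. In the paper's setup $v_0$ is itself one of the chosen representatives (the vertex with stabilizer $K_\Pi$, indexed $0$), so the pair $(i,j)=(0,0)$ is admissible and $\ell_{CH}(\gamma)\le\dist(v_0,\gamma v_0)=\ell(\gamma)$ with no error term at all. The conjugation by $\eta$ and the appeal to ``$\ell_{CH}$ is a class function up to the spread of the $v_i$'' (which you do not justify, and which is where your write-up is genuinely hand-wavy) can simply be deleted. With these two repairs your proof coincides with the paper's.
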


\begin{proof}
Let $r=\ell_{CH}\left(\gamma\right)$. Then $\gamma\in\Omega_{r}$,
hence $r=\mathrm{dist}\left(\gamma v_{j},v_{i}\right)$, for some
$0\leq i,j<h$. By Proposition \ref{prop:level_dist}, $\ell\left(\gamma\right)=\mathrm{dist}\left(\gamma v_{0},v_{0}\right)$.
Note that $v_{0},v_{i},v_{j}$ belong to the same fundamental domain
of $\Gamma\backslash\mathcal{T}$, which is itself a connected bipartite
graph with $h$ left vertices. It follows that $\mathrm{dist}\left(v_{0},v_{i}\right)\leq2h$
and $\mathrm{dist}\left(\gamma v_{0},\gamma v_{j}\right)\leq2h$,
so that by the triangle inequality we get 
\[
\left|\ell\left(\gamma\right)-\ell_{CH}\left(\gamma\right)\right|=\left|\mathrm{dist}\left(\gamma v_{0},v_{0}\right)-\mathrm{dist}\left(\gamma v_{j},v_{i}\right)\right|\leq\mathrm{dist}\left(\gamma v_{0},\gamma v_{j}\right)+\mathrm{dist}\left(v_{0},v_{i}\right)\leq4h.\qedhere
\]
\end{proof}
From this Lemma together with Proposition~\ref{prop:a.o.a.c-a.c.}
we obtain:
\begin{cor}
\label{cor:level-a.o.a.c.} The sequence $\left\{ B_{r}^{\ell}\right\} _{r}$
of level sets in $\Gamma$ forms an a.o.a.c.\ of $\L$.\hfill{}$\qedsymbol$
\end{cor}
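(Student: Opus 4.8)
The plan is to read this off directly from Lemma \ref{lem:level-CH} and Proposition \ref{prop:a.o.a.c-a.c.}, so that all of the analytic content has already been done. Lemma \ref{lem:level-CH} asserts that the level function $\ell$ is $(1,1,4h)$-quasi-isometric to the Clozel--Hecke length $\ell_{CH}$; in the notation of Proposition \ref{prop:a.o.a.c-a.c.} this is exactly the case $c=C=1$ and $b=4h$.

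First I would apply Proposition \ref{prop:a.o.a.c-a.c.}(1) with $\ell_{\times}=\ell$. Its conclusion is that $\{B_{r}^{\ell}\}_{r}\subset\L$ satisfies $\mu\big(\L\setminus B(B_{r}^{\ell},\varepsilon_{r})\big)\to 0$ whenever $\varepsilon_{r}=\omega\big(\log^{2}|B_{r}^{\ell}|\big/|B_{r}^{\ell}|^{c/C}\big)$; since $c/C=1$ this reads $\varepsilon_{r}=\omega\big(\log^{2}|B_{r}^{\ell}|\big/|B_{r}^{\ell}|\big)$, which is precisely the requirement of Definition \ref{def:aoac} with polynomial $p(x)=x^{2}$ (up to a multiplicative constant, or any $\varepsilon_r$ of this form). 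Together with $|B_{r}^{\ell}|\geq p^{\,r-4h}\to\infty$ (which also comes out of the proof of Proposition \ref{prop:a.o.a.c-a.c.}(1), using $p\geq 2$), this says exactly that $\{B_{r}^{\ell}\}_{r}$ is an a.o.a.c.\ of $\L$ --- i.e.\ the ``in particular'' clause of Proposition \ref{prop:a.o.a.c-a.c.}(1).

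There is no genuine obstacle at this final step: the work has been front-loaded into Lemma \ref{lem:level-CH} (which rests on Proposition \ref{prop:level_dist} and the connectivity and bipartite structure of the finite fundamental domain $\Gamma\backslash\mathcal{T}$, giving the crucial feature $c=C$ rather than merely $c\leq C$) and into the chain Theorem \ref{thm:Hecke-Ramanujan} $\Rightarrow$ Theorem \ref{thm:Clozel-Hecke-Ramanujan} $\Rightarrow$ Corollary \ref{cor:Clozel-a.o.a.c.} $\Rightarrow$ Proposition \ref{prop:a.o.a.c-a.c.}, which ultimately depends on the Ramanujan-type bound and the covering machinery of \cite{Parzanchevski2018SuperGoldenGates}. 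The only point worth a second glance is that the additive constant $b=4h$ is harmless: it enters only through the two-sided comparison $|B_{r}^{\ell}|\asymp p^{\,r}$ up to multiplicative constants depending on $h,\mathfrak{M},b$, so it leaves the $\omega(\cdot)$ threshold --- and in particular the exponent $c/C=1$ --- unchanged, which is exactly why the level sets land in the true a.o.a.c.\ regime and not a weaker one.
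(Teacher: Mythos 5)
Your proposal is correct and is exactly the paper's argument: the corollary is deduced by feeding Lemma \ref{lem:level-CH} (the $(1,1,4h)$-quasi-isometry of $\ell$ with $\ell_{CH}$) into the $c=C$ clause of Proposition \ref{prop:a.o.a.c-a.c.}(1). Your additional remarks on why the additive constant $b=4h$ is harmless are accurate but not needed beyond what the proposition already provides.
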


\subsection{Clifford+D Silver gates\protect\label{sec:Silver-gates}}

We now specialize again to the case of the C+D Clifford gates (Definition
\ref{def:C+T-and-C+D}), so that $\Gamma$ is the $3$-arithmetic
lattice in the projective unitary group associated with $\mathbb{Q}\left(\zeta_{9}\right)$
and the standard Hermitian form. The gates $\left\{ H\right\} \cup\mathcal{D}$
generate the group $\Gamma$ (Theorem \ref{thm:main}), and we are
interested in the covering efficiency of balls with respect to the
word/circuit length function:
\[
\ell_{w}\left(\gamma\right):=\min\left\{ r\,\middle|\,\gamma\in\left(\left\{ H\right\} \cup\mathcal{D}\right)^{r}\right\} .
\]
For the analysis we first introduce another length function on $\Gamma$.
Recall that $\Gamma=C_{0}*_{C_{D}}C_{3}$ by Corollary \ref{cor:Bass-Serre}.
The element $S=\diag\left(1,\zeta_{3},1\right)$ (see Definition \ref{def:C+T-and-C+D})
rotates the three $v_{0}$-clans descending from $v_{3}$, so that
$T_{3}:=\left\{ 1,H,HS,HS^{2}\right\} $ is a transversal for the
right cosets $C_{D}\backslash C_{3}$. The set $T_{0}=\left\{ D_{a,b}:=\diag\left((-\xi)^{a},1,(-\xi)^{b}\right)\,|\,0\leq a,b\leq5\right\} $
forms a transversal for $C_{D}\backslash C_{0}$ (and $D_{0,6},D_{6,0}\in C_{D}$),
and we denote $T'_{j}=T\backslash\{1\}$ for $j=0,3$. By the normal
form theorem of Bass-Serre theory, Corollary \ref{cor:Bass-Serre}
implies that every $\gamma\in\Gamma$ has a unique representation
as
\begin{equation}
\gamma=c_{0}\ldots c_{r}\qquad\left({c_{0}\in C_{0}\atop \forall j\geq1\colon\ c_{2j-1}\in T'_{3},c_{2j}\in T'_{0}}\right).\label{eq:BS-normal}
\end{equation}
We call the $r$ which appears in this representation the \emph{Bass-Serre
length }of $\gamma$, and denote it by $\ell_{BS}(\gamma)$. 
\begin{lem}
\label{lem:BS-wordlen}
\begin{enumerate}
\item For any $\gamma\in\Gamma$ we have 
\[
\ell_{BS}(\gamma)-1\leq\ell_{w}(\gamma)\leq\ell_{BS}(\gamma)+7.
\]
\item The size of $r$-balls in both the word length and Bass-Serre metrics
is $\Theta(\sqrt{105}^{\,r})$.
\end{enumerate}
\end{lem}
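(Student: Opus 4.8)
The plan is to work entirely with the Bass--Serre normal form \eqref{eq:BS-normal} coming from $\Gamma=C_{0}*_{C_{D}}C_{3}$ (Corollary \ref{cor:Bass-Serre}). Recall from Proposition \ref{prop:FD-stabilizers} that $|C_{0}|=1944$, $|C_{3}|=216$, and, via $C_{D}=C_{2}\cap C_{3}$, that $|C_{D}|=54$; hence $|T_{3}|=[C_{3}:C_{D}]=4$ and $|T_{0}|=[C_{0}:C_{D}]=36$, so that $|T'_{3}|=3$ and $|T'_{0}|=35$.

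For the lower bound in (1), I would note that every generator lies in a vertex group: $\mathcal{D}\subseteq C_{0}$ and $H\in C_{3}$. Appending one such element to a word $c_{0}c_{1}\cdots c_{r}$ in normal form raises the Bass--Serre length by at most one, since in an amalgamated product the reductions never increase the number of syllables and a single vertex-group element adds at most one syllable \cite{serre1980trees}. By induction this gives $\ell_{BS}(\gamma)\le\ell_{w}(\gamma)$, and a fortiori $\ell_{BS}(\gamma)-1\le\ell_{w}(\gamma)$.

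The upper bound in (1) is the main point, and I would prove it by rewriting $\gamma=c_{0}c_{1}\cdots c_{r}$ economically, using two observations. First, $C_{0}=\mathcal{M}_{3}\ltimes\mathcal{D}$ by \eqref{eq:C-M3-D}, and by \eqref{eq:M3-fromCD} each of the six permutation matrices is a word of length $\le 5$ in $H$ and $W=\diag(1,\xi^{3},\xi^{6})\in\mathcal{D}$ (the sign $-I$ being trivial projectively), so every element of $C_{0}$ -- in particular $c_{0}$ -- is a word of length $\le 6$ in $\{H\}\cup\mathcal{D}$. Second, $T'_{3}\subseteq\{H,HS,HS^{2}\}$ with $S=\diag(1,\zeta_{3},1)\in\mathcal{D}$, while $T'_{0}$ consists of diagonal matrices $D_{a,b}=\diag((-\xi)^{a},1,(-\xi)^{b})$; consequently each consecutive pair $c_{2j-1}c_{2j}$ -- being one of $H,HS,HS^{2}$ times a diagonal $D_{a,b}$ -- equals $H$ times a diagonal element of $\mathcal{D}$, hence a word of length $\le 2$, and when $r$ is odd the trailing $c_{r}\in T'_{3}$ is itself a word of length $\le 2$. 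Grouping $\gamma=c_{0}\,(c_{1}c_{2})\,(c_{3}c_{4})\cdots$ then gives $\ell_{w}(\gamma)\le 6+r$ for $r$ even and $\ell_{w}(\gamma)\le 6+(r-1)+2=r+7$ for $r$ odd, which is $\le\ell_{BS}(\gamma)+7$ in all cases.

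For (2), I would simply count normal forms: the number of $\gamma$ with $\ell_{BS}(\gamma)=s$ is $|C_{0}|\cdot|T'_{3}|^{\lceil s/2\rceil}|T'_{0}|^{\lfloor s/2\rfloor}$, i.e.\ $1944\cdot 105^{k}$ when $s=2k$ and $3\cdot 1944\cdot 105^{k}$ when $s=2k+1$, since $|T'_{3}|\,|T'_{0}|=3\cdot 35=105$. Summing over $s\le r$ yields $|B_{r}^{\ell_{BS}}|=\Theta(\sqrt{105}^{\,r})$, and then part (1) gives $B_{r-7}^{\ell_{BS}}\subseteq B_{r}^{\ell_{w}}\subseteq B_{r+1}^{\ell_{BS}}$, so $|B_{r}^{\ell_{w}}|=\Theta(\sqrt{105}^{\,r})$ as well. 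The only real obstacle is the bookkeeping in the upper bound of (1); its essential content is that merging two consecutive non-trivial coset representatives funnels all the diagonal ambiguity into a single $\mathcal{D}$-gate, so that the cost stays at roughly one generator per Bass--Serre syllable on top of the bounded contribution of $c_{0}$.
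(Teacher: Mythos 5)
Your proof is correct and, for the upper bound in (1) and for part (2), it follows essentially the same route as the paper: write $c_{0}\in C_{0}=\mathcal{M}_{3}\ltimes\mathcal{D}$ as a word of length $\le 6$ via \eqref{eq:M3-fromCD}, absorb each pair $c_{2j-1}c_{2j}\in HS^{m}\mathcal{D}=H\mathcal{D}$ into two gates, and count normal forms using $|T'_{3}||T'_{0}|=3\cdot35=105$. The one place you diverge is the lower bound: the paper first normalizes a shortest word to the shape $H^{\alpha}d_{1}Hd_{2}\cdots d_{t}H^{\beta}$ (using that consecutive $\mathcal{D}$-letters merge and that $H^{2}$ normalizes $\mathcal{D}$) and then converts to normal form, obtaining $\ell_{BS}\le\ell_{w}+1$; you instead invoke the general subadditivity of syllable length under right multiplication by a vertex-group element, which is a clean standard argument and even yields the marginally stronger $\ell_{BS}(\gamma)\le\ell_{w}(\gamma)$. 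Both routes are valid, and neither affects the asymptotics in part (2).
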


\begin{proof}
\emph{(1)} In a shortest presentation of $\gamma$ as a word in $\left\{ H\right\} \cup\mathcal{D}$,
there can be no two consecutive elements from $\mathcal{D}$. In addition,
$H^{2}\in S_{3}\leq N_{\Gamma}(\mathcal{D})$, so that any appearance
of $H^{2}$ can be moved to the beginning of the word. Thus we can
assume that $\gamma=H^{\alpha}d_{1}Hd_{2}H\ldots d_{t}H^{\beta}$
with $0\leq\alpha\leq3$ and $0\leq\beta\leq1$, and $\ell_{w}(\gamma)=2t-1+\alpha+\beta$.
Since $\left\langle H\right\rangle \leq C_{3}$ and $\mathcal{D}\leq C_{0}$,
this gives a presentation of $\gamma$ as an element in $C_{3}C_{0}C_{3}C_{0}\ldots$
of length $2t-1+\beta$. Such a word can then be brought to its Bass-Serre
normal form (\ref{eq:BS-normal}); this process does not lengthen
the word, but we might need an extra letter in order to begin in $C_{0}$
(see \cite{serre1980trees}), so that $\ell_{BS}(\gamma)\leq2t+\beta\leq\ell_{w}(\gamma)+1$.
On the other, let $\gamma=c_{0}\ldots c_{r}$ be as in (\ref{eq:BS-normal}).
Then $c_{0}\in C_{0}=S_{3}\mathcal{D}$, and (\ref{eq:M3-fromCD})
(note $\left\langle W\right\rangle \leq\mathcal{D}$) implies that
$c_{0}\in(\{H\}\cup D)^{6}$. For every $j\geq1$, we have $c_{2j-1}c_{2j}\in HS^{m}\mathcal{D}=H\mathcal{D}$
for some $0\leq m\leq2$, and in total we obtain that $\ell_{w}(\gamma)\leq6+2\left\lceil r/2\right\rceil \leq r+7=\ell_{BS}(\gamma)+7$.

\emph{(2)} By the uniqueness of Bass-Serre normal form, for every
$r\geq0$ there are $|C_{0}||T_{3}'|^{\left\lceil r/2\right\rceil }\left|T'_{0}\right|^{\left\lfloor r/2\right\rfloor }=\Theta(\sqrt{105}^{\,r})$
words of Bass-Serre length $r$. This implies $\left|B_{r}^{\ell_{BS}}\right|=\Theta(\sqrt{105}^{\,r})$
as well, and $\left|B_{r}^{\ell_{w}}\right|=\Theta\left(\left|B_{r}^{\ell_{BS}}\right|\right)$
follows from (1).
\end{proof}

Next, we compare the Bass-Serre/word length with the Clozel-Hecke
length.
\begin{lem}
\label{lem:level-H} For any $\gamma\in\Gamma$ we have 
\[
\tfrac{1}{3}\ell_{CH}(\gamma)-4\leq\ell_{w}(\gamma)\leq\ell_{CH}(\gamma)+16.
\]
\end{lem}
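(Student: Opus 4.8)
The plan is to bound $\ell_{CH}$ in terms of $\ell_{BS}$ (equivalently $\ell_w$, by Lemma~\ref{lem:BS-wordlen}(1)) by relating both to distances in the tree $\mathcal{T}$, and then invoke Lemma~\ref{lem:BS-wordlen}(1) to transfer the estimate to $\ell_w$. The key observation is that $\ell_{CH}(\gamma) = \mathrm{dist}(\gamma v_{i(j)}, v_i)$ for suitable fundamental-domain vertices, and since here the class number is $h=1$ (the orbit $\Gamma v_0$ is a single orbit, and by Proposition~\ref{prop:FD} the path $D$ is a fundamental domain with a single left vertex $v_0$), we actually have $\ell_{CH}(\gamma) = \mathrm{dist}(\gamma v_0, v_0) = \ell(\gamma)$ exactly, by Lemma~\ref{lem:level-CH} (with $h=1$, the bound $4h$ reads $4$, but in fact one can check it is $0$ here since $v_0$ is the unique left vertex in the fundamental domain). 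So it suffices to compare $\ell(\gamma) = \mathrm{dist}(\gamma v_0, v_0)$ with $\ell_{BS}(\gamma)$.

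First I would establish the upper bound $\ell(\gamma) \le 3\,\ell_{BS}(\gamma) + \text{const}$. Writing $\gamma = c_0 c_1 \cdots c_r$ in Bass--Serre normal form (\ref{eq:BS-normal}), I would track the image $\gamma v_0$ step by step: $c_0 \in C_0$ fixes $v_0$; each subsequent syllable $c_{2j-1} \in T_3'$ moves a current vertex by at most $\mathrm{dist}(v_0, v_3)$-type jumps along the tree, but more carefully, since $H \in C_3$ and $\ell(H)=6$, each pair $c_{2j-1}c_{2j}$ (which lies in $H\mathcal D$ up to the $C_0$-part) translates $v_0$ by distance at most $6$ in $\mathcal{T}$. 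Hence $\ell(\gamma) = \mathrm{dist}(\gamma v_0, v_0) \le 6\lceil r/2\rceil \le 3r + 3$; combined with $\ell_{BS}(\gamma) \le \ell_w(\gamma)+1$ this gives the desired $\ell_{CH}(\gamma) \le 3\ell_w(\gamma) + \text{const}$, which is better than the claimed factor — so actually I expect the clean statement to come out as $\tfrac13\ell_{CH}(\gamma) - \text{const} \le \ell_w(\gamma)$, matching the left inequality. For the other direction, $\ell_w(\gamma) \le \ell_{CH}(\gamma) + 16$: here I would go backwards — given $\gamma$ with $\ell(\gamma) = \mathrm{dist}(\gamma v_0,v_0) = n$, the proof of Theorem~\ref{thm:orbits}(3) and Theorem~\ref{thm:main}(2) produces a word in $\{H\}\cup\mathcal D$ for $\gamma$ whose length is controlled: each application of the distance-reduction step $\gamma \mapsto \gamma c H$ (with $c \in \mathcal D$) decreases $\mathrm{dist}(\cdot\, v_0, v_0)$ by at least $2$, costs $2$ letters, and since $\ell(H)=6$ means $n$ is a multiple of $6$ with $n/6$ "coarse" steps but $n/2$ reduction steps, one gets roughly $\ell_w(\gamma) \le n + O(1) = \ell_{CH}(\gamma) + O(1)$, where the additive constant absorbs the final $c_{r+1} \in C \subseteq (\{H\}\cup\mathcal D)^6$ and a bit of slack.

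The main obstacle I anticipate is bookkeeping the additive constants precisely enough to land on exactly $-4$ and $+16$: the relation $\ell_{CH} = \ell$ needs the $h=1$ fact nailed down (via Proposition~\ref{prop:FD}), and the reduction-algorithm side (upper bound on $\ell_w$) requires carefully counting how the $C$-part $c_{r+1}$ and the alternating $H$'s contribute, using (\ref{eq:C-M3-D}), (\ref{eq:M3-fromCD}) and the fact that $c_{r+1}$ costs at most $6$ letters. I would structure the proof as: (i) note $\ell_{CH}(\gamma) = \ell(\gamma) = \mathrm{dist}(\gamma v_0, v_0)$; (ii) prove $\ell_w(\gamma) \le \ell(\gamma) + 16$ by running the synthesis algorithm of Theorem~\ref{thm:main}(2) and counting letters, each distance-$2$ reduction costing one $H$ and one $\mathcal D$-letter, plus $\le 6$ for the terminal $C$-element; (iii) prove $\ell(\gamma) \le 3\ell_w(\gamma) + \text{const}$ by observing a single generator ($H$ or a $\mathcal D$-element) moves $v_0$ by distance $\le 6$ resp.\ $0$ in $\mathcal{T}$, so a length-$m$ word moves $v_0$ by $\le 6\lceil m/2\rceil \le 3m+3$, giving $\ell_{CH}(\gamma) = \ell(\gamma) \le 3\ell_w(\gamma) + 3 \le 3\ell_w(\gamma) + 12$, i.e.\ $\tfrac13\ell_{CH}(\gamma) - 4 \le \ell_w(\gamma)$.
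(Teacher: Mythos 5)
Your overall strategy---comparing $\ell_{CH}$ with the level $\ell(\gamma)=\mathrm{dist}(\gamma v_0,v_0)$, bounding $\ell$ against $\ell_{BS}$ via the action on the tree, and running the synthesis algorithm of Theorem \ref{thm:main}(2) for the upper bound---is essentially the paper's. But your step (i) contains a genuine error: the class number here is $h=2$, not $1$. The fundamental domain is the path $v_0-v_1-v_2-v_3$ of Proposition \ref{prop:FD}, and since $\mathcal{T}$ is bipartite with $v_0\in L_{\mathcal{T}}$, \emph{both} $v_0$ and $v_2$ lie in $L_{\mathcal{T}}$; they are in distinct $\Gamma$-orbits because distinct vertices of $\Gamma v_0$ are at distance $\geq6$ (Corollary \ref{cor:dist>=00003D6}) while $\mathrm{dist}(v_0,v_2)=2$. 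Consequently $\Omega_r$ involves spheres around $v_2$ as well as $v_0$, and $\ell_{CH}(\gamma)$ is a minimum over distances $\mathrm{dist}(\gamma v_j,v_i)$ with $v_i,v_j\in\{v_0,v_2\}$; it need not equal $\ell(\gamma)$, and your parenthetical claim that the discrepancy is $0$ is false. The correct input is Lemma \ref{lem:level-CH} with $h=2$, i.e.\ $\left|\ell(\gamma)-\ell_{CH}(\gamma)\right|\leq8$.

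The rest of your argument survives this correction because the stated constants have slack: the synthesis algorithm gives $\ell_w(\gamma)\leq\ell_{BS}(\gamma)+7\leq\ell(\gamma)+8\leq\ell_{CH}(\gamma)+16$, and the normal-form estimate gives $\ell(\gamma)\leq3(\ell_{BS}(\gamma)+1)$, whence $\ell_{CH}(\gamma)\leq\ell(\gamma)+8\leq3\ell_w(\gamma)+O(1)$, which is enough for the left inequality. One further caution on your step (iii): the assertion that ``a length-$m$ word moves $v_0$ by $\leq6\lceil m/2\rceil$'' is not justified for an arbitrary word in $\{H\}\cup\mathcal{D}$ (a word consisting entirely of $H$'s has no $\mathcal{D}$-letters to pair with, and the naive triangle inequality only yields $6m$). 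You should route this bound through the Bass--Serre normal form, as you do correctly in your second paragraph: there only the $\lceil r/2\rceil$ syllables from $T_3'$ displace $v_0$, each by at most $6$, and then $\ell_{BS}(\gamma)\leq\ell_w(\gamma)+1$ from Lemma \ref{lem:BS-wordlen}(1) transfers the estimate to the word length.
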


\begin{proof}
Note that for any $HS^{m}\in T_{3}'$ and $\gamma\in\Gamma$ we have
\[
\ell\left(\gamma HS^{m}\right)=\dist(\gamma HS^{m}v_{0},v_{0})=\dist(\gamma Hv_{0},v_{0})\leq\dist(\gamma Hv_{0},\gamma v_{0})+\dist(\gamma v_{0},v_{0})=6+\ell\left(\gamma\right),
\]
and for any $c\in C_{0}$ (and in particular for $c\in T_{0}'$) we
have similarly $\ell\left(\gamma c\right)=\ell\left(\gamma\right)$.
By induction, it follows from (\ref{eq:BS-normal}) that $\ell\left(\gamma\right)\leq3\left(\ell_{BS}(\gamma)+1\right)$.
We note that $\overline{G}$ has class number $h=2$, as the vertices
$v_{0},v_{2}$ in (\ref{eq:D-domain}) are representatives for the
orbits of $\Gamma$ on $L_{\mathcal{T}}$, and we obtain by Lemmas
\ref{lem:level-CH} and \ref{lem:BS-wordlen} 
\[
\ell_{CH}(\gamma)\leq\ell(\gamma)+4h\leq3\ell_{BS}(\gamma)+11\leq3\ell_{w}(\gamma)+14.
\]
On the other hand, by the proof of Theorem \ref{thm:main}, for some
$r\leq\ell(\gamma)/2$ there exist $c_{1},c_{2},\ldots,c_{r},c_{r+1}\in C_{0}$
such that $\gamma=c_{r+1}H^{-1}c_{r}^{-1}H^{-1}c_{r-1}^{-1}\ldots H^{-1}c_{1}^{-1}\in C_{0}\left(C_{3}C_{0}\right)^{r}$.
Bringing this to the Bass-Serre normal form of $C_{0}*_{C_{D}}C_{3}$
(i.e.\ (\ref{eq:BS-normal})) can only shorten its length, so that
\[
\ell_{w}(\gamma)\leq\ell_{BS}(\gamma)+7\leq2r+8\leq\ell(\gamma)+8\leq\ell_{CH}(\gamma)+16.\qedhere
\]
\end{proof}
\begin{cor}
\label{cor:H-a.o.a.c.} The sequence $\left\{ B_{r}^{\ell_{w}}\right\} _{r}$
of words of growing lengths in H+D satisfy the following almost-covering
property: 
\[
\mu\left(PU(3)^{3}\setminus B\left(B_{r}^{\ell_{w}},\varepsilon_{r}\right)\right)\rightarrow0,\qquad\mbox{whenever}\qquad\varepsilon_{r}=\omega\left(\frac{\log^{2}|B_{r}^{\ell_{w}}|}{|B_{r}^{\ell_{w}}|^{1/3}}\right).
\]
\end{cor}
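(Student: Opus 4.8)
The plan is to obtain this corollary as an immediate consequence of Proposition~\ref{prop:a.o.a.c-a.c.}(1), once it is fed the comparison between $\ell_w$ and $\ell_{CH}$ recorded in Lemma~\ref{lem:level-H}. First I would fix the setup for the C+D case: here $F=\mathbb{Q}(\sigma)$ has degree $d=3$ over $\mathbb{Q}$, so $\L=\overline{G}(F\otimes_{\mathbb{Z}}\mathbb{R})\cong PU(3)^{3}$; the prime $\Pi=2-\sigma$ is ramified in $E=\mathbb{Q}(\xi)$ with $N_{F/\mathbb{Q}}(\Pi)=3$, so $\mathcal{T}$ is $4$-regular and the standing hypotheses of Section~\ref{sec:Covering-rate} (Iwahori level at the ramified prime $\Pi$) are in force, with class number $h=2$ and vertex stabilizers $C_{0},C_{2}$ as identified in Sections~\ref{sec:Synthesis-and-arithmeticity} and~\ref{subsec:-Thinness-of-Clifford+T}. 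In particular $\ell_w$ is a well-defined function $\Gamma\to\mathbb{N}$ precisely because $\{H\}\cup\mathcal{D}$ generates $\Gamma$ by Theorem~\ref{thm:main}.

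The main step is then just to reinterpret Lemma~\ref{lem:level-H}: the two-sided bound $\tfrac13\ell_{CH}(\gamma)-4\le\ell_w(\gamma)\le\ell_{CH}(\gamma)+16$ says exactly that $\ell_w$ is $(\tfrac13,1,16)$-quasi-isometric to $\ell_{CH}$ in the sense of Section~\ref{sec:Covering-rate} (take the additive constant $b=16$, the larger of $4$ and $16$; this is legitimate since the definition only asks for a single common bound). Applying Proposition~\ref{prop:a.o.a.c-a.c.}(1) with $c=\tfrac13$, $C=1$ gives $c/C=\tfrac13$ and hence
\[
\mu\bigl(\L\setminus B(B_r^{\ell_w},\varepsilon_r)\bigr)\longrightarrow 0\qquad\text{whenever}\qquad\varepsilon_r=\omega\!\left(\frac{\log^2|B_r^{\ell_w}|}{|B_r^{\ell_w}|^{\,c/C}}\right)=\omega\!\left(\frac{\log^2|B_r^{\ell_w}|}{|B_r^{\ell_w}|^{1/3}}\right),
\]
which is the claimed almost-covering property. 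I would note in passing that since $c=\tfrac13\ne 1=C$, Proposition~\ref{prop:a.o.a.c-a.c.}(1) does \emph{not} yield the stronger a.o.a.c. conclusion here; this loss — traceable to the factor of $3$ between $\ell_w$ and $\ell_{CH}$, which in turn reflects $\ell(H)=6$ against the unit step on $\mathcal{T}$ — is exactly what separates these ``silver'' gates from genuine golden gates.

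I do not expect a real obstacle: the substance has all been done upstream. The only points meriting a second look are that the automorphic input behind Proposition~\ref{prop:a.o.a.c-a.c.} — namely Theorem~\ref{thm:Clozel-Hecke-Ramanujan}, which rests on Theorem~\ref{thm:Hecke-Ramanujan} and hence on the Iwahori-spherical Ramanujan bound at the ramified place — genuinely applies to this $\Gamma$, and that the combinatorial constants ($h=2$ and $\mathfrak{M}=\max_i|\Gamma_i|$, which are absorbed into the $\omega$) are finite; both are clear from the explicit description of $\mathcal{T}$, the stabilizers, and the fundamental domain given in the earlier sections. Thus the corollary follows by chaining Lemma~\ref{lem:level-H} into Proposition~\ref{prop:a.o.a.c-a.c.}(1).
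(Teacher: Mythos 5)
Your proposal is correct and is exactly the paper's argument: the paper's proof is the one-line "Follows from Proposition~\ref{prop:a.o.a.c-a.c.} combined with Lemma~\ref{lem:level-H}," and you have simply spelled out the instantiation $(c,C,b)=(\tfrac13,1,16)$ so that $c/C=\tfrac13$ gives the stated exponent. Your remarks on why this falls short of the full a.o.a.c.\ property are accurate and consistent with the paper's ``silver gates'' terminology.
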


\begin{proof}
Follows from Proposition~\ref{prop:a.o.a.c-a.c.} combined with Lemma~\ref{lem:level-H}.
\end{proof}
\begin{rem}
\label{rem:GoldenGates}If the balls w.r.t.\ to the $H$-count had
satisfied the a.o.a.c.\ property, then together with Theorem \ref{thm:main}
which gives a navigation algorithm, and the approximation algorithm
due to Ross--Selinger algorithm \cite{ross2015optimal} (see \cite[Rem.\ 2.9]{Evra2018RamanujancomplexesGolden}),
we would have that the C+D form a super golden gate set for $PU(3)$,
in the terminology of \cite{Parzanchevski2018SuperGoldenGates} and
\cite{Evra2018RamanujancomplexesGolden}, as $H$ and the elements
in $\mathcal{D}$ are all of finite order.
\end{rem}

We also get the following covering result in the form of the Solovay-Kitaev
Theorem, with an optimal exponent (see the introduction).
\begin{thm}
\label{thm:H-covering}For any $K>\log_{3}(105)$, for any small enough
$\varepsilon$ every $g\in PU(3)^{3}$ has an $\varepsilon$-approximation
by a word in the C+D gate set of length $<K\log_{\rho(H\cup\{\mathcal{D}\})}\left(\tfrac{1}{\varepsilon}\right)$.
\end{thm}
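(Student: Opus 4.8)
The plan is to derive Theorem \ref{thm:H-covering} as a direct consequence of the covering estimate in Proposition \ref{prop:a.o.a.c-a.c.}(2), combined with the quasi-isometry between the word length $\ell_{w}$ and the Clozel--Hecke length $\ell_{CH}$ from Lemma \ref{lem:level-H}, together with an identification of the growth rate $\rho = \rho(\{H\}\cup\mathcal{D})$. First I would recall from Lemma \ref{lem:BS-wordlen}(2) that $|B_{r}^{\ell_{w}}| = \Theta(\sqrt{105}^{\,r})$, which immediately gives $\rho = \sqrt{105}$, so that $\log_{\rho}\tfrac{1}{\varepsilon} = \tfrac{2}{\log 105}\log\tfrac{1}{\varepsilon} = 2\log_{105}\tfrac{1}{\varepsilon}$ and $K\log_{\rho}\tfrac{1}{\varepsilon} = \tfrac{2K}{\log 105}\log\tfrac{1}{\varepsilon}$. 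Here the relevant tree is the $4$-regular Bruhat--Tits tree of $\overline{G}_{\Pi}$ with $p = N_{F/\mathbb{Q}}(\Pi) = 3$, so in Proposition \ref{prop:a.o.a.c-a.c.}(2) one has $\log_{p} = \log_{3}$.

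Next I would apply Proposition \ref{prop:a.o.a.c-a.c.}(2) to the length function $\ell_{w}$, which by Lemma \ref{lem:level-H} is $(c,C,b)$-quasi-isometric to $\ell_{CH}$ for some admissible constants. The delicate point is that Lemma \ref{lem:level-H} does not give $c = C = 1$: the lower bound $\tfrac{1}{3}\ell_{CH}(\gamma) - 4 \le \ell_{w}(\gamma)$ and upper bound $\ell_{w}(\gamma) \le \ell_{CH}(\gamma) + 16$ correspond to $C = 1$ on the side that matters for covering, so that Proposition \ref{prop:a.o.a.c-a.c.}(2) yields that every $g$ has an $\varepsilon$-approximation in $B_{r}^{\ell_{w}}$ for $r = 2\log_{3}\tfrac{1}{\varepsilon} + 4\log_{3}\log\tfrac{1}{\varepsilon}$ (with $C = 1$, $p = 3$). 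It then remains to compare this $r$ with $K\log_{\rho}\tfrac{1}{\varepsilon} = \tfrac{2K}{\log 105}\log\tfrac{1}{\varepsilon}$. Since $K > \log_{3}(105)$ means $\tfrac{2K}{\log 105} > \tfrac{2}{\log 3} = 2/\log 3$, we have $K\log_{\rho}\tfrac{1}{\varepsilon} > 2\log_{3}\tfrac{1}{\varepsilon}$ with a strictly larger leading coefficient, so the lower-order term $4\log_{3}\log\tfrac{1}{\varepsilon}$ is absorbed: for $\varepsilon$ small enough, $2\log_{3}\tfrac{1}{\varepsilon} + 4\log_{3}\log\tfrac{1}{\varepsilon} < K\log_{\rho}\tfrac{1}{\varepsilon}$. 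Hence $g$ has an $\varepsilon$-approximation by a word of length less than $K\log_{\rho}\tfrac{1}{\varepsilon}$.

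The only real bookkeeping is tracking which direction of the quasi-isometry controls the covering conclusion: Proposition \ref{prop:a.o.a.c-a.c.}(2) is stated with a factor $2C$, and one must check that the constant $C$ appearing there is the one multiplying $\ell_{CH}$ in the upper bound of Lemma \ref{lem:level-H}, i.e.\ $C = 1$ (the worse constant $c = 1/3$ only affects the almost-covering exponent in part (1), not the universal covering bound in part (2)). I expect this to be the main — and essentially only — obstacle, since once $C = 1$ is pinned down the rest is the elementary inequality $2\log_{3}\tfrac{1}{\varepsilon} + O(\log\log\tfrac{1}{\varepsilon}) < K\log_{\rho}\tfrac{1}{\varepsilon}$ for $K > \log_{3}(105) = \log 105/\log 3$ and $\rho = \sqrt{105}$. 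I would close by remarking that the factor $\log_{3}(105) \approx 4.236$ is exactly $2$ (the Golden-gate value, cf.\ Definition \ref{def:aoac}) multiplied by $\log_{3}\sqrt{105} = \tfrac{1}{2}\log_{3}105$, reflecting the discrepancy between the growth rate $\sqrt{105}$ of the gate set and the branching number $3$ of the tree along which the navigation algorithm of Theorem \ref{thm:main} moves.
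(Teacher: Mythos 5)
Your proposal is correct and follows essentially the same route as the paper's proof: identify $\rho=\sqrt{105}$ from Lemma \ref{lem:BS-wordlen}(2), apply Proposition \ref{prop:a.o.a.c-a.c.}(2) with $p=3$ and the constant $C=1$ coming from the upper bound $\ell_{w}\leq\ell_{CH}+16$ of Lemma \ref{lem:level-H}, and check that $K>\log_{3}(105)$ makes $K\log_{\sqrt{105}}\tfrac{1}{\varepsilon}$ eventually dominate $2\log_{3}\tfrac{1}{\varepsilon}+4\log_{3}\log\tfrac{1}{\varepsilon}$. Your identification of $C=1$ (rather than $c=1/3$) as the constant relevant to the universal covering bound is exactly the point the paper's argument relies on.
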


\begin{proof}
From Lemma \ref{lem:BS-wordlen} we have $\rho(H\cup\{\mathcal{D}\})=\sqrt{105}$,
and for $\varepsilon$ small enough we have $K\log_{\sqrt{105}}\tfrac{1}{\varepsilon}\geq2\log_{3}\tfrac{1}{\varepsilon}+4\log_{3}\log\tfrac{1}{\varepsilon}$,
so this follows from Proposition \ref{prop:a.o.a.c-a.c.}(2) combined
with Lemma~\ref{lem:level-H}. 
\end{proof}
\bibliographystyle{plain}
\bibliography{mybib}

\begin{flushleft}
\noun{Einstein Institute of Mathematics, Hebrew University of Jerusalem,
Israel.}\texttt{}~\\
Email: \texttt{shai.evra@mail.huji.ac.il, ori.parzan@mail.huji.ac.il.}
\par\end{flushleft}
\end{document}